\DeclareMathOperator{\sinc}{sinc}
\newtheorem{theorem}{Theorem}
\newtheorem{lemma}{Lemma}
\newtheorem{corollary}{Corollary}
\newenvironment{relemma}[1]
  {\begin{lemma}}
  {\end{lemma}}
\newenvironment{recorollary}[1]
  {\begin{corollary}}
  {\end{corollary}}
\DeclareMathOperator*{\Var}{Var}
\DeclareMathOperator\Vol{Vol}
\gdef\omegaEig{\lambda}
\begin{document}

\title{Classical feature map surrogates and metrics for quantum control landscapes}

\author{Martino~Calzavara}
\email[]{m.calzavara@fz-juelich.de}
\affiliation{Peter Grünberg Institute - Quantum Control (PGI-8), Forschungszentrum Jülich GmbH, Wilhelm-Johnen-Straße, 52428 Jülich, Germany}
\affiliation{Institute for Theoretical Physics, University of Cologne, Zülpicher Straße 77, 50937 Cologne, Germany}

\author{Tommaso~Calarco}
\affiliation{Peter Grünberg Institute - Quantum Control (PGI-8), Forschungszentrum Jülich GmbH, Wilhelm-Johnen-Straße, 52428 Jülich, Germany}
\affiliation{Institute for Theoretical Physics, University of Cologne, Zülpicher Straße 77, 50937 Cologne, Germany}
\affiliation{Dipartimento di Fisica e Astronomia, Università di Bologna, 40127 Bologna, Italy}

\author{Felix~Motzoi}
\email[]{f.motzoi@fz-juelich.de}
\affiliation{Peter Grünberg Institute - Quantum Control (PGI-8), Forschungszentrum Jülich GmbH, Wilhelm-Johnen-Straße, 52428 Jülich, Germany}
\affiliation{Institute for Theoretical Physics, University of Cologne, Zülpicher Straße 77, 50937 Cologne, Germany}

\date{\today}

\begin{abstract}
We derive and analyze three feature map representations of parametrized quantum dynamics, which generalize variational quantum circuits. These are (i) a Lie-Fourier partial sum, (ii) a Taylor expansion, and (iii) a finite-dimensional sinc kernel regression representation.  The Lie-Fourier representation is shown to have a dense spectrum with discrete peaks, that reflects control Hamiltonian properties, but that is also compressible in typically found symmetric systems. We prove boundedness in the spectrum and the cost function derivatives, and discrete symmetries of the coefficients, with implications for learning and simulation. We further show the landscape is Lipschitz continuous, linking global variation bounds to local Taylor approximation error—key for step size selection, convergence estimates, and stopping criteria in optimization. This also provides a new form of polynomial barren plateaux originating from the Lie-Fourier structure of the quantum dynamics. These results may find application in local and general surrogate model learning, which we benchmark numerically, in characterizations of hardness in the problem instances, and for meta-parameter heuristics in quantum optimizers.
\end{abstract}

\maketitle

\section{Introduction}\label{introduction}
Many functional processes in the natural and engineering sciences suffer from high complexity or limited observability. As a result, they are often treated and interrogated as \emph{black box oracles} -- systems with unknown structure or landscape. This challenge concerns a wide range of disparate fields, including condensed matter, control theory, complexity theory, machine learning, and quantum optimization. The respective approaches to modeling the landscapes reveal different but complementary methods with large overlaps and synergies. 

A central question across these domains is how to characterize the prescriptive performance of the system, that is, how well it performs under a given configuration, and, by extension, how accurately a (partial) model can predict or optimize that performance. In condensed matter physics, average behaviour is analysed to uncover landscape features such as phase transitions and the prevalence of metastable states (or traps). In contrast, complexity theory focuses on worst case performance, often emphasizing the difficulty of sampling in landscapes plagued by barren plateaus that hinder the search for global optima. Machine learning aims to approximate and reproduce the functional behavior of the black-box system, with a focus on model expressivity. Conversely, control theory seeks to directly optimize system outputs, raising questions about controllability in relation to high-dimensional input spaces.

In this work, we examine this multifaceted problem through the lens of quantum optimization and optimal control theory. Many classical concepts, such as phase transitions, barren plateaus, and landscape complexity, have quantum analogues, and corresponding methodologies have been adapted to the quantum setting. For instance, phase transitions are known to impact optimization hardness, and have been studied in quantum optimal control (QOC) \cite{Bukov18_2qb, Bukov19, Dalgaard22, Beato24}. Similarly, insights from classical machine learning have shaped our understanding of variational quantum algorithms (VQAs) and quantum machine learning (QML), often highlighting negative results due to barren plateaus \cite{McClean18, Holmes22, Larocca22}. The presence of traps and other obstructions has also been investigated in the QOC literature \cite{Rabitz04, Pechen11, DeFouquieres13, Zhdanov18}, alongside various metrics to quantify the hardness of the optimization landscape \cite{Larocca18, Shen06, Hsieh09, Nanduri13, Dalgaard22}.  Efforts to learn and predict the quantum cost landscape have employed techniques ranging from Fourier analysis \cite{Koczor22QAD, Landman22, Schreiber23, Fontana22, Rudolph23, Fontana23} to Gaussian process \cite{Kokail19, Sauvage20, Dalgaard22} and neural network \cite{Dalgaard22} Ans{\"a}tze.

Rather than relying on physically inspired Ans{\"a}tze, our approach derives first-principles constraints that characterize landscape structure and its surrogates under general assumptions. We analyze the quantum functional mapping controls to figure of merit (e.g. fidelity) using three representations: (i) a Fourier basis expansion, (ii) a Taylor expansion with respect to the controls, and (iii) a low-dimensional bandwidth-limited kernel. Notably, we show that VQA and QML landscapes can be viewed as special cases of the general QOC landscape, where controls act sequentially and time-ordering becomes trivial.

We derive upper bounds on gradient and higher-order derivatives, bounds relating control variations to cost changes, and bounds on the variance of the cost landscape, which plays a critical role in the emergence of barren plateaus. These derivative bounds enable a quantitative analysis of the Taylor expansion error, establishing its local efficiency and offering a global Lipschitz bound that can guide optimizer design, including stopping conditions and global sampling strategies. Conversely, our variance analysis indicates that, even under ideal conditions, the fraction of the landscape with meaningful variation shrinks rapidly with increasing problem dimensionality.

Finally, we compare the representational power of the three expansions. We find that the kernel representation is the most expressive in the limit of dense sampling, while the Fourier and Taylor representations are more effective in low-data regimes. Our results provide quantitative dependencies of landscape complexity on key parameters such as control operator spectral bandwidth, total evolution time (or circuit depth), and system size -- offering practical insights for algorithm design and parameter tuning.

The manuscript is organized as follows: in Sec.~\ref{technical_intro} we define the notion of quantum dynamical landscape and introduce the three representations that we will use to study the problem. These are later investigated in depth within dedicated sections: Sec.~\ref{lie_fourier} for the Fourier-, Sec.~\ref{taylor} for the Taylor- and Sec.~\ref{kernel} for the sinc representation. In Sec.~\ref{derivatives} we prove \textit{en passant} some properties of the landscape derivatives (using the results from Sec.~\ref{lie_fourier}). We then highlight the relevance of all previous results in the context of landscape optimization and of optimizer design respectively in Sec.~\ref{metrics} and Sec.~\ref{applications}. Finally, we draw our conclusions in Sec.~\ref{conclusions}. The Appendices are primarily devoted to the mathematical proofs of the main results (especially App.~\ref{appendix_c} and \ref{appendix_d}), and to related details like notation (App.~\ref{appendix_a}), problem standardization (App.~\ref{appendix_b}) and numerical examples (App.~\ref{appendix_e}).

\section{Problem setting and overview}\label{technical_intro}
The starting point of our discussion is a finite $D$-dimensional evolving quantum system, with states $\ket{\psi} \in \mathcal{H} \simeq \mathbb{C}^D$. To consider the general case for parametric unitary operations and extend the simple structure of layered gates, we model the circuit evolution through its generator given by a Hamiltonian $\hat{H}(t)$ that depends on time through a single bounded external control $u(t) \in [-u_{\mathrm{max}},u_{\mathrm{max}}]$. To perform time ordering, we discretize the control to be piecewise-constant on an $N$-timestep grid (with uniform step $\delta t = T/N$) from $t=0$ to $t=T$. Note that while $\delta t$ can be made arbitrarily small to retain precision, in quantum circuits one usually considers gates generated on large single timesteps, while in control theory the sampling rate of the control sets a practical upper bound on $\delta t$ \cite{motzoi2011optimal}. 

We consider the standard bilinear form of the controls given by 
\[ \hat{H}[u(t)] = \hat{H}_d + \hat{H}_c u(t)\]
which we write in vector form (see App.~\ref{appendix_a} for more details about notation) as
\[ \hat{H}(u_{\nu}) = \hat{H}_d + \hat{H}_c u_{\nu},\]
where $\boldsymbol{u} \in \mathcal{C}^N := [-u_{\mathrm{max}},u_{\mathrm{max}}]^N$, defining a region of interest as a hypercube. 
In order to obtain a minimally cumbersome treatment, all results in the main text are discussed within this single control setting. Nevertheless, most of them are proved for multiple controls in App.~\ref{appendix_c}, yielding the results for a single control as a special case.

Starting from the initial state $\ket{\psi(0)}$, the system will evolve to 
\[ \ket{\psi(T)} = \hat{U}(\boldsymbol{u})\ket{\psi(0)} \]
at time $T$, where the time-ordering is given straightforwardly by
\begin{equation}
    \hat{U}(\boldsymbol{u}) = \hat{U}(u_N) \cdots \hat{U}(u_1) = e^{-i\delta t \hat{H}(u_{N})} \cdots e^{-i\delta t \hat{H}(u_{1})}.
    \label{eq:timeordering}
\end{equation}
This unitary operator defines a generalized Parametrized Quantum Circuit (PQC), where the drift and control Hamiltonian $\hat{H}_d$ and $\hat{H}_c$ that generate the gates can in general act at the same time. This is in contrast with the typical VQA setting, where usually it is assumed that only one generator is acting within the gates defining each circuit layer.

\begin{figure*}[]
    \centering
    \includegraphics[width=\textwidth]{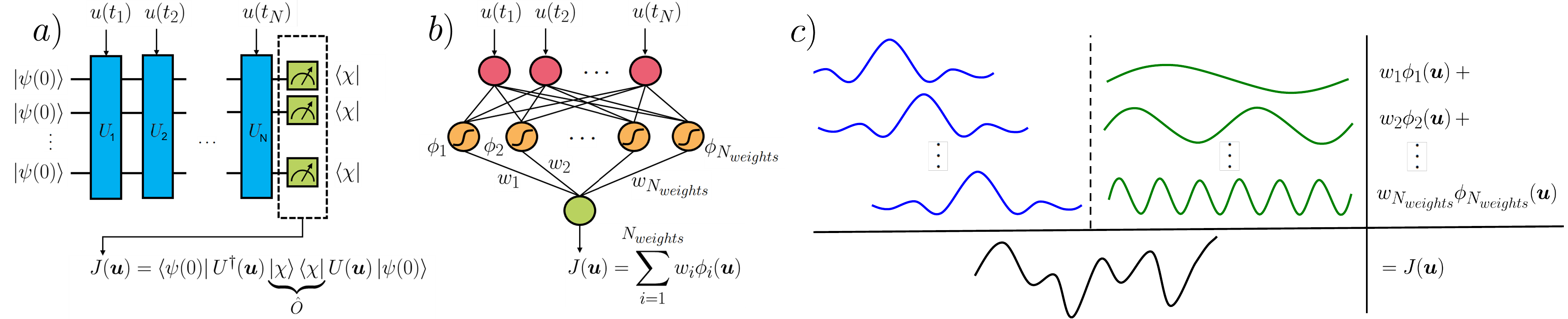}
    \caption{The time-evolved expectation value $\langle \hat{O}(\boldsymbol{u})\rangle$ of an observable (e.g. state fidelity $\hat{O} = \ket{\chi} \bra{\chi}$) for a system controlled through stepwise-constant pulses $\boldsymbol{u}$ is a Parametrized Quantum Circuit (Panel $(a)$). As such, it can be represented as a shallow computational network (Panel $(b)$), which consists in a linear combination of non-linear functions $\phi_i(\boldsymbol{u})$ called ``features", with weights $w_i$ (Panel $(c)$).}
    \label{fig:cl_sur}
\end{figure*}

We define a quantum dynamical landscape $J(\boldsymbol{u})$ as the expectation value of a Hermitian operator $\hat{O}$ over the output of the circuit $\ket{\psi(T)}$:
\begin{equation} 
J(\boldsymbol{u}) = \bra{\psi(T)}\hat{O}\ket{\psi(T)} = \bra{\psi(0)}\hat{U}^{\dagger}(\boldsymbol{u})\hat{O}\hat{U}(\boldsymbol{u})\ket{\psi(0)}. \label{eq:fid}
\end{equation}
In the VQA setting, $\hat{O}$ is typically expressed as a sum of observables that can be readily measured in a typical experimental setup (e.g. Pauli strings) \cite{Kokail19}. 
The case of optimal control for a state transfer problem corresponds instead to choosing $\ket{\psi} = \ket{\psi(0)}$ as initial state and $\hat{O} = \ket{\chi}\bra{\chi}$ as the density matrix of the (pure) target state. In this case, $J(\boldsymbol{u})$ corresponds to the state fidelity that we want to maximize over the controls $\boldsymbol{u}$. 
The average gate fidelity $F_{\mathrm{avg}}$ with respect to a gate $\hat{U}_{\mathrm{target}}$ can also be written as a sum of such landscapes. In fact, given a sample of states $\{\ket{\psi_i}\in \mathcal{H}\}_{i=1}^{N_s}$, we have
\begin{multline*}
F_{\mathrm{avg}} := \frac{1}{N_s}\sum_{i=1}^{N_s} |\bra{\psi_i}\hat{U}^{\dagger}_{target}\hat{U}(\boldsymbol{u})\ket{\psi_i}|^2 \\
=\frac{1}{N_s}\sum_{i=1}^{N_s} \bra{\psi_i}\hat{U}^{\dagger}(\boldsymbol{u})\hat{O}_i\hat{U}(\boldsymbol{u})\ket{\psi_i} = \frac{1}{N_s}\sum_{i=1}^{N_s} J_i(\boldsymbol{u}),
\end{multline*} 
where we defined the observables 
\[\hat{O}_i = \hat{U}_{\mathrm{target}} \ket{\psi_i}\bra{\psi_i}\hat{U}_{\mathrm{target}}^{\dagger}.\]
In the main text we will focus on the case of a state fidelity landscape, but we prove most of our results for a generic observable $\hat{O}$ in App.~\ref{appendix_c}.

The main aim of this paper is to study how to represent a quantum dynamical landscape as a linear combination of non-linear functions of the form
\begin{equation} 
J(\boldsymbol{u}) = \smashoperator[lr]{\sum_{i=1}^{N_{\mathrm{weights}}}} w_i \phi_i(\boldsymbol{u}),
\label{eq:general_model}
\end{equation}
where $i$ indexes the basis, specified depending on the choice of the functions $\phi_i(\boldsymbol{u})$.

In the machine learning context these are usually referred to as features, hence the name ``feature map representation". See Fig.~\ref{fig:cl_sur} for a graphical representation of this class of models. Compared to deep neural networks, for these shallow models it is easier to relate their structure both to the physical content of the system generating them and to the properties of their outputs. They can therefore be thought of as intermediate representations (or classical surrogates) of the quantum system, making them an attractive tool to analyze and even compute quantum cost landscapes \cite{Schuld21, Landman22, Schreiber23, Rudolph23, wiedmann24}.

A very natural and well studied choice for the feature map is the so-called Fourier representation, given by  
\[ \phi_{\boldsymbol{\omega}}(\boldsymbol{u}) = e^{-i \boldsymbol{\omega} \cdot \boldsymbol{u}}\]
where the weighting over $\boldsymbol{\omega}$ encodes the frequency spectrum of the landscape. We show in Sec.~\ref{lie_fourier} that in general this spectrum is dense, unlike in the VQA and QML setting where only discrete combinations of the $\omega$ appear, corresponding to the eigenspectrum of the Hamiltonian generators. Naturally, this poses a unique challenge for classical surrogate models.

We also prove that the presence of symmetries in the dynamics constrains these frequencies, determining selection rules that we will describe in detail. We provide in Sec.~\ref{numerics} a numerical example for the transverse field Ising model.

Another important representation is the polynomial feature map given by 
\[ \phi_{\boldsymbol{p}}(\boldsymbol{u})=\prod_{\nu=1}^N  ({u}_\nu - {u}^{(0)}_\nu)^{p_{\nu}}\]
with $\boldsymbol{p} \in \mathbb{N}^N$, $\boldsymbol{u}^{(0)}$ a reference pulse, and where the weights $w_{\boldsymbol{p}}$ can be computed from the derivatives by Taylor expansion, as shown in Sec.~\ref{taylor}. We bound these derivatives using the Fourier representation, establishing the Lipschitz continuity of the landscape and thereby providing an estimate of the error in this expansion. This enables a low-order $\mathcal{O}(poly(N))$ truncation under certain conditions relating to constraining the total time and energy of the external fields. 

The third representation is given by bandwidth-limited kernel function 
\[ \phi_{i}(\boldsymbol{u}) = \prod_{\nu=1}^{N} \frac{\sin{[ \omega_{\mathrm{max}}\delta t (u_{\nu} - u^{(i)}_{\nu})}]}{ \omega_{\mathrm{max}}\delta t(u_{\nu} - u^{(i)}_{\nu})},\]
which captures correlations between different sampled data points $\mathcal{D}_{\mathrm{train}} = \{ (\boldsymbol{u}^{(i)}, J(\boldsymbol{u}^{(i)})) \}_{i=1}^{N_{\mathrm{train}}}$. This regression model is derived in Sec.~\ref{kernel} by essentially integrating over the infinite Fourier spectrum, enhancing the tractability of the learning task. We further discuss efficiently training the model via RIDGE regression \cite{Bishop06, Landman22}.

Finally, we conclude by exploring the relevance of our findings in the context of landscape optimization.
In order to quantify the difficulties inherent to this problem, several landscape measures have been proposed in the literature (ruggedness, trap density, gradient variance and barren plateaux, etc.) \cite{Dalgaard22, McClean18, Holmes22}, each one highlighting a different way in which the optimization problem can be considered hard (or not). In Sec.~\ref{metrics}, we relate a selection of these measures to the representations that we have studied, establishing bounds for them whenever possible.
We also investigate the relevance of our findings in the tuning and design of optimization algorithms in Sec.~\ref{applications}.
\section{Lie-Fourier representation}\label{lie_fourier}
In order to derive a Fourier representation of the landscape $J$, we expand each one of the time-step unitaries in Eq.~\eqref{eq:timeordering} by means of the Lie-Trotter product formula~\cite{ReedBarry72, Trotter59, Lloyd96}
\begin{multline}
    \hat{U}(u) = e^{-i\delta t \hat{H}(u)} \\
    = \lim_{n \xrightarrow[]{} \infty} (e^{-\frac{i\delta t}{n}\hat{H}_d} e^{-\frac{i\delta t}{n}u \hat{H}_c })^n = \lim_{n \xrightarrow[]{} \infty} \hat{U}_n(u)
    \label{eq:lie_single}
\end{multline}
which converges to its limit, with an error of order $\mathcal{O}(n^{-1})$ that depends on  $\delta t u_{\mathrm{max}}||\hat{H}||_{\infty}$. 
By studying this expression for a generic $n$ we can find a sequence of approximations $J_n$ of the landscape $J$, whose convergence properties are discussed in detail in App.~\ref{appendix_c}. On the other hand, finite order expansions are also interesting on their own, and for example for $n=1$ we recover an interleaved circuit which is a common Ansatz for VQA \cite{Kokail19}.
More generally, for any $n$ and even for multiple controls (see App.~\ref{appendix_c}), the resulting circuits will be Periodic Structure Ans{\"a}tze, as introduced in \cite{Larocca22}, which consist of stacked copies of a predefined gate sequence.
In such a case, we can then apply a similar procedure to \cite{Schuld21} and obtain an expression of Eq.~\eqref{eq:lie_single} as a possibly infinite sum of complex exponentials. We will refer to the resulting partial Fourier sum as to the Lie-Fourier representation of the landscape. This stresses the dependence of the representation on the usage of the Lie product expansion as a preliminary step, which can be in principle substituted with higher order Suzuki-Trotter products \cite{Suzuki91} or other approximation schemes \cite{Schilling24}, giving rise to different frequencies and coefficients. 
\subsection{Commuting Hamiltonians $[\hat{H}_d,\hat{H}_c]=0$}
Let us first look at a base case of the problem. 
If $[\hat{H}_c,\hat{H}_d]=0$ the Lie-Trotter expansion stops at the first order, and we have
\begin{equation*} 
\hat{U}(\boldsymbol{u}) = e^{-iN\delta t \hat{H}_d} e^{-i\delta t \hat{H}_c \sum_{\nu=1}^N u_{\nu}}.
\end{equation*}
The unitary operator only depends on the ``effective" control $\bar{u} := N^{-1}\sum_{\nu=1}^N u_{\nu}$.
We can write this unitary operator as a finite sum of Fourier components by working in the eigenbasis of $\hat{H}_c = \hat{V}^{\dagger} \hat{\Lambda} \hat{V}$, with $\bra{i}\hat{\Lambda}\ket{j} = \delta_{ij}\lambda_i$:
\begin{eqnarray*}
\hat{U}(\bar{u}) &=& e^{-iN\delta t \hat{H}_d} \hat{V}^{\dagger} e^{-i T \hat{\Lambda} \bar{u}} \hat{V} \\
&=& e^{-iN\delta t \hat{H}_d} \sum_{j=1}^D \hat{V}^{\dagger} \ket{j}e^{-iT \lambda_j \bar{u}} \bra{j}  \hat{V} \\
&=:& \sum_{\omega \in \mathcal{S}} \hat{B}^{\omega}(T)e^{-iT \omega \bar{u}}, \label{eq:commutingU}
\end{eqnarray*}
where we defined the spectrum of the control Hamiltonian $\mathcal{S} = \{\lambda_j \}_{j=1,\dots,D}$. 
We can now write also the transfer fidelity with respect to a target state $\ket{\chi}$ from Eq.~\eqref{eq:fid} as a finite Fourier sum
\begin{multline*} 
J(\bar{u}) = \sum_{\omega, \omega' \in \mathcal{S}} \underbrace{\bra{\psi}  \hat{B}^{\omega \dagger}(T) \ket{\chi}}_{b^*_{\omega}} \underbrace{\bra{\chi} \hat{B}^{\omega'}(T)\ket{\psi}}_{b_{\omega'}} e^{iT(\omega - \omega') \bar{u}} \\=: \sum_{\omega \in \mathcal{S}^{\Delta}} c_{\omega}(T) e^{-i T \omega \bar{u}}, 
\end{multline*}
where we defined the Fourier spectrum of the fidelity as the set of all possible differences between frequencies in the control spectrum
\begin{equation}
\mathcal{S}^{\Delta} = \{ \omega = \omegaEig' - \omegaEig | \omegaEig,\omegaEig' \in \mathcal{S}\} 
\label{eq:spec_fid}
\end{equation}
and the fidelity Fourier coefficients as
\begin{equation*} c_{\omega}(T) = \smashoperator[lr]{\sum_{\omega',\omega'' \in \mathcal{S}}} \delta_{\omega''-\omega', \omega} b^*_{\omega'}(T) b_{\omega''}(T). 
\end{equation*}

\subsection{General case $[\hat{H}_d, \hat{H}_c] \neq 0$}
In general, the expansion in Eq.~\eqref{eq:lie_single} does not stop for any finite $n$,
and we obtain for the cost functional a sequence of partial Fourier sums $J_n(\boldsymbol{u})$ that converges to $J(\boldsymbol{u})$.
Once again, we write the control Hamiltonian in its eigenspectrum $\hat{\Lambda}$ and absorb the change of basis in the terms $\hat{W}$ which do not depend on the control $u$ and are defined implicitly via
\begin{equation*}
\hat{U}_n(u) = \hat{V}^{\dagger}\underbrace{\hat{V}e^{-\frac{i\delta t}{n}\hat{H}_d}\hat{V}^{\dagger}}_{\hat{W}(n^{-1}\delta t)} e^{-\frac{i\delta t}{n}u \hat{\Lambda}}\cdots e^{-\frac{i\delta t}{n}\hat{H}_d}\hat{V}^{\dagger}e^{-\frac{i\delta t}{n}u \hat{\Lambda}} \hat{V}
\end{equation*}
where we multiplied a $\hat{V}^{\dagger}\hat{V} = \hat{I}$ factor on the left. When the dependence on $n$ is not crucial, we omit it to simplify the notation. By expressing the result in terms of the matrix components, we finally obtain the Lie-Fourier representation of the unitary time-step evolution 
\begin{widetext}
\begin{multline}
\bra{i} \hat{U}_{n}(u) \ket{k} = \sum_{j_1 \dots j_{n}l=1}^D e^{-i\delta t\frac{u}{n}(\lambda_{j_1}+\dots+\lambda_{j_n})} V^{\dagger}_{il} W_{l j_{n}}(n^{-1}\delta t) \cdots W_{j_2 j_1}(n^{-1}\delta t) V_{j_1 k} \\
= \sum_{\boldsymbol{j}\in [D]^n} e^{-i\delta t u \omega(\boldsymbol{j})} A^{\boldsymbol{j}}_{ik}(n,\delta t) = \sum_{\omega \in \mathcal{S}_n}  e^{-i\delta t u \omega} B^{\omega}_{ik}(n,\delta t),
\label{eq:timestep_unitary}
\end{multline}
\end{widetext}
where $[D]^n \subset \mathbb{N}^n$ is the set of integer vectors with elements ranging from $1$ to $D$, and the sum over $l$ is absorbed into the definition of the coefficients $A^{\boldsymbol{j}}_{ik}$, since the frequency $\omega$ does not depend on $l$:
\[  A^{\boldsymbol{j}}_{ik}(n,\delta t) = \sum_{l=1}^D V^{\dagger}_{il} W_{l j_{n}}(n^{-1}\delta t) \cdots W_{j_2 j_1}(n^{-1}\delta t) V_{j_1 k}. \]
In the last step all the coefficients $A^{\boldsymbol{j}}_{ik}$ corresponding to the same frequency $\omega=\omega(\boldsymbol{j})$ have been summed up into the matrix with entries $B^{\omega}_{ik}$,
\[ B^{\omega}_{ik}(n,\delta t) = \sum_{\boldsymbol{j}\in \mathcal{J}_{\omega}} A^{\boldsymbol{j}}_{ik}(n,\delta t),\ \mathcal{J}_{\omega} = \{ \boldsymbol{j}\in [D]^n\ |\ \omega(\boldsymbol{j}) = \omega \} \]
which is only defined for values of $\omega$ belonging to the discrete Fourier spectrum $\mathcal{S}_n$ 
\[ \mathcal{S}_n = \{ \omega(\boldsymbol{j}) = \frac{1}{n}(\lambda_{j_1}+\dots+\lambda_{j_n})\  |\  \boldsymbol{j}\in [D]^n \}. \]
Now by layering the $N$ time-step unitaries we can build up the full unitary $\hat{U}_n(\boldsymbol{u})$:
\begin{multline}
\hat{U}_n(\boldsymbol{u}) = \hat{U}_{n}(u_N) \cdots \hat{U}_{n}(u_1) \\
= \sum_{\boldsymbol{j}_{1} \cdots \boldsymbol{j}_{N}} e^{-i\delta t \sum_{ \nu} u_{\nu}\omega(\boldsymbol{j}_{\nu})} \hat{A}^{\boldsymbol{j}_{N}} \cdots \hat{A}^{\boldsymbol{j}_{1}}\\
=  \sum_{{\omega}_{1} \dots {\omega}_{N}}  e^{-i\delta t \sum_{\nu} u_{\nu}\omega_{\nu}} \hat{B}^{{\omega}_{N}}\cdots  \hat{B}^{{\omega}_{1}} \\=  \sum_{\boldsymbol{\omega}\in \mathcal{S}_n^N}  e^{-i\delta t \boldsymbol{\omega} \cdot \boldsymbol{u}} \hat{B}^{\boldsymbol{\omega}}(n,\delta t),
\end{multline}
where we defined the frequency vector $\boldsymbol{\omega}\in \mathcal{S}^N_n$ and the operator $\hat{B}^{\boldsymbol{\omega}}$ as the product
\[
    \hat{B}^{\boldsymbol{\omega}} = \hat{B}^{{\omega}_{N}}\cdots  \hat{B}^{{\omega}_{1}}.
\]
By plugging this expression for the unitary operator back into Eq.~\eqref{eq:fid} we finally find the Lie-Fourier representation of the fidelity:
\begin{multline}
    J_n(\boldsymbol{u}) = \bra{\psi} \hat{U}_n^{\dagger}(\boldsymbol{u})\ket{\chi}\bra{\chi} \hat{U}_n(\boldsymbol{u}) \ket{\psi} \\
    = \sum_{\boldsymbol{\omega'},\boldsymbol{\omega''}\in \mathcal{S}^N_n} \underbrace{\bra{\psi} \hat{B}^{\boldsymbol{\omega''} \dagger} \ket{\chi}}_{b^*_{\boldsymbol{\omega''}}} \underbrace{\bra{\chi}   \hat{B}^{\boldsymbol{\omega'}} \ket{\psi}}_{b_{\boldsymbol{\omega'}}} e^{-i\delta t \boldsymbol{(\omega'-\omega'')} \cdot \boldsymbol{u}} \\ =  \sum_{\boldsymbol{\omega} \in (\mathcal{S}_n^{\Delta})^N} c_{\boldsymbol{\omega}}(n,\delta t)  e^{-i\delta t \boldsymbol{\omega} \cdot \boldsymbol{u}} 
    \label{eq:fid_diff}
\end{multline}
where $(\mathcal{S}_n^{\Delta})^N$ is the set of frequency differences within the Fourier spectrum of the unitary. The coefficients of the expansion $c_{\boldsymbol{\omega}}$ are given by
\begin{equation}
c_{\boldsymbol{\omega}} = \sum_{\boldsymbol{\omega'},\boldsymbol{\omega''}\in \mathcal{S}_n^N} \delta_{\boldsymbol{\omega}, \boldsymbol{\omega'}-\boldsymbol{\omega''}} b_{\boldsymbol{\omega''}}^*  b_{\boldsymbol{\omega'}}
\label{eq:coeff_fid}
\end{equation}
We conclude this derivation by pointing out that there is a useful trick we can use since in the main text we are working with just one control per time step.
In fact, we can equivalently write the fidelity as
\begin{equation}
J_n(\boldsymbol{u}) = \bra{\tilde{\psi}} \hat{\tilde{U}}_n^{\dagger}(\boldsymbol{u})\ket{\tilde{\chi}}\bra{\tilde{\chi}} \hat{\tilde{U}}_n(\boldsymbol{u}) \ket{\tilde{\psi}}
\end{equation}
by transforming the dynamics to a new unitary frame through $\hat{V}$, so that the initial state becomes $\ket{\tilde{\psi}} = \hat{V}\ket{\psi}$ and the propagator $\hat{\tilde{U}} = \hat{V} \hat{U} \hat{V}^{\dagger}$.
In this new frame, the coefficients $\tilde{b}_{\omega}=b_{\omega}$ stay the same (they are scalars), while $ \tilde{A}_{ik}^{\boldsymbol{j}}$ (and consequently $ {\tilde{B}}_{ik}^{\omega}$) take a simpler form  
\[  \tilde{A}^{\boldsymbol{j}}_{ik}(n,\delta t) = W_{i j_{n}}(n^{-1}\delta t) \cdots W_{j_2 j_1}(n^{-1}\delta t) \delta_{j_1 k}. \]

\subsection{Basic properties of the Lie-Fourier representation}
Let us now prove some properties of the Lie-Fourier representation. 
It is easy to see from Eq.~\eqref{eq:spec_fid}-\eqref{eq:coeff_fid} that, compatibly with $J$ and $J_n$ being real functions, both the fidelity spectrum $\mathcal{S}^{\Delta}_n$ and the coefficients $c_{\boldsymbol{\omega}}$ are symmetric in the following way
\begin{align*}
    \omega \in \mathcal{S}^{\Delta}_n \implies& -\omega \in \mathcal{S}^{\Delta}_n, \\
    c_{-\boldsymbol{\omega}} = c^*_{\boldsymbol{\omega}}&
\end{align*}
Another straightforward result is that all the approximating functions $J_n$ are bandwidth limited 
\begin{lemma}[Bandwidth limitation]
    $\mathcal{S}_n \in [\lambda_{\mathrm{min}},\lambda_{\mathrm{max}}]$ with $\lambda_{\mathrm{min}} (\lambda_{\mathrm{max}})$ the minimum (maximum) eigenvalue of $\hat{H}_c$.
    \label{lemma:bandwidth}
\end{lemma}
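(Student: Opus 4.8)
The plan is to prove the inclusion directly from the definition of the spectrum $\mathcal{S}_n$, exploiting the fact that every admissible frequency is an arithmetic mean of eigenvalues of $\hat{H}_c$. Recall that
\[ \mathcal{S}_n = \Big\{ \omega(\boldsymbol{j}) = \tfrac{1}{n}(\lambda_{j_1}+\dots+\lambda_{j_n}) \ \Big|\ \boldsymbol{j}\in [D]^n \Big\}, \]
so that a generic element $\omega(\boldsymbol{j})$ is obtained by averaging $n$ entries drawn (with repetition) from the control spectrum $\mathcal{S} = \{\lambda_1,\dots,\lambda_D\}$. The strategy is thus to show that such an average cannot escape the convex hull of the sampled eigenvalues, which is precisely the interval $[\lambda_{\mathrm{min}},\lambda_{\mathrm{max}}]$.

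First I would observe that, by definition of $\lambda_{\mathrm{min}}$ and $\lambda_{\mathrm{max}}$ as the extremal eigenvalues of $\hat{H}_c$, every index $j_k \in [D]$ satisfies $\lambda_{\mathrm{min}} \le \lambda_{j_k} \le \lambda_{\mathrm{max}}$. Summing these $n$ inequalities termwise and dividing by $n$ gives
\[ \lambda_{\mathrm{min}} = \frac{1}{n}\sum_{k=1}^{n}\lambda_{\mathrm{min}} \le \frac{1}{n}\sum_{k=1}^{n}\lambda_{j_k} = \omega(\boldsymbol{j}) \le \frac{1}{n}\sum_{k=1}^{n}\lambda_{\mathrm{max}} = \lambda_{\mathrm{max}}, \]
valid for every $\boldsymbol{j}\in [D]^n$. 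Hence $\omega(\boldsymbol{j}) \in [\lambda_{\mathrm{min}},\lambda_{\mathrm{max}}]$ for all admissible $\boldsymbol{j}$, and taking the union over all such index vectors yields $\mathcal{S}_n \subseteq [\lambda_{\mathrm{min}},\lambda_{\mathrm{max}}]$, which is the claim.

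There is no genuine obstacle here: the statement is an immediate consequence of the stability of an interval under averaging, and the bound is moreover uniform in $n$. The only point deserving care is purely notational, namely confirming from Eq.~\eqref{eq:timestep_unitary} that the summation index $l$ absorbed into the coefficients $A^{\boldsymbol{j}}_{ik}$ does not enter the frequency, so that $\omega(\boldsymbol{j})$ indeed depends only on the $n$ eigenvalue labels $j_1,\dots,j_n$ being averaged. Granting this, I would close by noting that the bound is tight: choosing $\boldsymbol{j}$ to be the constant vector selecting the argmin (resp.\ argmax) eigenvalue saturates the lower (resp.\ upper) endpoint, so the interval cannot be shrunk. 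Since the estimate is independent of $n$, it is also inherited by the limiting spectrum as $n\to\infty$, a remark that will be convenient when passing to the full landscape $J$.
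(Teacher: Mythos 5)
Your proof is correct and follows essentially the same argument as the paper's: the paper computes $\min_{\boldsymbol{j}}\omega_{\boldsymbol{j}} = \sum_{i=1}^n \min_j \lambda_j/n = \lambda_{\mathrm{min}}$ (and likewise for the max), which is just your termwise averaging bound with the observation of tightness built in. Your additional remarks (tightness via constant index vectors, independence of the frequency on the absorbed index $l$) are consistent with the paper and add nothing that changes the substance.
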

\begin{proof}
    \[
    \forall n\ \ \min_{\boldsymbol{j}\in [D]^n} \omega(\boldsymbol{j}) = \min_{\boldsymbol{j}\in [D]^n} \sum_{i=1}^n\frac{\lambda_{j_i}}{n} = \sum_{i=1}^n \min_{j \in [D]} \frac{\lambda_{j}}{n} = \lambda_{\mathrm{min}}, 
    \]
    and the same is true for the $\max$, which implies $\lambda_{\mathrm{min}} \leq \omega \leq \lambda_{\mathrm{max}}$.
\end{proof}
Moreover, let us consider the subset of frequencies $\mathcal{S}_n' \subset \mathcal{S}_n$ obtained by constraining the choice of eigenvalues to only $\lambda_{\mathrm{min}}, \lambda_{\mathrm{max}}$:
\[ \mathcal{S}_n'=\{ \omega = \frac{m}{n} \lambda_{\mathrm{max}} + \frac{n-m}{n} \lambda_{\mathrm{min}},\ m=0,\dots,n \}.\]
It easy to see that this set is a regular grid over $[\lambda_{\mathrm{min}},\lambda_{\mathrm{max}}]$ with a step of $n^{-1}(\lambda_{\mathrm{max}} - \lambda_{\mathrm{min}})$, which is also larger than the maximum distance between any point in $[\lambda_{\mathrm{min}},\lambda_{\mathrm{max}}]$ and $\mathcal{S}_n'$. But then as $n\to \infty$ the frequencies in $\mathcal{S}_n'$ (and therefore in $\mathcal{S}_n$) will fill up that interval densely.

Similar properties hold also for the fidelity spectrum  $\mathcal{S}^{\Delta}_n \subset [-\omega_{\mathrm{max}},\omega_{\mathrm{max}}]$ where we defined $\omega_{\mathrm{max}} = \lambda_{\mathrm{max}} - \lambda_{\mathrm{min}}$.
Physically, this means that the landscape displays a typical scale $\sim \omega^{-1}_{\mathrm{max}}$ in control space below which there are no new details within a given tolerance. This information is relevant, for instance, in the context of optimization, as can potentially be used to set step sizes, filter out noise below that length scale and avoid oversampling. We will explore this idea more in detail later in Sec.~\ref{metrics}.

To better understand the properties of the Fourier representation in the $n\xrightarrow[]{} \infty$ limit, we first state two results concerning the boundedness of the coefficients:
\begin{lemma}[$L_1$ Boundedness of the coefficients]
    \[ \exists r \in \mathbb{R},\ s.t.\ \   \forall n,\  \sum_{\boldsymbol{\omega} \in (\mathcal{S}^{\Delta}_n)^N} |c_{\boldsymbol{\omega}}(n,\delta t)| \leq r \]
    \label{lemma:coefficients_l1}
\end{lemma}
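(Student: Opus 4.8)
The plan is to bound the $L_1$ norm of the fidelity coefficients by the square of the $L_1$ norm of the amplitude coefficients $b_{\boldsymbol{\omega}}$, and then to control the latter uniformly in $n$ by exploiting that the drift propagator approaches the identity in the Trotter limit. The hard part will be this second bound: a crude use of unitarity turns out to be too lossy, and must be replaced by a resummation into a matrix exponential.

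First I would note that Eq.~\eqref{eq:coeff_fid} expresses $c_{\boldsymbol{\omega}}$ as a discrete autocorrelation of the $b$'s. Applying the triangle inequality and carrying out the sum over $\boldsymbol{\omega}$ first collapses the Kronecker delta (every difference $\boldsymbol{\omega}'-\boldsymbol{\omega}''$ lies in $(\mathcal{S}_n^{\Delta})^N$ by construction), which gives $\sum_{\boldsymbol{\omega}}|c_{\boldsymbol{\omega}}| \le \big(\sum_{\boldsymbol{\omega}\in\mathcal{S}_n^N}|b_{\boldsymbol{\omega}}|\big)^2$. It therefore suffices to prove that $\|b\|_1 := \sum_{\boldsymbol{\omega}\in\mathcal{S}_n^N}|b_{\boldsymbol{\omega}}|$ is bounded by a constant independent of $n$.

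Next I would move to the frame of $\hat{V}$ introduced around Eq.~\eqref{eq:fid_diff}, in which the scalars $b_{\boldsymbol{\omega}}$ are unchanged but each $\tilde{A}^{\boldsymbol{j}}_{ik}$ reduces to the bare product $W_{ij_n}\cdots W_{j_2 j_1}\delta_{j_1 k}$ of entries of the drift propagator $\hat{W}=e^{-i(\delta t/n)\,X}$ with $X=\hat{V}\hat{H}_d\hat{V}^{\dagger}$. Expanding $b_{\boldsymbol{\omega}}=\bra{\chi}\hat{B}^{\omega_N}\cdots\hat{B}^{\omega_1}\ket{\psi}$ into components and recombining the layer-internal sums $\boldsymbol{j}_\nu\in\mathcal{J}_{\omega_\nu}$ with the outer sum over frequencies turns $\sum_{\boldsymbol{\omega}}|b_{\boldsymbol{\omega}}|$ into a fully unconstrained sum over all index tuples $\boldsymbol{j}_1,\dots,\boldsymbol{j}_N\in[D]^n$. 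After the triangle inequality this factorizes across the $N$ layers, so the whole estimate collapses to controlling the single-layer object $\sum_{\boldsymbol{j}\in[D]^n}|\tilde{A}^{\boldsymbol{j}}_{ik}| = (|\hat{W}|^{\,n})_{ik}$, where $|\hat{W}|$ denotes the entrywise-modulus matrix.

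The crux is to bound $|\hat{W}|^{\,n}$ uniformly in $n$, and here the naive estimate $|W_{ik}|\le 1$ is fatal: the rows of a unitary have only unit $\ell_2$ norm, so a row-sum argument yields $(|\hat{W}|^{\,n})_{ik}=\mathcal{O}(D^{n/2})$, which diverges. The resolution is that $\hat{W}\to\hat{I}$ as $n\to\infty$, with off-diagonal entries of order $\delta t/n$. Expanding the exponential term by term gives the entrywise inequality $|W_{ik}| = |(e^{-i(\delta t/n)X})_{ik}| \le (e^{(\delta t/n)|X|})_{ik}$, where $|X|$ is the entrywise modulus of $X$ and the bound follows from $|(X^m)_{ik}|\le(|X|^m)_{ik}$. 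Since entrywise inequalities between nonnegative matrices are preserved under multiplication, $(|\hat{W}|^{\,n})_{ik}\le \big((e^{(\delta t/n)|X|})^n\big)_{ik} = (e^{\delta t|X|})_{ik}$, which is finite and independent of $n$. Assembling the layers with $E := e^{\delta t|X|}$ and using that the modulus vectors stay unit-normalized gives $\|b\|_1 \le \sum_{i_0,\dots,i_N}|\chi_{i_N}|\,E_{i_N i_{N-1}}\cdots E_{i_1 i_0}\,|\psi_{i_0}| \le \|E\|^{N}$, so $r=\|e^{\delta t|X|}\|^{2N}$ works. The main obstacle is precisely this last step: recognizing that the vanishing Trotter off-diagonals must be resummed into $e^{\delta t|X|}$ to escape the spurious exponential-in-$n$ growth produced by the crude unitarity bound.
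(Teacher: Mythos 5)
Your proposal is correct, and its overall skeleton coincides with the paper's proof in App.~\ref{appendix_d}: both pass to the $\hat{V}$-frame, bound $\sum_{\boldsymbol{\omega}}|c_{\boldsymbol{\omega}}| \leq \big(\sum_{\boldsymbol{\omega}}|b_{\boldsymbol{\omega}}|\big)^2$, factorize the estimate over the $N$ time steps, and reduce everything to a uniform-in-$n$ bound on the single-layer sum $\sum_{\boldsymbol{j}\in[D]^n}|\tilde{A}^{\boldsymbol{j}}_{ik}|$. Where you genuinely depart is in that central estimate, which you correctly identify as the crux. The paper bounds each entry by the scalar $|W_{ij}| \leq \delta_{ij} + R(n)$ with $R(n) = \frac{\delta t\|\hat{H}_d\|_{\infty}}{n}e^{\delta t\|\hat{H}_d\|_{\infty}/n}$, expands the $n$-fold product combinatorially (each insertion of $R(n)$ breaks a Kronecker-delta chain and costs a factor $D$), and obtains the sequence $D\big(1+D R(n)\big)^n$, which converges to $De^{D\delta t\|\hat{H}_d\|_{\infty}}$ and therefore has a finite maximum $A$, yielding $r=A^{2N}$. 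You instead keep the matrix structure: from $|(X^m)_{ik}|\leq(|X|^m)_{ik}$ you get $|W_{ik}| \leq \big(e^{(\delta t/n)|X|}\big)_{ik}$ entrywise, and since entrywise domination between nonnegative matrices is preserved under multiplication, $\big(|\hat{W}|^{\,n}\big)_{ik} \leq \big(e^{\delta t|X|}\big)_{ik}$ for every $n$, giving $r = \|e^{\delta t|X|}\|^{2N}$. Your route buys an explicit, manifestly $n$-independent constant with no limiting argument (the paper needs the observation that a convergent positive sequence is bounded), and it is typically tighter, since resumming into $e^{\delta t|X|}$ retains the entrywise structure of the drift instead of paying the worst-case factor $D$ inside the exponent. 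The paper's route is more elementary---only scalar inequalities and binomial counting---and its explicit limit $De^{D\delta t\|\hat{H}_d\|_{\infty}}$ displays transparently the exponential dependence on $D$ that the main text comments on; both constants remain exponential in $N$, consistent with that discussion.
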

\begin{proof}
See App.~\ref{appendix_d}.
\end{proof}
\begin{lemma}[$L_2$ Boundedness of the coefficients]
    \[ \forall n,\  \sum_{\boldsymbol{\omega} \in (\mathcal{S}^{\Delta}_n)^N} |c_{\boldsymbol{\omega}}(n,\delta t)|^2 \leq 1 \]
    \label{lemma:coefficients_l2}
\end{lemma}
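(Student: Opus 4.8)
The plan is to read off the $c_{\boldsymbol{\omega}}$ as the Fourier amplitudes of the finite trigonometric sum $J_n$ and invoke a Parseval identity, reducing the bound to the elementary fact that a fidelity never exceeds one. First I would set $f_n(\boldsymbol{u}) := \bra{\chi}\hat{U}_n(\boldsymbol{u})\ket{\psi} = \sum_{\boldsymbol{\omega}'\in\mathcal{S}_n^N} b_{\boldsymbol{\omega}'}\, e^{-i\delta t\,\boldsymbol{\omega}'\cdot\boldsymbol{u}}$, so that $J_n = f_n^* f_n$. Expanding this product and collecting the terms with a common difference $\boldsymbol{\omega}=\boldsymbol{\omega}'-\boldsymbol{\omega}''$ reproduces Eq.~\eqref{eq:coeff_fid} verbatim, so the $c_{\boldsymbol{\omega}}$ are precisely the coefficients of $J_n$ in the basis of distinct exponentials $\{e^{-i\delta t\,\boldsymbol{\omega}\cdot\boldsymbol{u}}\}_{\boldsymbol{\omega}\in(\mathcal{S}_n^\Delta)^N}$. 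Since $\hat{H}_c$ has only finitely many eigenvalues, $\mathcal{S}_n$ and hence $(\mathcal{S}_n^\Delta)^N$ are finite sets, so $J_n$ is a genuine finite sum and no convergence questions arise.

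Next I would introduce the mean-value functional $\mathcal{M}[g]:=\lim_{L\to\infty}(2L)^{-N}\int_{[-L,L]^N} g(\boldsymbol{u})\,d\boldsymbol{u}$ and establish the orthonormality $\mathcal{M}[e^{-i\delta t\,\boldsymbol{\gamma}\cdot\boldsymbol{u}}]=\delta_{\boldsymbol{\gamma},\boldsymbol{0}}$ for any real $\boldsymbol{\gamma}$, which follows directly from the $\sinc$ decay of the one-dimensional averages (if $\boldsymbol{\gamma}\neq\boldsymbol{0}$ at least one factor tends to zero). Applying $\mathcal{M}$ to $J_n^2 = \sum_{\boldsymbol{\omega},\boldsymbol{\omega}'} c_{\boldsymbol{\omega}}c^*_{\boldsymbol{\omega}'}\,e^{-i\delta t(\boldsymbol{\omega}-\boldsymbol{\omega}')\cdot\boldsymbol{u}}$ and using linearity on the finite sum, every cross term with $\boldsymbol{\omega}\neq\boldsymbol{\omega}'$ averages to zero, leaving the Parseval identity $\mathcal{M}[J_n^2]=\sum_{\boldsymbol{\omega}\in(\mathcal{S}_n^\Delta)^N}|c_{\boldsymbol{\omega}}|^2$ (using that $J_n$ is real, so $|J_n|^2=J_n^2$).

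Finally I would close with the physical bound: since $\hat{U}_n(\boldsymbol{u})$ is unitary and $\ket{\psi},\ket{\chi}$ are normalized, Cauchy--Schwarz gives $0\le J_n(\boldsymbol{u})\le 1$ pointwise, hence $J_n^2\le J_n\le 1$ and $\mathcal{M}[J_n^2]\le 1$, which is the claim; the bound is uniform in $n$ and $\delta t$ because it uses nothing beyond unitarity. The step needing most care is the Parseval identity itself: unlike in Lemma~\ref{lemma:coefficients_l1}, the frequencies in $\mathcal{S}_n$ need not be rationally commensurate, so the exponentials are not jointly periodic and one cannot integrate over a single fundamental cell. The resolution is exactly that $J_n$ is finite, so the cube-averaged orthogonality $\mathcal{M}[e^{-i\delta t\,\boldsymbol{\gamma}\cdot\boldsymbol{u}}]=\delta_{\boldsymbol{\gamma},\boldsymbol{0}}$ suffices without the full Bohr almost-periodic machinery, the only remaining subtlety being that the difference frequencies are genuinely distinct after the grouping---which the Kronecker delta in Eq.~\eqref{eq:coeff_fid} guarantees.
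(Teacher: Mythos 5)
Your proposal is correct and follows essentially the same route as the paper's own proof: your mean-value functional $\mathcal{M}$ is exactly the paper's limit of expectation values over growing hypercubes $\mathcal{C}_{u_{\mathrm{max}}}$ as $u_{\mathrm{max}}\to\infty$, where the sinc factors collapse to Kronecker deltas and yield the same Parseval identity, after which the pointwise bound $0\le J_n\le 1$ (the paper's $|J_n|\le\|\hat{O}\|_{\infty}$ specialized to $\hat{O}=\ket{\chi}\bra{\chi}$) closes the argument. The only differences are cosmetic: the paper works with multiple controls and a generic observable, and passes the bound through the limit of finite-cube averages rather than stating it directly for $\mathcal{M}$.
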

\begin{proof}
See App.~\ref{appendix_c} for a proof valid for multiple controls and a generic observable $\hat{O}$.
\end{proof}
Due to the exponential behaviour of $r$ with respect to the dimensionality of the Hilbert space $D$ (which is itself exponential in the number of subsystems) and the number of timesteps $N$, the $L_1$ bound is unlikely to be helpful in numerical predictions in general, given the results of this analysis. To obtain useful results concerning quantities that are linear in the coefficients (such as the landscape derivatives in Sec.~\ref{derivatives}), we will instead directly invoke the boundedness of $J$. Conversely, the $L_2$ bound is stronger because it is constant and, as we will see in Sec.~\ref{metrics}, it is related to the landscape variance, as it is quadratic in the coefficients.

These bounds also suggest that the coefficients are well behaved in the large $n$ limit. Following this intuition, we can show that the coefficients with frequencies corresponding to the eigenvalues of the control Hamiltonian ---that is, matching the trivial commuting case \eqref{eq:spec_fid}--- contain in general a finite contribution, so we will refer to them as ``resonant".
Note that these will correspond to resonances in the landscape, i.e.~generalized Rabi oscillations in the system output, which in general differ from the resonances in the underlying physical system because of the presence of the drift Hamiltonian.

If we fix $\boldsymbol{j}=(p,\dots,p)$ then clearly $\omega(\boldsymbol{j}) = \lambda_p$ and the corresponding contribution to the propagator coefficient
$B^{\lambda_p}_{ik}$ is therefore 
\begin{eqnarray*} 
A^{(p,\dots,p)}_{ik}(n,\delta t)&=&\lim_{n \xrightarrow[]{} \infty} \sum_{l=1}^d V^{\dagger}_{il} W_{l p} \cdots W_{p p} V_{p k} \\ &=&  V^{\dagger}_{ip} \lim_{n \xrightarrow[]{} \infty} (W_{p p})^n V_{p k},
\end{eqnarray*}
where the only term in the sum which is non-zero in the limit is $l=p$. In fact, all the other terms contain an off-diagonal element of $\hat{W}$, which
is vanishingly small for large $n$ (similarly to App.~\ref{appendix_d}). Moreover, we have the first-order expansion
\[ W_{ij} = [\hat{V} e^{-\frac{i\delta t}{n}\hat{H}_d} \hat{V}^{\dagger}]_{ij} =  \delta_{ij} - i \frac{\delta t}{n} [\hat{V} \hat{H}_d \hat{V}^{\dagger}]_{ij} + o\left( \frac{\delta t}{n} \right). \]
Thus, we can exploit the continuity of the logarithm to write
\begin{multline*}
\lim_{n \xrightarrow[]{} \infty} (W_{p p}(n^{-1}\delta t))^n = \exp \lim_{n \xrightarrow[]{} \infty} n \log (W_{p p}(n^{-1}\delta t)) \\ = \exp \lim_{n \xrightarrow[]{} \infty} n (- i \frac{\delta t}{n} [\hat{V} \hat{H}_d \hat{V}^{\dagger}]_{pp} + o\left( \frac{\delta t}{n} \right))\\
= e^{ - i \delta t [\hat{V} \hat{H}_d \hat{V}^{\dagger}]_{pp}},
\end{multline*} 
which gives us, as anticipated, a finite result for a resonant landscape frequency. 
Indeed, in the next section we will see, using numerics, that the fidelity coefficients corresponding to differences in eigenvalues are typically finite, unless discrete symmetries decide otherwise, while all the other coefficients converge to 0 as $n \to \infty$, often creating a mostly continuous spectrum in that limit.
We conclude by stating a result concerning the role of dynamical symmetries, which determine which frequencies can be non-zero:
\begin{lemma}[Symmetries and selection rules]
Let $\hat{\Gamma}$ be a symmetry of the system, that is $[\hat{\Gamma},\hat{H}_c]=[\hat{\Gamma},\hat{H}_d]=0$. Let $\ket{\gamma_{g(i)},\lambda_i}$ be the simultaneous eigenstates of $\hat{\Gamma}$ and $\hat{H}_c$ (the respective eigenvalues being $\gamma_{g}$ and $\lambda_i$).
Let $\mathcal{I}_g$ be the set of indices belonging to the $g$-th symmetry sector \[ \mathcal{I}_g = \{ i\in\{1,\dots,D\}\ |\ g(i)=g\}, \]
so that the projector $\hat{P}_g$ onto that sector is
\[ \hat{P}_g = \sum_{i \in \mathcal{I}_g} \ket{\gamma_{g},\lambda_i} \bra{\gamma_{g},\lambda_i},\] 
and let $\mathcal{G}$ be the set of sector indices with corresponding non-zero overlap on both the initial and target states
\[  \mathcal{G} = \{ g\in\{1,\dots,G\}\ |\ \hat{P}_g\ket{\psi} \neq 0 \wedge  \hat{P}_g\ket{\chi} \neq 0 \}. \]
Then, the coefficients $b_{\boldsymbol{\omega}}$ of the cost functional are zero unless $\boldsymbol{\omega} \in \ce{^{(\Gamma,N)}\mathcal{S}_n}$, with
\[\ce{^{(\Gamma,N)}\mathcal{S}_n} = \bigcup_{g \in \mathcal{G}} (\mathcal{S}^{(g)}_n)^N,\]
and $\mathcal{S}^{(g)}_n$ the set of frequencies built from the $g$-th symmetry sector
\[ \mathcal{S}^{(g)}_n = \{ \omega(\boldsymbol{j}) = \frac{1}{n}(\lambda_{j_1}+\dots+\lambda_{j_n})\  |\  \boldsymbol{j}\in \mathcal{I}_g^n \}. \]
The coefficients $c_{\boldsymbol{\omega}}$ are zero unless $\boldsymbol{\omega} \in \ce{^{(\Gamma,N)}\mathcal{S}^\Delta_n}$, with
\[ \ce{^{(\Gamma,N)}\mathcal{S}^\Delta_n} = \{ \boldsymbol{\omega} = \boldsymbol{\omega'} - \boldsymbol{\omega''} | \boldsymbol{\omega'},\boldsymbol{\omega''} \in \ce{^{(\Gamma,N)}\mathcal{S}_n} \} .\]
\label{lemma:symmetries}
\end{lemma}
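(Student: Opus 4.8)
The plan is to exploit that a symmetry commuting with both generators makes the entire dynamics block-diagonal across the symmetry sectors, and to track this block structure down to the coefficients $b_{\boldsymbol{\omega}}$ and $c_{\boldsymbol{\omega}}$. The clean frame for this is the $\hat{V}$-rotated (``tilde'') frame introduced just above, since $[\hat{\Gamma},\hat{H}_c]=0$ lets us choose $\hat{V}$ to diagonalize $\hat{H}_c$ and $\hat{\Gamma}$ simultaneously; the states $\hat{V}^{\dagger}\ket{i}=\ket{\gamma_{g(i)},\lambda_i}$ are then exactly the simultaneous eigenstates, and each eigen-index $i$ carries a well-defined sector label $g(i)$. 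I would first establish the single enabling fact: the transformed drift propagator $W=\hat{V}e^{-i\delta t\hat{H}_d/n}\hat{V}^{\dagger}$ is block-diagonal, $W_{ij}=\bra{\gamma_{g(i)},\lambda_i}e^{-i\delta t\hat{H}_d/n}\ket{\gamma_{g(j)},\lambda_j}=0$ whenever $g(i)\neq g(j)$. This is immediate from $[\hat{\Gamma},\hat{H}_d]=0$, which prevents $e^{-i\delta t\hat{H}_d/n}$ from connecting states with distinct $\hat{\Gamma}$-eigenvalues.

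Next I would propagate this structure through the coefficients in the tilde frame. Since $\tilde{A}^{\boldsymbol{j}}_{ik}=W_{ij_n}\cdots W_{j_2 j_1}\delta_{j_1 k}$ is a single chain of $W$-factors, block-diagonality forces $i$, $k$ and all intermediate indices $j_1,\dots,j_n$ into one common sector $g$ for a nonzero contribution; hence $\omega(\boldsymbol{j})=n^{-1}\sum_\mu\lambda_{j_\mu}$ is assembled purely from that sector's eigenvalues, i.e.\ $\omega\in\mathcal{S}^{(g)}_n$, and $\tilde{B}^{\omega}_{ik}=\sum_{\boldsymbol{j}\in\mathcal{J}_\omega}\tilde{A}^{\boldsymbol{j}}_{ik}$ inherits both restrictions. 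As $\hat{\tilde{B}}^{\boldsymbol{\omega'}}=\hat{\tilde{B}}^{\omega'_N}\cdots\hat{\tilde{B}}^{\omega'_1}$ is again a product of sector-block-diagonal matrices, its entry $(\hat{\tilde{B}}^{\boldsymbol{\omega'}})_{ab}$ pins $a$, $b$ and every intermediate index to a single sector $g$ and requires $\omega'_\nu\in\mathcal{S}^{(g)}_n$ for all $\nu$, i.e.\ $\boldsymbol{\omega'}\in(\mathcal{S}^{(g)}_n)^N$.

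I would then contract with the states. Writing $b_{\boldsymbol{\omega'}}=\bra{\tilde{\chi}}\hat{\tilde{B}}^{\boldsymbol{\omega'}}\ket{\tilde{\psi}}=\sum_{a,b}\langle\tilde{\chi}|a\rangle\,(\hat{\tilde{B}}^{\boldsymbol{\omega'}})_{ab}\,\langle b|\tilde{\psi}\rangle$ and using $\langle i|\tilde{\psi}\rangle=\langle\gamma_{g(i)},\lambda_i|\psi\rangle$ (and the analogue for $\chi$), a nonzero $b_{\boldsymbol{\omega'}}$ requires a sector $g$ with $\boldsymbol{\omega'}\in(\mathcal{S}^{(g)}_n)^N$ for which simultaneously $\hat{P}_g\ket{\psi}\neq0$ and $\hat{P}_g\ket{\chi}\neq0$, that is $g\in\mathcal{G}$. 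This yields exactly $b_{\boldsymbol{\omega'}}\neq0\Rightarrow\boldsymbol{\omega'}\in\bigcup_{g\in\mathcal{G}}(\mathcal{S}^{(g)}_n)^N=\ce{^{(\Gamma,N)}\mathcal{S}_n}$. The selection rule for $c_{\boldsymbol{\omega}}$ follows at once from $c_{\boldsymbol{\omega}}=\sum_{\boldsymbol{\omega'},\boldsymbol{\omega''}}\delta_{\boldsymbol{\omega},\boldsymbol{\omega'}-\boldsymbol{\omega''}}\,b^*_{\boldsymbol{\omega''}}b_{\boldsymbol{\omega'}}$: every surviving term needs both $\boldsymbol{\omega'},\boldsymbol{\omega''}\in\ce{^{(\Gamma,N)}\mathcal{S}_n}$, so $\boldsymbol{\omega}=\boldsymbol{\omega'}-\boldsymbol{\omega''}\in\ce{^{(\Gamma,N)}\mathcal{S}^\Delta_n}$ by definition.

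The step I expect to demand the most care is the simultaneous-diagonalization bookkeeping rather than any hard estimate: when $\hat{H}_c$ is degenerate one must verify that $\hat{V}$ can be chosen to resolve $\hat{\Gamma}$ inside each degenerate eigenspace (guaranteed by $[\hat{\Gamma},\hat{H}_c]=0$) so that the label $g(i)$ is consistently assigned, after which the rest is tracking which index-chains survive the block structure. I would also remark that the whole argument is exact at each finite $n$, because $W$ is block-diagonal for every $n$, so the selection rules carry over unchanged to the $n\to\infty$ limit.
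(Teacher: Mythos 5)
Your proposal is correct and follows essentially the same route as the paper's proof in App.~\ref{appendix_d}: simultaneous diagonalization of $\hat{H}_c$ and $\hat{\Gamma}$ via $\hat{V}$, block-diagonality of $\hat{W}$ from $[\hat{\Gamma},\hat{H}_d]=0$, propagation of the block structure through the chains $\tilde{A}^{\boldsymbol{j}}$ and their products, contraction with $\ket{\tilde\psi},\ket{\tilde\chi}$ to enforce $g\in\mathcal{G}$, and the difference rule for $c_{\boldsymbol{\omega}}$. Your handling of the degenerate-$\hat{H}_c$ bookkeeping and of the fact that a single frequency need not select a unique sector (only the index pair does) is consistent with the paper's own caveat about $\tilde{B}^{\omega}_{ik}$.
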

\begin{proof}
See App.~\ref{appendix_d}.
\end{proof}
In essence, the lemma establishes that the eigenvalues that can appear in the Lie-Fourier representation of the cost landscape can only originate from the same symmetry sector at every time step, where the relevant sectors must contain a projection of the initial and final states.

This result relies on the decomposition of the dynamics into invariant subspaces, which causes the unitaries to have a block diagonal matrix representation. Another known consequence is that the dynamical Lie algebra of the generators factorizes, simplifying the dynamics and putting constraints on the appearance of barren plateaux and on classical non-computability of the dynamics \cite{kazi24, Larocca22}. In light of this, and given the relation between Fourier spectrum and landscape variance that we will explore in Sec.~\ref{metrics}, we interpret the frequency selection rules as one of the manifestations of this algebraic phenomenon in the context of the Fourier representation of the landscape. This provides an alternative viewpoint on the subject and informs the further development of classical simulation algorithms \cite{Rudolph23}.

\subsection{Numerical examples}\label{numerics}
We will now pick a specific system and see how the theory detailed earlier in this section comes into play.
The system under consideration is the one dimensional transverse field Ising model
\begin{equation}
    \hat{H}[u(t)] = \sum_{i=1}^Q \alpha_d( \hat{\sigma}^{(i)}_z\hat{\sigma}^{(i+1)}_z + h_z \hat{\sigma}^{(i)}_z ) + u(t)\hat{\sigma}^{(i)}_x
    \label{eq:ising}
\end{equation}
where the Hilbert space $\mathcal{H}$, with $D:=\dim{\mathcal{H}}=2^Q$ is the product of $Q$ qubits and $\hat{\sigma}^{(i)}_a$ is the Pauli operator for qubit $i$.
We use the transverse field along the $x$ direction as control, while the constant $\alpha_d$ plays the role of drift Hamiltonian strength. The parameter $h_z$ can be used to turn on the longitudinal field, and we fix it to zero unless specified otherwise.
We assume periodic boundary conditions, so that as far as the qubit indices are concerned $Q+1 \equiv 1$.
The Ising model is a paradigmatic example of quantum dynamics \cite{Kormos17}, and its state transfer fidelity landscape has already been object of study \cite{Bukov19, Dalgaard22}. Moreover, circuit Ans{\"a}tze featuring the generators in Eq.~\eqref{eq:ising} (or variations thereof) are often used within variational algorithms such as QAOA \cite{farhi14} and their dynamical symmetries have been studied in depth $\cite{Larocca22, kazi24, DAlessandro21}$.

\subsubsection{Ising model spectrum}
As we will see now, the Ising model also gives rise to a remarkably simple Lie-Fourier representation. In fact, as we see from the eigenstates of $\hat{H}_c$, namely
\begin{equation}
     \sum_{i=1} ^Q \hat{\sigma}^{(i)}_x \ket{-^{b_1} \cdots -^{b_Q}} = \sum_{i=1}^Q (-1)^{b_i} \ket{-^{b_1} \cdots -^{b_Q}},
     \label{eq:diag_Hc}
\end{equation}
where we adopt the convention $\ket{-^0}=\ket{+}, \ket{-^1}=\ket{-}$,
the spectrum $\mathcal{S}$ of the control Hamiltonian 
\[ \mathcal{S}=\{-Q, -Q+2, \dots ,Q-2, Q \} \]
is highly degenerate (the number of distinct eigenvalues increases linearly instead of exponentially in $Q$, as in the non-degenerate scenario). In turn this gives rise to the following Fourier spectra for the single-timestep unitary and fidelity
\begin{align*}
    \mathcal{S}_n &= \{-Q, -Q+\frac{2}{n}, \dots, Q-\frac{2}{n}, Q \}, \\
    \mathcal{S}^{\Delta}_n &= \{-2Q, -2Q+\frac{2}{n}, \dots, 2Q-\frac{2}{n}, 2Q \} \\
    &= \{ \omega = \omega_{\mathrm{max}}\frac{k}{k_{\mathrm{max}}}\ |\ k = -k_{\mathrm{max}}, \dots, k_{\mathrm{max}} \}
\end{align*}
with $k_{\mathrm{max}}=Qn$, so $\omega_{\mathrm{max}}= 2Q$, $\#\mathcal{S}_n = k_{\mathrm{max}} + 1$, $\#\mathcal{S}^{\Delta}_n = 2k_{\mathrm{max}}+1 =: n_{\Delta}$. We notice that for this model all the frequencies in $\mathcal{S}^{\Delta}_n$ are equally spaced, which allows us to compute numerically the $c_{\boldsymbol{\omega}}$ coefficients by means of the Discrete Fourier Transform (DFT) (refer to App.~\ref{appendix_e} for details).

By applying this numerical algorithm, we study the Lie-Fourier representation of several state transfer problems for the Ising model. More specifically, we consider the transfer between $\ket{{0}_Q}\mapsto \ket{{1}_Q}$ (also for non-zero longitudinal field $h_z \neq 0$), between eigenstates of the control Hamiltonian $\ket{{+}_Q}\mapsto \ket{{-}_Q}$ and between random states $\ket{r} \mapsto \ket{r'}$ \footnote{The random states $\ket{r},\ket{r'}$ are obtained by sampling independently real and imaginary part of each coefficient in the computational basis, using a uniform distribution. Multiplication by an overall factor ensures correct normalization.}. In Fig.~\ref{fig:DFT} we can see plotted the coefficients of the Lie-Fourier expansion for $N=1,2$ for each case.
\begin{figure} 
    \centering
    \includegraphics[width=0.48\textwidth]{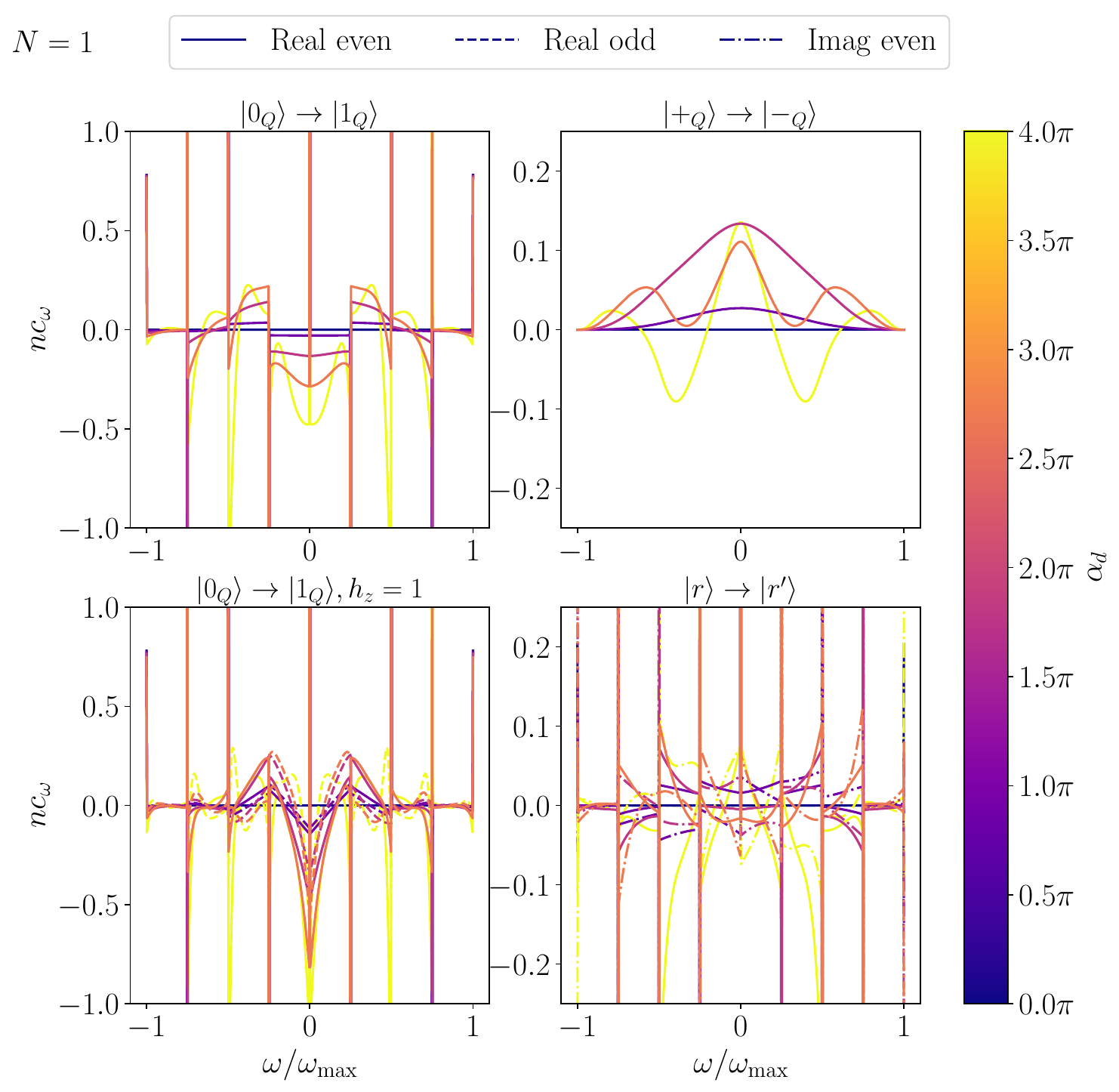}
    \includegraphics[width=0.48\textwidth]{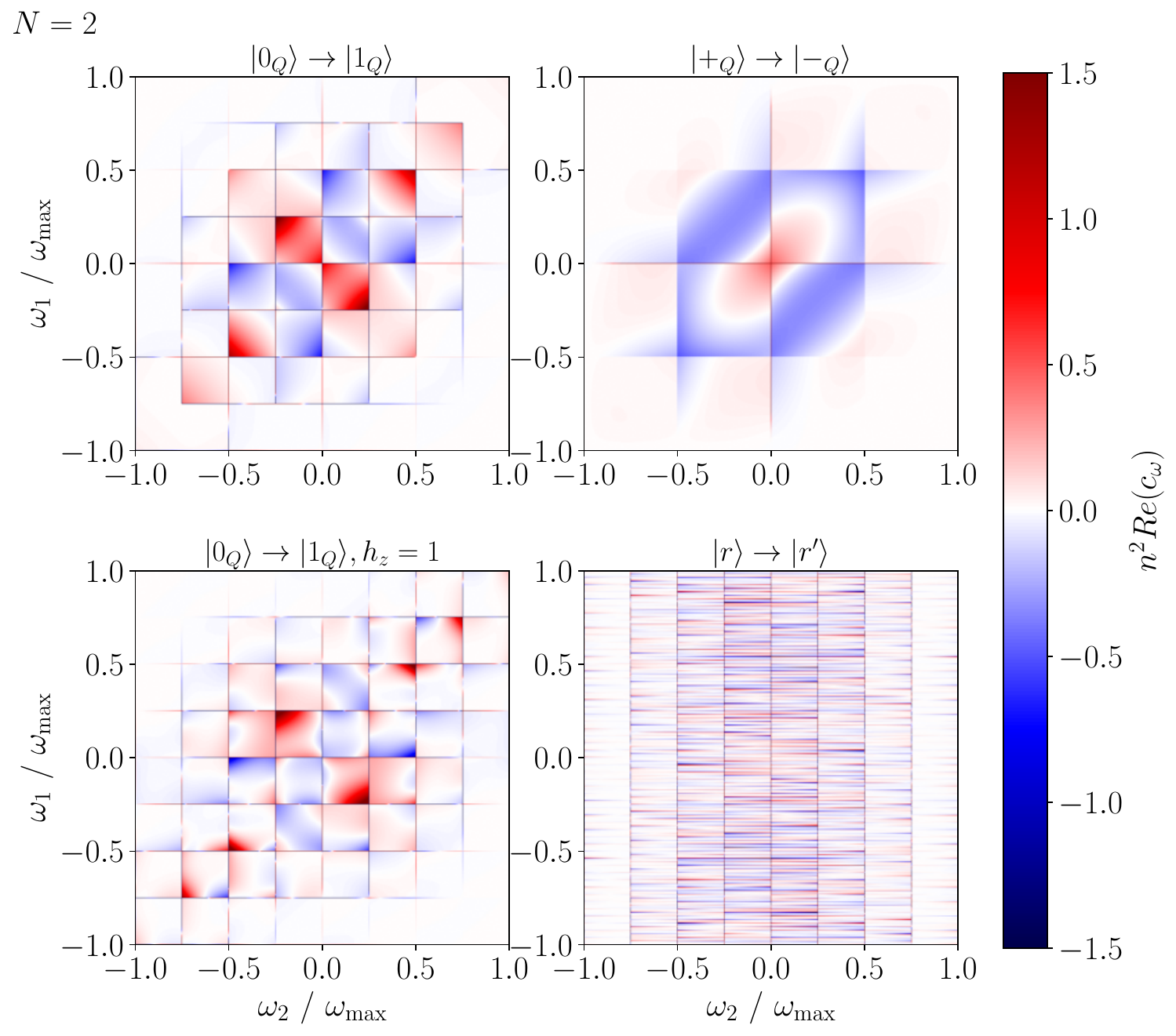}
    \caption{Lie-Fourier representation coefficients of the Ising model dynamical landscape for selected state transfer problems. The results are computed for $n=200$ using the DFT algorithm described in App.~\ref{appendix_e}. (Upper panel) The $c_{\omega}$ coefficients for a single timestep landscape $N=1$ are plotted for several values of drift strength $\alpha_d \pi^{-1} = 0.0, 0.89, 1.78, 2.67, 4.0$ (color scale). The line styles correspond to the real part of the even (solid) and odd (dashed) frequency index branches and to the imaginary part of the even branch (dot-dashed). The imaginary part of the odd branch is numerically zero in all cases. (Lower panel) The real parts of the $c_{\omega}$ coefficients for a double timestep landscape $N=2$ are plotted as a function of the vector frequencies $\boldsymbol{\omega}$ for $\alpha_d \pi^{-1} = 5.59$. In all cases but for random states $\ket{r} \to \ket{r'}$ the coefficients can be reorganized in two branches which exhibit continuous behaviour inside intervals defined by the resonant frequencies.}
    \label{fig:DFT}
\end{figure}

Let us start by looking at the case $\ket{{0}_Q}\mapsto \ket{{1}_Q}$.
As expected from the discussion in the previous subsection, we see that the coefficients corresponding to differences in eigenvalues correspond to finite contributions, which appear as sharp peaks, since we plotted $n^N c_{\omega}$. This plotting choice was adopted to emphasize the behaviour of the non-resonant coefficients, which we did not discuss extensively. 
We observe that numerically the coefficients give rise to an apparently smooth function within the intervals delimited by the resonant frequencies. This suggests that, at least in this particular case, the non-resonant part of the Lie-Fourier expansion approximates the Fourier Transform of a well behaved function. 
\subsubsection{Missing resonances}
If now instead we consider the state transfer problem $\ket{{+}_Q}\mapsto \ket{{-}_Q}$, we see that some of the frequencies that were resonant in the previous case do not display a sharp peak anymore.
The reason behind this lies in the discrete symmetries of the Hamiltonian in Eq.~\eqref{eq:ising}. Let us define the following operator(s): 
\begin{equation*}
    \hat{\Sigma}_{\alpha} = \bigotimes_{i=1}^Q \hat{\sigma}^{(i)}_{\alpha},\ \ \alpha=x,y,z,
\end{equation*}
which are Hermitian $\hat{\Sigma}_{\alpha}^{\dagger} = \hat{\Sigma}_{\alpha}$ and satisfy $\hat{\Sigma}_{\alpha}^2 = \hat{I}$.
One can easily check that $\hat{\Sigma}_x$ is a symmetry of both generators $[\hat{H}_d,\hat{\Sigma}_x]=[\hat{H}_c,\hat{\Sigma}_x]=0$ with eigenvalues $\pm 1$ and eigenvectors given by
\begin{equation*}
    \hat{\Sigma}_x \ket{-^{b_1} \cdots -^{b_Q}} = (-1)^{\sum_{i=1}^Q b_i} \ket{-^{b_1} \cdots -^{b_Q}}.
\end{equation*}
Comparing this expression with Eq.~\eqref{eq:diag_Hc}, we can categorize the eigenvalues in $\mathcal{S}$ according to the symmetry sector they belong to:
\begin{align*}
    \mathcal{S}^+&= 
    \begin{cases}
    \{Q, Q-4, \dots, -Q+4, -Q\}\ \mathrm{if}\ Q\ \mathrm{even}\\
    \{Q, Q-4, \dots, -Q+6, -Q+2\}\ \mathrm{if}\ Q\ \mathrm{odd} \\
    \end{cases}\\
    \mathcal{S}^- &=
    \begin{cases}
    \{Q-2, Q-6, \dots, -Q+6, -Q+2\}\ \mathrm{if}\ Q\ \mathrm{even}\\
    \{Q-2, Q-6, \dots, -Q+4, -Q\}\ \mathrm{if}\ Q\ \mathrm{odd}\\
    \end{cases}
\end{align*}
with $\mathcal{S} = \mathcal{S}^+ \cup \, \mathcal{S}^-$.
Moreover, the initial and target states lie inside a single symmetry sector:
\begin{align*}
    \hat{\Sigma}_x \ket{{+}_Q} &= \ket{{+}_Q} ,\\
    \hat{\Sigma}_x \ket{{-}_Q} &= (-1)^{Q}\ket{{-}_Q}.
\end{align*}
Then, because of Lemma \ref{lemma:symmetries}, if $Q$ is odd all
coefficients $b_{\boldsymbol{\omega}}$ and, therefore $c_{\boldsymbol{\omega}}$, are zero.
If instead $Q$ is even, the spectrum will be limited to the eigenvalues of the $+1$ symmetry sector, that is 
\[
    \mathcal{S}^+_n = \{Q, Q-\frac{4}{n}, \dots, -Q+\frac{4}{n}, -Q \}
\]
which gives rise to the fidelity spectrum
\[
    \mathcal{S}^{+ \Delta}_n = \{2Q, 2Q-\frac{4}{n}, \dots, -2Q+\frac{4}{n}, -2Q \}.
\]
This excludes from the spectrum roughly half of the frequencies, among which the resonant frequencies that do not appear in the $\ket{{+}_Q}\mapsto \ket{{-}_Q}$ case in Fig.~\ref{fig:DFT}.
Since this symmetry is broken in all the other cases, the resonances are in general not suppressed and can appear as sharp peaks.
\subsubsection{Additional spectrum features}
It is interesting to see that, in all cases except the transfer between random states, the coefficients $c_{\boldsymbol{\omega}} \in \mathbb{C}$ are real. As far as the $\ket{{0}_Q}\mapsto \ket{{1}_Q}$ case is concerned (also for $h_z \neq 0$), this fact can be explained by the existence of additional anti-symmetries of the Ising model. These are given by the operators $\hat{\Sigma}_{y,z}$ and have the property $\hat{\Sigma}_{y,z}\hat{H}(u)\hat{\Sigma}_{y,z} = \hat{H}(-u)$. But then we have the following:
\begin{multline*}
    J(\boldsymbol{u}) = |\bra{{1}_Q} \hat{U}(\boldsymbol{u}) \ket{{0}_Q}|^2 \\
    = |\bra{{1}_Q} \hat{\Sigma}_{z}^2 \hat{U}(u_N) \hat{\Sigma}_{z}^2 \cdots \hat{\Sigma}_{z}^2 \hat{U}(u_1)\hat{\Sigma}_{z}^2 \ket{{0}_Q}|^2 \\
    = |(-1)^Q \bra{{1}_Q}  \hat{U}(-u_N)  \cdots \hat{U}(-u_1) \ket{{0}_Q}|^2 = J(-\boldsymbol{u})
\end{multline*}
which together with $J \in \mathbb{R}$ implies $c_{\boldsymbol{\omega}} \in \mathbb{R}$. A similar reasoning using $\hat{\Sigma}_{x}$ shows another interesting property:
\begin{multline*}
    J(u_1, \dots, u_N) = |\bra{{1}_Q} \hat{U}(\boldsymbol{u}) \ket{{0}_Q}|^2 \\
    = |\bra{{1}_Q} \hat{\Sigma}_{x}^2 \hat{U}(u_N) \hat{\Sigma}_{x}^2 \cdots \hat{\Sigma}_{x}^2 \hat{U}(u_1)\hat{\Sigma}_{x}^2 \ket{{0}_Q}|^2 \\
    = |\bra{{0}_Q} \hat{U}(u_N)  \cdots \hat{U}(u_1) \ket{{1}_Q}|^2 \\
    = |\bra{{1}_Q} \hat{U}^{\dagger}(u_1)  \cdots \hat{U}^{\dagger}(u_N) \ket{{0}_Q}|^2 = J(u_N, \dots, u_1),
\end{multline*}
where the last step is justified since both $\hat{H}$ and the states $\ket{{0}_Q}, \ket{{1}_Q}$ only have real matrix elements and overlaps with the computational basis (see App.~\ref{appendix_e} for more details). This property shows that in this case the Lie-Fourier coefficients are approximately symmetric under reversal of the frequency order $c_{\omega_1 \cdots \omega_N} = c_{\omega_N \cdots \omega_1}$. The reason why the symmetry is not exact is that in the Lie-Fourier representation the unitary $\hat{U}$ is substituted with $\hat{U}_n$, which obeys the symmetry exactly only in the $n \to \infty$ limit.

Another phenomenon that we observe is the splitting of the coefficients into branches featuring apparently continuous behaviour inside the intervals described above. In all cases without longitudinal field $h_z =0$, the branch corresponding to frequencies in $\mathcal{S}^{\Delta}_n$ with odd indices are zero \footnote{Here we are thinking about $\mathcal{S}^{\Delta}_n$ as a vector of frequencies in ascending order, numbered from $0$ to $n_{\Delta}-1$. }. Moreover, the even/odd branches appear to be the only ones present in all cases except the random state transfer. In this last case, the structure looks much more complex, potentially featuring many branches and more discontinuities. We believe this to be related to the fact that randomly chosen states break not only the symmetries we discussed, but also the ones associated to cyclic qubit permutations \cite{DAlessandro21}.

The observed stepwise-continuous behaviour seems to suggest that at least in symmetric cases it would be advantageous to choose as feature maps a basis of e.g. polynomial functions in frequency space. This way large swaths of the continuous spectrum could be represented by a few basis functions, resulting in a more efficient representation. Even the discrete symmetries we highlighted could be explicitly encoded within such a feature map. While leaving this for future work, we notice that we can achieve a similar effect with the sinc-kernel representation as presented in Sec.~\ref{kernel}.

\section{Landscape derivatives} \label{derivatives}
The Lie-Fourier representation we discussed in Section \ref{lie_fourier} also allows us to prove some results concerning the derivatives of the landscape, which are independent of the representation itself. In fact we have that
\begin{lemma}[Boundedness of the derivatives]
The partial derivatives of $J(\boldsymbol{u})$ of any order $P$ are bounded by
\[ |\partial^{p_{1}}_{1}\cdots\partial^{p_{N}}_{N} J(\boldsymbol{u})|   \leq  \frac{(\omega_{\mathrm{max}} \delta t)^P}{2}\]
where $P = \sum_{\nu} p_{\nu} $  and $\omega_{\mathrm{max}} = |\lambda_{\mathrm{max}}-\lambda_{min}|$ is the maximum transition frequency in the control Hamiltonian.
\label{lemma:derivatives}
\end{lemma}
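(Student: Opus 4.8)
The plan is to exploit the Lie-Fourier representation \eqref{eq:fid_diff} of $J_n$ together with the bandwidth limitation of Lemma~\ref{lemma:bandwidth}, but \emph{not} through the $L_1$ bound of Lemma~\ref{lemma:coefficients_l1}. Differentiating \eqref{eq:fid_diff} termwise pulls down a factor $\prod_\nu(-i\delta t\,\omega_\nu)^{p_\nu}$, and since every $\omega_\nu\in\mathcal S^\Delta_n\subset[-\omega_{\mathrm{max}},\omega_{\mathrm{max}}]$ each such factor is bounded in modulus by $(\delta t\,\omega_{\mathrm{max}})^{p_\nu}$. A crude triangle-inequality estimate then yields $|\partial^{\boldsymbol p}J_n|\le(\delta t\,\omega_{\mathrm{max}})^P\sum_{\boldsymbol\omega}|c_{\boldsymbol\omega}|$, but the $L_1$ constant $r$ is exponentially large in $D$ and $N$, so this route cannot give the clean prefactor $1/2$. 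The remedy, as announced above, is to \emph{directly invoke the boundedness of $J$}.

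The key observation is that for every real $\boldsymbol u$ (not merely on the hypercube $\mathcal C^N$) the propagator $\hat U_n(\boldsymbol u)$ is unitary, so $J_n(\boldsymbol u)=|\braket{\chi|\hat U_n(\boldsymbol u)|\psi}|^2\in[0,1]$ on all of $\mathbb R^N$. Moreover $J_n$ is a finite sum of complex exponentials whose $u_\nu$-frequencies lie in $[-\omega_{\mathrm{max}},\omega_{\mathrm{max}}]$, so for each fixed choice of the remaining variables it extends to an entire function of $u_\nu$ of exponential type $\delta t\,\omega_{\mathrm{max}}$. This is exactly the setting of Bernstein's inequality for functions of exponential type: a bounded entire function $g$ of exponential type $\sigma$ satisfies $\|g'\|_\infty\le\sigma\|g\|_\infty$, hence $\|g^{(m)}\|_\infty\le\sigma^m\|g\|_\infty$. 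I would apply this one variable at a time. Passing to the shifted function $f_n:=2J_n-1$, which satisfies $\|f_n\|_\infty\le1$ on $\mathbb R^N$, an iterated application over $\nu=1,\dots,N$ gives
\[ \big\|\partial_1^{p_1}\cdots\partial_N^{p_N}f_n\big\|_\infty\le\Big(\prod_{\nu}(\delta t\,\omega_{\mathrm{max}})^{p_\nu}\Big)\|f_n\|_\infty=(\delta t\,\omega_{\mathrm{max}})^P, \]
where at each step the bound produced by the previous variable is uniform in the others, so Bernstein applies again. For any multi-index with $P\ge1$ the constant shift is annihilated by differentiation, so $\partial^{\boldsymbol p}J_n=\tfrac12\partial^{\boldsymbol p}f_n$ and $|\partial^{\boldsymbol p}J_n|\le\tfrac12(\delta t\,\omega_{\mathrm{max}})^P$.

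Finally I would remove the $n$-dependence. The bound just derived is uniform in $n$, and by Lemma~\ref{lemma:coefficients_l1} the series defining $J_n$ converges absolutely, so $J_n\to J$ uniformly (with the convergence controlled as in App.~\ref{appendix_c}). Since each $J_n$ is entire in $\boldsymbol u$ and the family is locally uniformly bounded, a Vitali/Weierstrass argument lets me interchange the limit with the derivatives, so that $\partial^{\boldsymbol p}J=\lim_n\partial^{\boldsymbol p}J_n$ inherits the same bound $\tfrac12(\delta t\,\omega_{\mathrm{max}})^P$.

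The main obstacle is the careful handling of this $n\to\infty$ interchange together with the invocation of Bernstein's inequality for exponential sums with generally \emph{incommensurate} frequencies (rather than on an integer lattice); one must confirm that the exponential-type hypothesis and the boundedness hold for the actual $J_n$ on all of $\mathbb R^N$, which is precisely what unitarity guarantees. A secondary point worth flagging is that the prefactor $1/2$ — and hence the sharp statement — is meaningful only for $P\ge1$, since the constant-shift trick relies on the derivative killing the midpoint $1/2$; for $P=0$ one only has the trivial $|J|\le1$.
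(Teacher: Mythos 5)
Your proposal is correct and follows essentially the same route as the paper's proof: boundedness of $J_n$ from unitarity, the midpoint-shift trick to obtain the factor $1/2$, Bernstein's inequality for bounded bandlimited (exponential-type) functions applied one variable at a time, and passage to the limit $n\to\infty$ via uniform convergence of the derivatives (your Vitali/Weierstrass step is the same complex-analytic machinery the paper implements with Cauchy's integral formula in App.~\ref{appendix_c}). One small imprecision: the uniform convergence $J_n\to J$ follows from the Lie--Trotter error estimate in App.~\ref{appendix_c}, not from the $L_1$ boundedness of the coefficients in Lemma~\ref{lemma:coefficients_l1}, which by itself controls neither the approximation error nor the limit; your flag that the sharp bound is meaningful only for $P\geq 1$ is correct and matches the appendix statement of the lemma.
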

\begin{proof}
See App.~\ref{appendix_c} for a proof in the case of multiple controls and a generic observable $\hat{O}$.
\end{proof}
The proof first shows that the result holds for the Lie-Fourier approximants $J_n$, making use of the fact that they are bandwidth limited and bounded $0 \leq J_n \leq 1$. The result for the landscape $J$ is obtained by invoking uniform convergence of the functions $J_n$ and of their derivatives of any order.

As a direct consequence we notice that the $L_1$ norm of the gradient, Hessian and higher order derivative tensors are bounded by constants:
\[ \sum_{\boldsymbol{p} \in \mathbb{\mathbb{N}}^{N}} \delta_{||\boldsymbol{p}||_1,P}|\partial^{p_{1}}_{1}\cdots\partial^{p_{N}}_{N} J(\boldsymbol{u})| \leq \frac{(N \omega_{\mathrm{max}}\delta t)^P}{2} = \frac{L^P}{2}, \]
where we defined the nondimensional parameter $L=\omega_{\mathrm{max}}T$, with $T= N\delta t$ the total evolution time, and $\boldsymbol{p}=(p_{1},p_{2},\cdots,p_{N})^T$. Since the parameter $L$ does not depend on the circuit depth $N$ alone, this points to the bounds being relevant also in the case of continuous controls, which are approximated by the stepwise-constant pulses that we studied. 

Another consequence of the bound on the derivatives is given by the following result:
\begin{lemma}[Lipschitz continuity]
    The function $J(\boldsymbol{u})$ is Lipschitz continuous, that is
    \begin{equation*}
        |J(\boldsymbol{u}) - J(\boldsymbol{u'})| \leq K ||\boldsymbol{u} - \boldsymbol{u'}||_1
    \end{equation*}
    where the Lipschitz constant satisfies $0 \leq K \leq \omega_{\mathrm{max}}\delta t/2$. Moreover, if $\boldsymbol{u}$ is a critical point $\grad J(\boldsymbol{u})=\boldsymbol{0}$, then also the following inequality holds
    \begin{equation*}
        |J(\boldsymbol{u}) - J(\boldsymbol{u'})| \leq K_{c} ||\boldsymbol{u} - \boldsymbol{u'}||_1^2.
    \end{equation*}
    where $0 \leq K_c \leq (\omega_{\mathrm{max}} \delta t)^2 /2$.
    \label{lemma:lipschitz}
\end{lemma}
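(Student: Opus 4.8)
The plan is to obtain both inequalities directly from the derivative bounds of Lemma~\ref{lemma:derivatives}, exploiting the convexity of the control hypercube $\mathcal{C}^N$ so that the straight segment joining any two admissible pulses stays inside the domain where those bounds hold.

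For the first inequality, I would specialize Lemma~\ref{lemma:derivatives} to first order ($P=1$), which gives the pointwise bound $|\partial_\nu J(\boldsymbol{u})| \leq \omega_{\mathrm{max}}\delta t/2$ for every component, i.e. $\|\grad J\|_\infty \leq \omega_{\mathrm{max}}\delta t/2$ uniformly on $\mathcal{C}^N$. Writing the increment along the path $\boldsymbol{u}(t) = \boldsymbol{u'} + t(\boldsymbol{u}-\boldsymbol{u'})$, $t\in[0,1]$, through the fundamental theorem of calculus as $J(\boldsymbol{u})-J(\boldsymbol{u'}) = \int_0^1 \grad J(\boldsymbol{u}(t))\cdot(\boldsymbol{u}-\boldsymbol{u'})\,dt$, I would apply the Hölder duality $|\grad J\cdot\boldsymbol{v}|\leq \|\grad J\|_\infty\|\boldsymbol{v}\|_1$ inside the integral. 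Since $\|\grad J\|_\infty$ is bounded by $\omega_{\mathrm{max}}\delta t/2$ at every point of the segment, this yields $|J(\boldsymbol{u})-J(\boldsymbol{u'})|\leq (\omega_{\mathrm{max}}\delta t/2)\,\|\boldsymbol{u}-\boldsymbol{u'}\|_1$, identifying a valid Lipschitz constant with $0\leq K\leq \omega_{\mathrm{max}}\delta t/2$.

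For the critical-point refinement, the vanishing gradient removes the first-order term, so I would use Taylor's theorem with integral remainder to second order: setting $\boldsymbol{v}=\boldsymbol{u'}-\boldsymbol{u}$ and letting $H$ denote the Hessian of $J$, one has $J(\boldsymbol{u'})-J(\boldsymbol{u}) = \int_0^1 (1-t)\,\boldsymbol{v}^{\top} H(\boldsymbol{u}+t\boldsymbol{v})\,\boldsymbol{v}\,dt$. Specializing Lemma~\ref{lemma:derivatives} to second order ($P=2$) bounds every entry by $|H_{\mu\nu}|\leq(\omega_{\mathrm{max}}\delta t)^2/2$, so the quadratic form obeys $|\boldsymbol{v}^{\top} H\boldsymbol{v}|\leq \sum_{\mu\nu}|v_\mu|\,|H_{\mu\nu}|\,|v_\nu|\leq ((\omega_{\mathrm{max}}\delta t)^2/2)\,\|\boldsymbol{v}\|_1^2$. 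Carrying out the elementary integral $\int_0^1(1-t)\,dt=1/2$ then gives $|J(\boldsymbol{u'})-J(\boldsymbol{u})|\leq ((\omega_{\mathrm{max}}\delta t)^2/4)\,\|\boldsymbol{v}\|_1^2$, which lies within the claimed range $0\leq K_c\leq(\omega_{\mathrm{max}}\delta t)^2/2$.

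I expect no deep obstacle here; the only points demanding care are (a) verifying that the segment remains inside $\mathcal{C}^N$ so the uniform derivative bounds apply throughout, which is guaranteed by convexity of the hypercube, and (b) matching the norms correctly, pairing the $L_\infty$ control of the derivative tensors coming from the per-component bound with the $L_1$ displacement norm appearing in the statement. It is worth remarking that the constant obtained for the quadratic bound is actually a factor of two smaller than the stated $K_c$, so the inequality holds with room to spare.
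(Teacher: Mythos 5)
Your proposal is correct and takes essentially the same route as the paper's proof: both reduce the claim to the derivative bounds of Lemma~\ref{lemma:derivatives} evaluated along the straight segment joining $\boldsymbol{u}$ and $\boldsymbol{u'}$, the only difference being that the paper invokes the mean value theorem (applied twice for the critical-point case) where you use the fundamental theorem of calculus together with Taylor's theorem with integral remainder. As a minor bonus, your integral-remainder argument yields the sharper constant $(\omega_{\mathrm{max}}\delta t)^2/4$ for the quadratic bound, whereas the paper's iterated mean-value argument only gives the stated $(\omega_{\mathrm{max}}\delta t)^2/2$.
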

\begin{proof}
    See App.~\ref{appendix_c} for a proof valid in the case of multiple controls and a generic observable $\hat{O}$.
\end{proof}
In practical terms, this means that the maximum amount of variation in the landscape value is fixed through the constant rate $K$ to the distance between sampled points. This fact has important consequences for optimization that we will explore in Sec.~\ref{metrics}-\ref{applications}.
Besides computing an upper bound for $K$ in terms of the physical properties of the systems, this results clarifies what is the relevant notion of distance between controls that should be used in this case, namely the $L_1$ norm $||\cdot||_1$ (also known as ``taxicab" norm, see App.~\ref{appendix_a} for definitions). In order to study the scaling for large $N$, we can reabsorb the factor $\delta t$ inside the norm
\[ \sum_{\nu=1}^N \delta t|u(t_{\nu}) - u'(t_{\nu})|  \xrightarrow[]{N\xrightarrow[]{}\infty} \int_0^T dt |u(t) - u'(t)|. \]

We can also show that the properties we just discussed still apply if we work with parametrized controls, which is often the case in optimal control \cite{motzoi2011optimal, caneva2011chopped, Sherson18}. The following corollary to Lemma \ref{lemma:derivatives} shows that, provided the basis functions are appropriately normalized, the same bounds on the derivatives, and therefore on the Lipschitz constant, are satisfied:
\begin{corollary}[Linear parametrizations of controls] \label{corollary:parametrization}
    Let us consider a linear parametrization of the controls $\boldsymbol{u} = \boldsymbol{Rv}$, where $\boldsymbol{v}\in\mathbb{R}^{N_c}$ and $\boldsymbol{R}\in\mathbb{R}^{N \times N_c}$,
    which is normalized as follows:
    \[ \forall i=1,\dots,N_c\ \ ||\boldsymbol{r}_i||_1 := \sum_{\nu=1}^N |r_{\nu i}| = 1.\]
     where $\boldsymbol{r}_i$ are the columns of $\boldsymbol{R}$.
     
    Then, the derivatives of the parametrized landscape $\tilde{J}(\boldsymbol{v}) = J(\boldsymbol{Rv})$ obey the bound
    \[ |\tilde{\partial}^{p_{1}}_{1}\cdots\tilde{\partial}^{p_{N_c}}_{N_c} \tilde{J}(\boldsymbol{v})|   \leq  \frac{(\omega_{\mathrm{max}} \delta t )^P}{2}\prod_{i=1}^{N_c} ||\boldsymbol{r}_{i}||^{p_i}_1 = \frac{( \omega_{\mathrm{max}} \delta t)^P}{2}, \]
    where $P = \sum_{\nu} p_{\nu}$, $\tilde{\partial}_{\mu'\nu'} := \frac{\partial}{\partial v_{\mu'\nu'}}$ and $\partial_{\mu \nu} := \frac{\partial}{\partial u_{\mu \nu}}$
\end{corollary}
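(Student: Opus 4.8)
The plan is to reduce the statement to Lemma~\ref{lemma:derivatives} by means of the multivariate chain rule, exploiting crucially that the parametrization $\boldsymbol{u} = \boldsymbol{R}\boldsymbol{v}$ is \emph{linear}. First I would record that, since $u_\nu = \sum_{i=1}^{N_c} r_{\nu i} v_i$, the first-order chain rule gives $\tilde{\partial}_i = \sum_{\nu=1}^N r_{\nu i}\,\partial_\nu$, a fixed linear combination of the original partial derivatives with constant coefficients. Because the inner map has vanishing second (and higher) derivatives, no higher-order chain-rule corrections arise when composing, so iterating the first-order rule yields the clean operator identity
\[
\tilde{\partial}_1^{p_1}\cdots\tilde{\partial}_{N_c}^{p_{N_c}} = \prod_{i=1}^{N_c}\Bigl(\sum_{\nu=1}^N r_{\nu i}\,\partial_\nu\Bigr)^{p_i},
\]
where the $\partial_\nu$ commute since $J$ is smooth.

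Next I would expand the right-hand side into a sum over multi-indices: each factor $\bigl(\sum_\nu r_{\nu i}\partial_\nu\bigr)^{p_i}$ produces $p_i$ independent index sums, so the full expansion is a sum of terms, each a product of $P=\sum_i p_i$ entries of $\boldsymbol{R}$ multiplied by a partial derivative of $J$ of \emph{total} order $P$, distributed over the coordinates $u_1,\dots,u_N$ in some way. The key observation is that the bound of Lemma~\ref{lemma:derivatives} depends only on the total order $P$ and not on how the derivatives are apportioned among the coordinates, so every term in the expansion is bounded in absolute value by the \emph{same} constant $(\omega_{\mathrm{max}}\delta t)^P/2$.

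Applying the triangle inequality and factoring out this uniform constant, I would be left with a sum of products of $|r_{\nu i}|$ that factorizes back across the $N_c$ blocks, giving $\prod_{i=1}^{N_c}\bigl(\sum_\nu |r_{\nu i}|\bigr)^{p_i} = \prod_i ||\boldsymbol{r}_i||_1^{p_i}$, and the normalization $||\boldsymbol{r}_i||_1 = 1$ then collapses this product to unity, yielding the stated bound. The extension to a generic observable $\hat{O}$ and to multiple physical controls follows verbatim, since only Lemma~\ref{lemma:derivatives} is invoked and nothing in the argument depends on the specific form of $J$.

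The only point that deserves care---and the sole place where the linearity hypothesis is truly essential---is the operator identity of the first step: for a nonlinear parametrization the chain rule would generate derivatives of the inner map, and the resulting bound would no longer factorize into a clean product of column norms. Beyond this, the argument is purely combinatorial, so I expect the main effort to lie in writing the multi-index expansion cleanly rather than in any genuine analytic obstacle, once Lemma~\ref{lemma:derivatives} is granted.
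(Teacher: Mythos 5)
Your proposal is correct and follows essentially the same route as the paper's own proof: apply the (first-order, constant-coefficient) chain rule for the linear map to write each $\tilde{\partial}_i$ as $\sum_\nu r_{\nu i}\partial_\nu$, expand the product of $P$ such sums, bound every resulting mixed partial uniformly by Lemma~\ref{lemma:derivatives} (which depends only on the total order $P$), and let the triangle inequality plus the normalization $||\boldsymbol{r}_i||_1=1$ collapse the remaining sum of $|r_{\nu i}|$ products to unity. The paper's version is stated in the multi-control notation of App.~\ref{appendix_c}, but the argument is identical.
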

\begin{proof}
    See App.~\ref{appendix_c} for a  proof valid in the case of multiple controls and a generic observable $\hat{O}$.
\end{proof}
Once again, we can derive an appropriate formula for scaling in $N$ by including the factor $\delta t$ in the normalization of the basis functions.

\section{Taylor representation}\label{taylor}
The bound on the landscape derivatives that we found in Lemma \ref{lemma:derivatives} also gives important information regarding how efficiently the landscape can be locally represented by means of a Taylor expansion. By taking the reference pulse $\boldsymbol{u}_0$ as the expansion point, the Taylor representation up to order $P$ can be written in the following form \footnote{Typically the Taylor expansion is written using a multiindex that avoids multiple counting, but as Lemma \ref{lemma:taylor} shows, the two are equivalent.} \cite{Smirnov}
\begin{align*} 
J_P(\boldsymbol{u}) &= \sum_{p = 0}^P \sum_{\nu_1 \cdots \nu_p} a_{\boldsymbol{\nu}} (\boldsymbol{u} - \boldsymbol{u}_0)_{\nu_1}\cdots (\boldsymbol{u} - \boldsymbol{u}_0)_{\nu_p},  \\
a_{\boldsymbol{\nu}} &= \frac{1}{p!} \partial_{\nu_1}\cdots\partial_{\nu_p}J(\boldsymbol{u}_0),
\end{align*} 
In order to see that the landscape admits a representation in this form, we find an upper bound for the approximation error by means of Lemma \ref{lemma:derivatives} and of the following classical result due to Lagrange and Taylor:
\begin{lemma}[Taylor approximation error in the Lagrange form]
\[  |J(\boldsymbol{u}) - J_P(\boldsymbol{u})| \leq \frac{1}{2}\frac{( \omega_{\mathrm{max}} \delta t)^{P+1}}{(P+1)!}||\boldsymbol{u}-\boldsymbol{u}_0||_1^{P+1} \]
\label{lemma:taylor}
\end{lemma}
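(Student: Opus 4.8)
The plan is to bound the remainder of the multivariate Taylor expansion by combining the standard Lagrange remainder formula with the uniform derivative bound already established in Lemma~\ref{lemma:derivatives}. The key observation is that the remainder, when expressed through a single one-dimensional integral along the segment connecting $\boldsymbol{u}_0$ to $\boldsymbol{u}$, reduces the problem to controlling a sum over all partial derivatives of total order $P+1$, which is exactly what the $L_1$ derivative bound (the corollary stated right after Lemma~\ref{lemma:derivatives}) delivers.

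First I would introduce the auxiliary scalar function $g(s) = J(\boldsymbol{u}_0 + s(\boldsymbol{u}-\boldsymbol{u}_0))$ for $s\in[0,1]$, which restricts the landscape to the line segment between the two control points. The univariate Taylor theorem with Lagrange remainder then gives $g(1) - \sum_{p=0}^{P} g^{(p)}(0)/p! = g^{(P+1)}(\xi)/(P+1)!$ for some $\xi \in (0,1)$. The left-hand side equals exactly $J(\boldsymbol{u}) - J_P(\boldsymbol{u})$ once one checks that the chain rule expands $g^{(p)}(0)/p!$ into precisely the multiindex coefficients $a_{\boldsymbol{\nu}}$ multiplied by the corresponding displacement monomials; this is the content of the footnote remark that the repeated-index and multiindex forms coincide.

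Next I would estimate $|g^{(P+1)}(\xi)|$. By the multivariate chain rule,
\[
g^{(P+1)}(\xi) = \sum_{\nu_1,\dots,\nu_{P+1}} \left(\prod_{k=1}^{P+1}(\boldsymbol{u}-\boldsymbol{u}_0)_{\nu_k}\right)\,\partial_{\nu_1}\cdots\partial_{\nu_{P+1}} J\big(\boldsymbol{u}_0 + \xi(\boldsymbol{u}-\boldsymbol{u}_0)\big),
\]
so taking absolute values and grouping terms by the multiplicities $\boldsymbol{p}$ of each index lets me factor the sum as a product of the displacement's $L_1$ contributions against the $L_1$-summed derivative bound. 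Using the consequence of Lemma~\ref{lemma:derivatives} that $\sum_{\|\boldsymbol{p}\|_1 = P+1} |\partial^{\boldsymbol{p}} J| \leq (N\omega_{\mathrm{max}}\delta t)^{P+1}/2$ together with $|g^{(P+1)}(\xi)| \leq \|\boldsymbol{u}-\boldsymbol{u}_0\|_\infty^{P+1}$-type majorization collapses everything to $\tfrac{1}{2}(\omega_{\mathrm{max}}\delta t)^{P+1}\|\boldsymbol{u}-\boldsymbol{u}_0\|_1^{P+1}$, matching the claimed bound after dividing by $(P+1)!$.

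The main obstacle I anticipate is the bookkeeping in the combinatorial step: one must verify carefully that summing $|(\boldsymbol{u}-\boldsymbol{u}_0)_{\nu_1}\cdots(\boldsymbol{u}-\boldsymbol{u}_0)_{\nu_{P+1}}|$ against the derivative bound reproduces the clean factorization $\|\boldsymbol{u}-\boldsymbol{u}_0\|_1^{P+1}$ without double-counting or misplacing the multinomial coefficients that appear when passing between the ordered-tuple sum and the multiindex sum. The cleanest route is to avoid regrouping altogether and instead bound each single derivative uniformly by $(\omega_{\mathrm{max}}\delta t)^{P+1}/2$ from Lemma~\ref{lemma:derivatives}, then sum the displacement factors freely: $\sum_{\nu_1,\dots,\nu_{P+1}}\prod_k |(\boldsymbol{u}-\boldsymbol{u}_0)_{\nu_k}| = \big(\sum_\nu |(\boldsymbol{u}-\boldsymbol{u}_0)_\nu|\big)^{P+1} = \|\boldsymbol{u}-\boldsymbol{u}_0\|_1^{P+1}$, which sidesteps the multinomial issue entirely and gives the stated inequality directly.
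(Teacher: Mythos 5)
Your proposal is correct and, in its final form, follows essentially the same route as the paper: restrict $J$ to the segment $s \mapsto \boldsymbol{u}_0 + s(\boldsymbol{u}-\boldsymbol{u}_0)$, invoke the univariate Lagrange remainder (which the paper re-derives by hand via the mean value theorem and $P$-fold integration), expand $g^{(P+1)}$ by the chain rule, bound each partial derivative uniformly by $(\omega_{\mathrm{max}}\delta t)^{P+1}/2$ from Lemma~\ref{lemma:derivatives}, and factor the unrestricted sum of displacement products into $||\boldsymbol{u}-\boldsymbol{u}_0||_1^{P+1}$. The detour you consider through the $L_1$-summed derivative bound with a sup-norm majorization would only yield the weaker constant $(N\omega_{\mathrm{max}}\delta t)^{P+1}||\boldsymbol{u}-\boldsymbol{u}_0||_\infty^{P+1}$, so your decision to discard it in favor of the term-by-term uniform bound is exactly right and matches the paper's proof.
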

\begin{proof}
    See App.~\ref{appendix_c} for a proof valid for multiple controls and a generic observable $\hat{O}$.
\end{proof}
Inside the control region of interest $\mathcal{C}^{(N)}(\boldsymbol{u}_0)$ the bound can be relaxed to
\begin{equation} 
\sup_{\boldsymbol{u} \in \mathcal{C}^{(N)}(\boldsymbol{u}_0)} |J(\boldsymbol{u}) - J_P(\boldsymbol{u})| \leq \frac{({u_{\mathrm{max}}}L)^{P+1}}{2(P+1)!} =: \epsilon(P),
\label{eq:taylor_bound}
\end{equation}
where we recall that $L=\omega_{\mathrm{max}}N\delta t$.
Intuitively, this implies that for ${u_{\mathrm{max}}} < L^{-1}$, the error of the $P-$order expansion is suppressed by both the numerator (exponentially) and the denominator (factorially), defining a control region of improved convergence in $P$. On the other hand, for ${u_{\mathrm{max}}} \geq L^{-1}$ the numerator diverges exponentially, and the factorial suppression of the denominator must first kick in for the error to vanish. Arbitrary precision can be obtained in both cases by choosing $P$ to be large enough.

In Fig.~\ref{fig:taylor} we study what is the minimum required order of the Taylor expansion so that $\epsilon(P) = \epsilon$ for a given error threshold $\epsilon$.
As expected, we see that for ${u_{\mathrm{max}}}L < 1$ the we do not need very high order expansions to obtain high precision, whereas for ${u_{\mathrm{max}}}L > 1$ the order required for the expansion scales linearly with ${u_{\mathrm{max}}}L$.
The latter observation can be understood by developing the bound in Eq.~\eqref{eq:taylor_bound} using Stirling's approximation for the factorial under the assumption ${u_{\mathrm{max}}}L,P\gg1$:
\begin{multline*}
    0\approx\frac{\log(\epsilon)}{P} \approx \frac{1}{P}\log\left( \frac{({u_{\mathrm{max}}}L)^P}{2P!}\right) \approx \left( \log\frac{{u_{\mathrm{max}}}L}{P} +1 \right),
\end{multline*}
which implies the linear dependence $P \approx e {u_{\mathrm{max}}}L$.
\begin{figure}
    \centering
    \includegraphics[width=.5\textwidth]{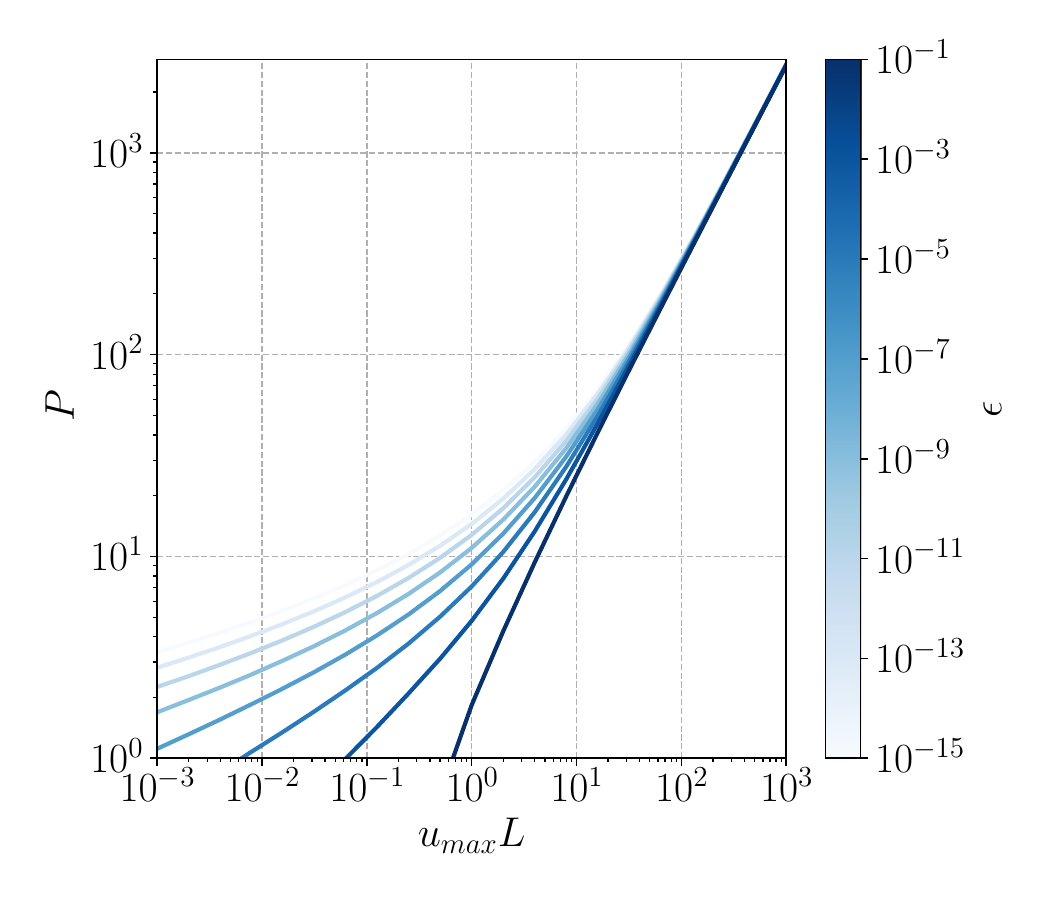}
    \caption{Given an error threshold $\epsilon$, we plot the solution for $P$ of the equation $\epsilon(P)=\epsilon$. This quantity represents the minimum order of Taylor expansion $P$ to represent $J$ up to an error $\epsilon$. The different lines show the results for several values for the error threshold $\epsilon=\{10^{-1}, 10^{-3}, 10^{-5}, 10^{-7}, 10^{-9}, 10^{-11}, 10^{-13}, 10^{-15}\}.$
    There is a crossover from ${u_{\mathrm{max}}}L < 1$, where $P\lesssim 10$ and it depends weakly on ${u_{\mathrm{max}}}L$, enabling an efficient local representation of the landscape, to the region ${u_{\mathrm{max}}}L \gg 1$ where the dependence becomes linear $P\sim e{u_{\mathrm{max}}}L$.}
    \label{fig:taylor}
\end{figure}

This error analysis shows that the landscape can be represented locally  (i.e. for ${u_{\mathrm{max}}}L \lesssim 1$, which is the case if the pulses amplitude ${u_{\mathrm{max}}}$, total time $T$ and/or maximum transition frequency $\omega_{\mathrm{max}}$ are sufficiently small) with a number of parameters which is polynomial in the number of time steps $N$, with a degree $P\lesssim10$.
As an example, Fig.~\ref{fig:taylor} shows that it is possible to represent the landscape with an error below $\epsilon \lesssim 10^{-3}$ for ${u_{\mathrm{max}}}L < 1$  with a $P=5$ representation, which contains $\mathcal{O}(N^5)$ coefficients or even more locally for ${u_{\mathrm{max}}}L \lesssim 0.25$ with a $P=2$ representation, which only contains $\mathcal{O}(N^2)$ terms (corresponding to the gradient and Hessian of the landscape at the expansion point). 

Such local representations provide relevant information during optimization and can be used to obtain noise-robust gradient estimates or to allow a direct jump to the minimum of the local landscape, speeding up convergence \cite{Goodwin16, Koczor22QAD}. In particular, for $P=2$ the latter approach is equivalent to applying Newton's method \cite{Goodwin16}, for which our results could help bound the convergence rate. Owing to its generality, the Taylor representation also serves as a performance baseline for any other local representation which exploits specific details of the system at hand \cite{Koczor22QAD}.

In the case of bang-bang controls $s_{\nu} = \pm 1$, the Taylor representation is very similar to the one presented in \cite{Bukov19}, although in that work the landscape is related to the logarithm of the infidelity.
Since the controls are chosen this way, the only allowed powers are $p_{\nu}=0,1$ and therefore the expansion is finite. The fact that the authors of \cite{Bukov19} could successfully represent the landscape for small values of $T$ using only the first few orders in their expansion is qualitatively similar to the picture that arises from our analysis. In our case we can roughly choose the timescale $T^*$ that defines the crossover as
\[T^* = \frac{1}{{u_{\mathrm{max}}}\omega_{\mathrm{max}}}.\]
As we will see in Sec.~\ref{metrics}, $T^*$ can be interpreted as a Quantum Speed Limit, so for $T/T^* \ll 1$ the infidelity cannot reach values arbitrarily close to $0$. But then the composition of $J$ with the logarithm should only add a prefactor close to $1$ to the bounds we discussed, leading to a qualitatively similar picture.
\section{Kernel representation and landscape learning}\label{kernel}
We now turn to the question regarding how it is possible, given a sample  $\mathcal{D}_{\mathrm{train}}$ of $N_{\mathrm{train}}$
control values $\boldsymbol{u}_i$ and the corresponding fidelities $J(\boldsymbol{u}_i) =: J_i$, namely
\[\mathcal{D}_{\mathrm{train}} = \{ (\boldsymbol{u}_i, J_i) \}_{i=1}^{N_{\mathrm{train}}}\]
to efficiently predict the fidelity for new control values.

This problem can be stated in the language of machine learning as a supervised learning (or regression) task, where given a family of models of the landscape $J_{\boldsymbol{w}}(\boldsymbol{u})$ parametrized by the weights $\boldsymbol{w} \in \mathbb{C}^{N_{\mathrm{weights}}}$, we look for the vector $\bar{\boldsymbol{w}}$ whose associated model best fits the points in the training data set.
This is quantified by means of a loss function $\mathfrak{L}$, that we choose to be the sum of the square residuals with RIDGE regularization \cite{Bishop06, Bukov19, Landman22}:
\[ \mathfrak{L}(\boldsymbol{w}, \boldsymbol{w}^{\dagger}, \mathcal{D}_{\mathrm{train}}) = \sum_{i=1}^{N_{\mathrm{train}}} |J_{\boldsymbol{w}}(\boldsymbol{u}_i) - J_i|^2 + \lambda_{R} \boldsymbol{w}^{\dagger} \boldsymbol{w} ,\]
so that $\bar{\boldsymbol{w}}$ will correspond to the minimum of $\mathfrak{L}$.
If the considered model has a non-linear dependence on the parameters, which is the case for deep learning \cite{Dalgaard22}, one usually has to resort to gradient-based optimization to find $\bar{\boldsymbol{w}}$. In the case of a linear model, that is
\begin{equation}
J_{\boldsymbol{w}}(\boldsymbol{u}) = \boldsymbol{w}^{\dagger} \boldsymbol{\phi}(\boldsymbol{u}),
\label{eq:lin_model}
\end{equation}
where the non-linear functions $\boldsymbol{\phi}(u)=(\phi_1(u),\phi_2(u),\cdots,\phi_{N_{\mathrm{weights}}}(u))^T: \mathbb{R}^{N} \xrightarrow{} \mathbb{C}^{N_{\mathrm{weights}}}$ are usually called features, the problem reduces to linear regression, which can be solved algebraically. This is possible since in this case the loss reduces to a quadratic form in the weights:
\[\mathfrak{L}(\boldsymbol{w}, \boldsymbol{w}^{\dagger}) =  \boldsymbol{w}^{\dagger} \boldsymbol{\Phi}^{\dagger} \boldsymbol{\Phi} \boldsymbol{w} + \boldsymbol{J} ^{\dagger}\boldsymbol{J} - \boldsymbol{J}^{\dagger}\boldsymbol{\Phi} \boldsymbol{w} - \boldsymbol{w}^{\dagger}\boldsymbol{\Phi}^{\dagger} \boldsymbol{J} + \lambda_R  \boldsymbol{w}^{\dagger} \boldsymbol{w}, \]
where we have defined the so-called feature matrix $\boldsymbol{\Phi}$ 
\[ \boldsymbol{\Phi}^{\dagger} = 
    \begin{pmatrix}
    \boldsymbol{\phi}(\boldsymbol{u}_1) & \dots & \boldsymbol{\phi}(\boldsymbol{u}_{N_{\mathrm{train}}}) \\
    \end{pmatrix}. \]
The stationary points $\grad_{\boldsymbol{w^{\dagger}}}\mathfrak{L} = \boldsymbol{0}$ can be found by solving the linear equation
\begin{equation}
\boldsymbol{w} =  -\frac{1}{\lambda_R} \boldsymbol{\Phi}^{\dagger}(\boldsymbol{\Phi}\boldsymbol{w} - \boldsymbol{J}).
\label{eq:L_stationary}
\end{equation}
Since the loss is real, differentiating with respect to $\boldsymbol{w}$ gives rise to the Hermitian conjugate of this equation, which is also satisfied when this equation is satisfied. The solution can be written explicitly via matrix inversion:
\[ \bar{\boldsymbol{w}} = \left( \boldsymbol{\Phi}^{\dagger}\boldsymbol{\Phi} + \lambda_R \mathds{1} \right)^{-1}\boldsymbol{\Phi}^{\dagger} \boldsymbol{J}. \]
We see from this expression that the RIDGE parameter $\lambda_R$ regularizes the inverse of the covariance matrix $\boldsymbol{\Phi}^{\dagger}\boldsymbol{\Phi} \in \mathbb{C}^{N_{\mathrm{weights}} \times N_{\mathrm{weights}}}$, avoiding potential problems arising from small eigenvalues of $\boldsymbol{\Phi}^{\dagger}\boldsymbol{\Phi}$, which can be caused by correlations in the data set.

In light of the Lie-Fourier representation we discussed in Sec.~\ref{lie_fourier}, a natural choice for the features is a complex exponential of the form
\[ \phi_{\boldsymbol{\omega}}(\boldsymbol{u}) = e^{-i \delta t \boldsymbol{\omega} \cdot \boldsymbol{u}}.\]
Since the frequencies fill up densely the hypercube $\boldsymbol{\omega} \in [-\omega_{\mathrm{max}}, \omega_{\mathrm{max}}]^N$, in the absence of a prior expectation on their distribution, we consider random Fourier features \cite{Landman22}, which we sample with uniform probability in the frequency hypercube. As we will see later, the boundedness of the control region we are considering results in a certain freedom to choose the frequencies in the model, so that they do not necessarily have to match the ones from the Lie-Fourier representation. This can potentially relieve us from having to deal with an infinite number of frequencies.

Another viable option is to use polynomial features of the form
\[\phi_{\boldsymbol{p}}(\boldsymbol{u}) = \prod_{\nu=1}^N (\boldsymbol{u} - \boldsymbol{u}_0)^{p_1} \cdots (\boldsymbol{u} - \boldsymbol{u}_0)^{p_N}, \]
which give rise to a Taylor representation like the one we studied in Sec.~\ref{taylor}, provided that the degree vectors $\boldsymbol{p}$ are chosen appropriately. Based on our discussion of the error, we would expect this representation to efficiently learn the landscape at least for $u_{\mathrm{max}}L < 1$.

These two options both suffer from the same problem, namely, that in order to solve the problem numerically they require subselecting a finite number of representative features $N_{\mathrm{weights}}$ from an infinite-dimensional feature space.  This space is in general needed in full to represent the landscape exactly, as we saw in detail in Sec.~\ref{lie_fourier} for the Lie-Fourier representation and in Sec.~\ref{taylor} for the Taylor representation. But then, we always have to neglect features that might be important, which is potentially problematic in the absence of guiding principles on how to choose the ones we keep.

There is a way around this problem that allows us to effectively perform infinite-dimensional linear regression, by means of the so-called kernel trick \cite{Bishop06, Landman22}.
The starting point of this method is to introduce the column vector $\boldsymbol{a}$ as follows:
\[ \boldsymbol{a} = -\frac{1}{\lambda_R}(\boldsymbol{\Phi}\boldsymbol{w} - \boldsymbol{J})\]
so that together with Eq.~\eqref{eq:L_stationary} we have
\[ \boldsymbol{w}(\boldsymbol{a}) = \boldsymbol{\Phi}^{\dagger} \boldsymbol{a}.\]
We can now rewrite the loss as a function of $\boldsymbol{a}$ alone:
\[  \mathfrak{L}(\boldsymbol{a}, \boldsymbol{a}^{\dagger}) = \boldsymbol{a}^{\dagger} \boldsymbol{K} \boldsymbol{K} \boldsymbol{a} - \boldsymbol{a}^{\dagger} \boldsymbol{K} \boldsymbol{J} - \boldsymbol{J}^{\dagger} \boldsymbol{K} \boldsymbol{a} + \boldsymbol{J}^{\dagger}\boldsymbol{J} + \lambda_{R} \boldsymbol{a}^{\dagger} \boldsymbol{K} \boldsymbol{a},\]
where we defined the kernel matrix $\boldsymbol{K} = \boldsymbol{\Phi}\boldsymbol{\Phi}^{\dagger}$, turning the original infinite-dimensional regression problem into a finite dimensional kernel regression. Since $\boldsymbol{K}^{\dagger} = \boldsymbol{K} \in \mathbb{C}^{N_{\mathrm{train}} \times N_{\mathrm{train}} }$, the dimensionality of this new regression problem is set by the number of data points $N_{\mathrm{train}}$ in the training dataset.
Once again, we have to minimize the loss by solving $\grad_{\boldsymbol{a}^{\dagger}}\mathfrak{L}=\boldsymbol{0}$, which yields
\[ \bar{\boldsymbol{a}} = (\boldsymbol{K} + \lambda_R \mathds{1})^{-1} \boldsymbol{J}. \]
Finally, inference on a new data point is to be carried out by substitution inside Eq.~\eqref{eq:lin_model}
\[J_{\boldsymbol{w}(\bar{\boldsymbol{a}})}(\boldsymbol{u}) = \bar{\boldsymbol{a}}^{\dagger} \boldsymbol{\Phi} \boldsymbol{\phi}(\boldsymbol{u}) = \sum_{i=1}^{N_{\mathrm{train}}} \bar{a}^*_i \kappa(\boldsymbol{u}_i,\boldsymbol{u}),\]
where we defined the kernel function $\kappa(\boldsymbol{u},\boldsymbol{u}') := \boldsymbol{\phi}^{\dagger}(\boldsymbol{u}) \boldsymbol{\phi}(\boldsymbol{u}') $, which also appears in the matrix elements of the kernel matrix $K_{ij} = \kappa(\boldsymbol{u}_i,\boldsymbol{u}_j)$.
Kernel methods like this one are categorized as instance-based learning methods, because the prediction for a new data point is obtained as a linear combinations of kernel functions evaluated on the training data points \cite{Kokail19, Sauvage20, Dalgaard22}.

In the case of complex exponential features, the calculation of the kernel functions can be carried out analytically, and is particularly straightforward in the case of equally spaced frequencies as they appear in the Ising model we discussed in Sec.~\ref{numerics}.
In fact, in that case we have that
\begin{widetext}
\begin{multline*}
\kappa(\boldsymbol{u},\boldsymbol{u}') = \boldsymbol{\phi}^{\dagger}(\boldsymbol{u}) \boldsymbol{\phi}(\boldsymbol{u}') = \lim_{n \xrightarrow[]{} \infty} \frac{1}{n_\Delta^N} \sum_{\boldsymbol{\omega} \in (\mathcal{S}^{\Delta}_n)^N}  e^{-i\delta t \boldsymbol{\omega}\cdot (\boldsymbol{u} - \boldsymbol{u}')} 
= \lim_{k_{\mathrm{max}} \xrightarrow[]{} \infty} \left(\frac{1}{2k_{\mathrm{max}}+1}\right)^N \sum_{\boldsymbol{k} = -\boldsymbol{k}_{\mathrm{max}}}^{\boldsymbol{k}_{\mathrm{max}} }  e^{-i \frac{\omega_{\mathrm{max}} \delta t }{ k_{\mathrm{max}} } \boldsymbol{k} \cdot (\boldsymbol{u} - \boldsymbol{u}')} =\\
= \prod_{i=1}^{N} \lim_{k_{\mathrm{max}} \xrightarrow[]{} \infty} \frac{1}{2k_{\mathrm{max}}+1} \left( \sum_{k_i = 0}^{k_{\mathrm{max}} }  e^{-i \frac{\omega_{\mathrm{max}} \delta t }{ k_{\mathrm{max}} } k_i (u_i - u_i')} + c.c. - 1 \right)  
= \prod_{i=1}^{N} \lim_{k_{\mathrm{max}} \xrightarrow[]{} \infty} \frac{1}{2k_{\mathrm{max}}+1}\left(\frac{e^{-i \omega_{\mathrm{max}} \delta t (u_i - u_i')}}{1-e^{+i \frac{\omega_{\mathrm{max}} \delta t }{ k_{\mathrm{max}} }(u_i - u_i')}} + c.c. \right) \\
= \prod_{i=1}^{N} \lim_{k_{\mathrm{max}} \xrightarrow[]{} \infty} \frac{\mathcal{O}(k_{\mathrm{max}})}{2k_{\mathrm{max}}+1}\left(\frac{e^{-i \omega_{\mathrm{max}} \delta t (u_i - u_i')}}{-i\omega_{\mathrm{max}} \delta t (u_i - u_i')} + c.c. \right) 
= \prod_{i=1}^{N} \frac{\sin{[ \omega_{\mathrm{max}} \delta t (u_i - u_i')}]}{\omega_{\mathrm{max}} \delta t (u_i - u_i')} = \kappa(\boldsymbol{u}-\boldsymbol{u}'),
\end{multline*}
\end{widetext}
where we remind that $n_\Delta = \#\mathcal{S}^{\Delta}_n$ is the number of frequencies in the $n$-th order Lie-Fourier representation, and the sums over $k_i$ are computed using the well-known formula for the geometric series.
The fact that we obtain a (multidimensional) $\text{sinc}$ kernel is consistent with the bandwidth-limited nature of the problem. Moreover, we find that for a data set $\mathcal{D}_{\mathrm{train}}$ obtained by sampling an infinite cubic hyperlattice $\boldsymbol{u}_i \in \pi(\omega_{\mathrm{max}} \delta t )^{-1}\mathbb{Z}^N$ this instance of kernel regression reduces to the well known Whittaker-Shannon interpolation formula \cite{Whittaker15}. In fact, since the $\sinc$ function reduces to a delta function on the hyperlattice points, we have that
\[ \kappa(\pi(\omega_{\mathrm{max}} \delta t )^{-1}(\boldsymbol{p}-\boldsymbol{q})) = \delta_{\boldsymbol{p},\boldsymbol{q}}\ \ \forall \boldsymbol{p},\boldsymbol{q} \in \mathbb{Z}^N, \]
therefore the kernel is the identity matrix $K_{ij}=\delta_{ij}$ and $\bar{\boldsymbol{a}} = \boldsymbol{J}$ by setting $\lambda_R=0$, which gives us the desired result
\[J_{\boldsymbol{w}(\bar{\boldsymbol{a}})}(\boldsymbol{u}) = \sum_{i=1}^{N_{\mathrm{train}}} J_i \kappa(\boldsymbol{u} - \boldsymbol{u}_i) = \sum_{\boldsymbol{p} \in \mathbb{Z}^N} J_i \kappa \left[ \boldsymbol{u} - \frac{\pi \boldsymbol{p}}{\omega_{\mathrm{max}} \delta t } \right].\]

According to this kernel representation, the estimated landscape is a linear superposition of sinc kernels with a given wavelength. Moreover, in the case of the Whittaker-Shannon formula, the individual kernels represent the solution for a single sampling point, playing the same role of a Green function for a classical linear field theory. This suggests that, similarly to classical optics, the landscape has a maximum resolution related to the kernel's wavelength, below which details cannot be distinguished.
We will make this argument more precise in the following section.
\subsection{Numerical examples}
We now benchmark the performance in the regression task of the various representations for the same quantum dynamical landscape we already studied numerically in Sec.~\ref{lie_fourier}. The system we consider is the transverse field Ising model in Eq.~\eqref{eq:ising}, with no longitudinal field $h_z=0$ and the state transfer problem  $\ket{{0}_Q}\mapsto \ket{{1}_Q}$. We generate a training dataset by randomly sampling a set of controls uniformly inside the hypercube $\boldsymbol{u}_i \in \mathcal{C}^N = [-u_{\mathrm{max}}, u_{\mathrm{max}}]$, and then we compute the state transfer fidelity $J_i$ for each one of the sampled controls. Then we train each regression model as described previously, and evaluate the prediction error on a new set data points $\mathcal{D}_{\mathrm{test}}=\{ (\boldsymbol{u}_i, J(\boldsymbol{u}_i)) \}_{i=1}^{N_{\mathrm{test}}}$, which we define as the root mean squared error: 
\[ \epsilon_{\mathrm{rms}} = \sqrt{\frac{1}{N_{\mathrm{test}}}\sum_{i=1}^{N_{\mathrm{test}}} (J_i- \smashoperator[lr]{\sum_{j=1}^{N_{\mathrm{weights}}}} w_j \phi_j(\boldsymbol{u}_i))^2} \]

In order to obtain a fair comparison between the three regression models, we need to take into account several aspects.
First of all, while the number of features $N_{\mathrm{weights}}$ can be chosen freely for the Fourier and Taylor features, it is fixed to $N_{\mathrm{weights}}=N_{\mathrm{train}}$ in the case of kernel regression, as it is an instance-based method. For this reason, we proceed as follows: we split the training data set into a set that we use for the training itself and a validation data set which we use to evaluate the error $\epsilon_{\mathrm{rms}}$, repeat the training for a range of values of $N_{\mathrm{weights}}$, and finally fix $N_{\mathrm{weights}}$ by choosing the value that gives the lowest validation error.
Another important aspect is the choice of the RIDGE regularization parameter $\lambda_R$. 
We empirically found that keeping $N_{\mathrm{weights}} < N_{\mathrm{train}}$ greatly simplifies the analysis in this regard, since then the best results are typically obtained by fixing $\lambda_R$ as small as it is permitted by numerical stability.
The results of this analysis are presented in Fig.~\ref{fig:ml}, where the prediction error $\epsilon_{\mathrm{rms}}$ is plotted as a function of $N_{\mathrm{train}}$ for the various regression models.
\begin{figure*}
    \centering
    \includegraphics[width=0.9\textwidth]{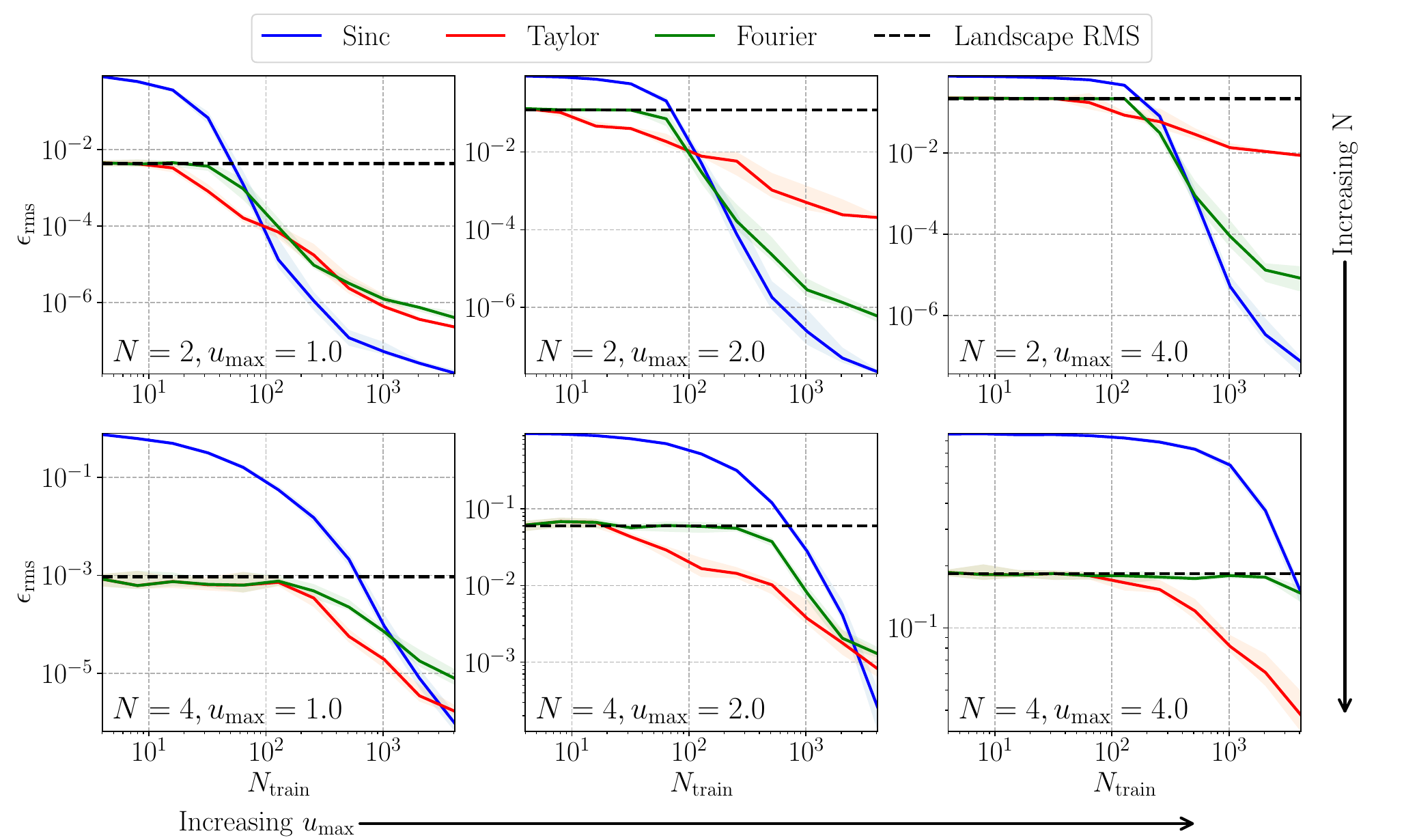}
    \caption{Prediction performance $\epsilon_{\mathrm{rms}}$ of surrogate models as a function of training dataset size $N_{\mathrm{train}}$ for the system given by Eq.~\eqref{eq:ising}. The results shown in the plot are obtained for $Q=5$ qubits, time $T=1.0$, where the colors correspond to the choice of feature map. Each of the six plots corresponds to a value of the parameters $(N, u_{\mathrm{max}})$  $\in \{2,4\} \times \{1.0,2.0,4.0\}$. For Taylor and Fourier features we have $\lambda_{R}=10^{-6}$, and only the result for the optimal value of $N_{\mathrm{weights}} \leq N_{\mathrm{train}}$ is shown, while for the sinc kernel we have $\lambda_{R}=10^{-12}$. The training datasets are sampled from a pool of $12672$ controls, while the test datasets with $N_{\mathrm{test}}=128$ are sampled from another pool of $3200$ and each training/test is repeated $32$ times. The solid lines correspond to the median of the prediction errors $\epsilon_{\mathrm{rms}}$ over the samples, while the shaded area corresponds to the $25-75$ percentile range. The dotted line shows the square root of the variance of the sampled landscape.  Compared to the other feature maps, the sinc kernel model typically shows lower values of $\epsilon_{\mathrm{rms}}$ for large values of $N_{\mathrm{train}}$, while the opposite is true for the Taylor representation.}
    \label{fig:ml}
\end{figure*}

The most significant finding is that the sinc kernel features appear to provide the best prediction performance if $N_{\mathrm{train}}$ is large enough. Even though its prediction error is larger than for the other feature maps when the training dataset is too small, the plots suggest that there is a threshold value for $N_{\mathrm{train}}$ above which the sinc kernel outperforms both the Taylor and Fourier feature maps. This threshold becomes larger as $N, u_{\mathrm{max}}$ increase in value. Meanwhile, for few training samples we see that the Taylor representation generally does best. In the intermediate range, a crossover takes place between Taylor being best for few time step layers, while Fourier works better for more time steps.

We also find qualitatively that the sinc kernel regression is generally more stable and can tolerate smaller values of $\lambda_R$ compared to the other choices. Finally, we find that the inferior performance of the sinc kernel for low $N_{\mathrm{train}}$ can be greatly improved by decreasing the kernel bandwidth to $\omega_{ker} \leq \omega_{max}$, giving rise to
\[ \kappa(\boldsymbol{u}-\boldsymbol{u}') = \prod_{i=1}^{N} \frac{\sin{[ \omega_{\mathrm{ker}} \delta t (u_i - u_i')}]}{\omega_{\mathrm{ker}} \delta t (u_i - u_i')}. \]
The results that we obtain using this strategy are showcased in Fig.~\ref{fig:ml_bandwidth}. There we see that a trade-off between precision on small and large data sets arises, so that $\omega_{ker}$ has to be tuned based on practical considerations regarding the problem and the data resources at hand.
\begin{figure}
    \centering
    \includegraphics[width=0.9\linewidth]{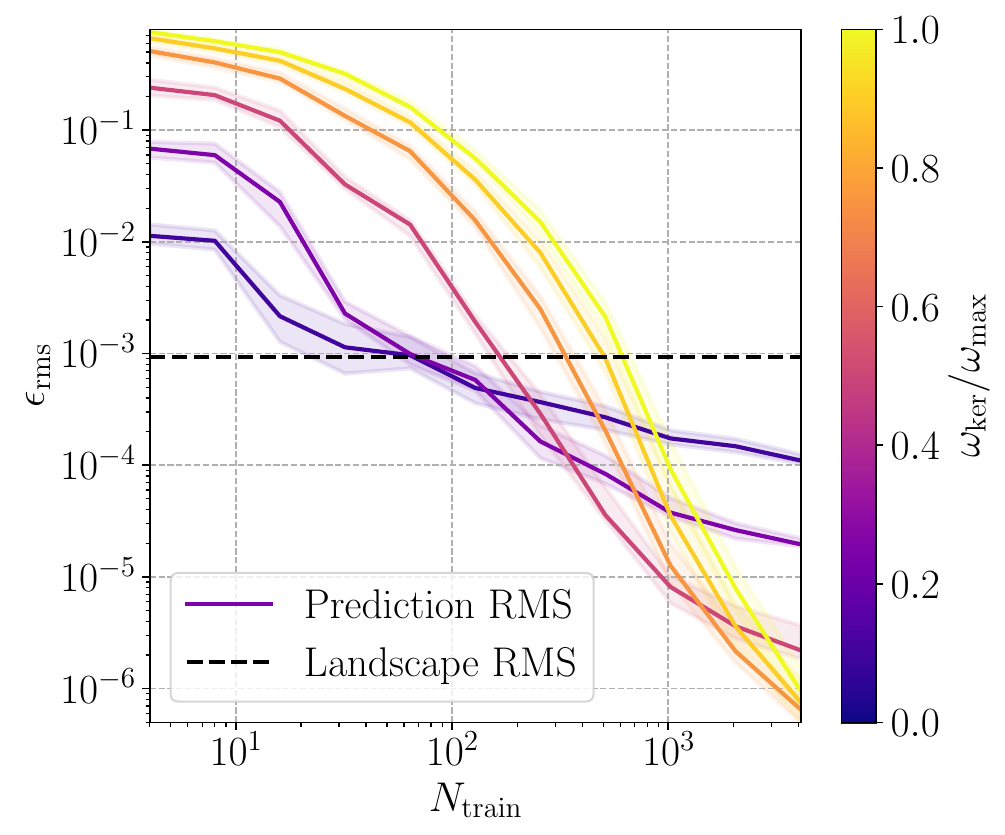}
    \caption{Sinc kernel regression with reduced kernel bandwidth $\omega_{\mathrm{ker}} \leq \omega_{\mathrm{max}}$ on the system given by Eq.~\eqref{eq:ising}. The results shown in the plot are obtained for $Q=5, T=1.0, N=4, u_{\mathrm{max}}=1.0$ and the different colors (from blue to yellow) corresponding to $\omega_{\mathrm{ker}} / \omega_{\mathrm{max}} = 0.1,0.3,0.5,0.8,0.9,1.0$. The training datasets are sampled from a pool of $12672$ points and each training is repeated $32$ times. The median (solid lines) and interquartile range (shaded area) of the prediction errors $\epsilon_{\mathrm{rms}}$ over the samples are shown. The dotted line shows the square root of the variance of the sampled landscape. Reducing $\omega_{\mathrm{ker}}$ considerably reduces prediction error for small training datasets, but also increases it for large datasets. }
    \label{fig:ml_bandwidth}
\end{figure}
\subsection{Analysis of Fourier regression} \label{fourier_regression}
Let us now go back to Fourier features regression and examine the problem more closely. A first insight that we can gain is that the regression problem for bounded sampled controls $\boldsymbol{u}_i \in \mathcal{C}^N$ does not require finding an exact representation of the landscape $J$ over $\boldsymbol{u} \in \mathbb{R}^N$, for which we would indeed need an infinite number of frequencies. All we need to do is to find an approximation that is good enough inside the hypercube $\mathcal{C}^N$. This has important consequences as far as the Fourier spectrum of the approximations is concerned. In order to understand this point, we reformulate the loss we have chosen (we fix $\lambda_R=0$ for simplicity) as the Monte Carlo sampling of an integral:
\[ \sum_{i=1}^{N_{\mathrm{train}}} \frac{|J_{\boldsymbol{w}}(\boldsymbol{u}_i) - J_i|^2}{N_{\mathrm{train}}}  \approx  \int_{\mathcal{C}^N} \frac{d^N\boldsymbol{u}}{(2{u_{\mathrm{max}}})^N} |J_{\boldsymbol{w}}(\boldsymbol{u}) - J(\boldsymbol{u})|^2 \]
where we assume the sampled controls to be drawn from the uniform probability distribution over $\mathcal{C}^N$.
As we saw in Sec.~\ref{lie_fourier}, $J$ can be approximated with arbitrarily low error over $\mathcal{C}^N$ with its Lie-Fourier representation $J_n$ for a large enough $n$.
We now try to solve the regression problem for the integral loss by choosing the model $J_{\boldsymbol{w}}$ as a sum of Fourier components picked from a set of frequencies $\mathcal{E}$ (that we can suppose to be finite, bounded and satisfying the same symmetries as the Lie-Fourier spectrum $(\mathcal{S}^{\Delta}_n)^N$). The integral loss can then be written directly in term of the weights as follows:
\begin{widetext}
\begin{multline} 
\tilde{\mathfrak{L}}(\boldsymbol{w}, \boldsymbol{w}^{\dagger}) = \int_{\mathcal{C}^N} \frac{d^N\boldsymbol{u}}{(2{u_{\mathrm{max}}})^N} \left|\sum_{\boldsymbol{\alpha} \in \mathcal{E}} w_{\boldsymbol{\alpha}}  e^{-i\delta t \boldsymbol{\alpha} \cdot \boldsymbol{u}} - \sum_{\boldsymbol{\omega} \in (\mathcal{S}^{\Delta}_n)^N} c_{\boldsymbol{\omega}}  e^{-i\delta t \boldsymbol{\omega} \cdot \boldsymbol{u}} \right|^2 = \\ \int_{\mathcal{C}^N} \frac{d^N\boldsymbol{u}}{(2{u_{\mathrm{max}}})^N}\left[\sum_{\boldsymbol{\alpha},\boldsymbol{\alpha}' \in \mathcal{E}} w^*_{\boldsymbol{\alpha}'} w_{\boldsymbol{\alpha}}  e^{i\delta t (\boldsymbol{\alpha}' - \boldsymbol{\alpha})\cdot \boldsymbol{u}} + \smashoperator[l]{\sum_{\boldsymbol{\omega},\boldsymbol{\omega}' \in (\mathcal{S}^{\Delta}_n)^N}} c^*_{\boldsymbol{\omega}'} c_{\boldsymbol{\omega}}  e^{i\delta t (\boldsymbol{\omega}' - \boldsymbol{\omega}) \cdot \boldsymbol{u}} - \smashoperator[l]{\sum_{\boldsymbol{\alpha}\in \mathcal{E},\boldsymbol{\omega} \in (\mathcal{S}^{\Delta}_n)^N}} [w^*_{\boldsymbol{\alpha}} c_{\boldsymbol{\omega}}  e^{i\delta t (\boldsymbol{\alpha} - \boldsymbol{\omega}) \cdot \boldsymbol{u}} + c^*_{\boldsymbol{\omega}} w_{\boldsymbol{\alpha}}  e^{i\delta t (\boldsymbol{\omega} - \boldsymbol{\alpha}) \cdot \boldsymbol{u}} ] \right] \\
= \sum_{\boldsymbol{\alpha},\boldsymbol{\alpha}' \in \mathcal{E}} w^*_{\boldsymbol{\alpha}'} w_{\boldsymbol{\alpha}} \tilde{\kappa}(\boldsymbol{\alpha}' - \boldsymbol{\alpha}) + \sum_{\boldsymbol{\omega},\boldsymbol{\omega}' \in (\mathcal{S}^{\Delta}_n)^N} c^*_{\boldsymbol{\omega}'} c_{\boldsymbol{\omega}}  \tilde{\kappa}(\boldsymbol{\omega}' - \boldsymbol{\omega}) - \sum_{\boldsymbol{\alpha}\in \mathcal{E},\boldsymbol{\omega} \in (\mathcal{S}^{\Delta}_n)^N} \left[ w^*_{\boldsymbol{\alpha}} c_{\boldsymbol{\omega}}  \tilde{\kappa} (\boldsymbol{\alpha} - \boldsymbol{\omega}) + c^*_{\boldsymbol{\omega}} w_{\boldsymbol{\alpha}}  \tilde{\kappa}(\boldsymbol{\omega} - \boldsymbol{\alpha}) \right] \label{eq:lossFourier}
\end{multline}
\end{widetext}
where we defined the frequency space kernel $\tilde{\kappa}$ as 
\[\tilde{\kappa}(\boldsymbol{\omega}) = \prod_{\nu=1}^{N} \frac{\sin{ \omega_{\nu} \delta t {u_{\mathrm{max}}}}}{\omega_{\nu} \delta t  {u_{\mathrm{max}}}}.\]

\subsubsection{Discrete frequencies approximation}
Since $\tilde{\kappa}$ has a finite resolution in Fourier space, the loss is to a first approximation only a function of a local average of the weights $w_{\alpha}$ within a certain volume in frequency space. We can study the consequences of this fact by expanding the kernel function for small arguments:
\[ \prod_{\nu=1}^{N} \frac{\sin{\omega_{\nu} \delta t  u_{\mathrm{max}}}}{\omega_{\nu} \delta t  u_{\mathrm{max}}} = 1 - \frac{(||\boldsymbol{\omega}||_2 \delta t {u_{\mathrm{max}}})^2}{6} + o((||\boldsymbol{\omega}||_2 \delta t {u_{\mathrm{max}}})^2) \]
This means that within a ball in frequency space defined by the Euclidean norm $||\cdot||_2$ and with radius $\Delta \omega$, we can approximate the kernel with a constant up to an error of order $(\Delta \omega\, \delta t\, {u_{\mathrm{max}}} )^2 $. Since the coefficients relating to frequencies closer than $\Delta \omega$ are then summed up in the loss function, the regression problem effectively depends only on a smaller subset of frequency modes than the original ones. In practice, this suggests that we can find approximate representations with fewer frequencies than the ones in the Lie-Fourier representation, as long as we are just interested in a bounded control region of interest. Estimating how many of these frequencies should be used is a hard problem because it is related to a $N$-dimensional sphere stacking problem \cite{HyperspherePacking}. 

\subsubsection{Flat landscape approximation}
It is perhaps even more interesting to take this reasoning to its extreme logical consequences and consider a regression model containing only the zero frequency mode $w_{\boldsymbol{0}}$. By substituting the small argument expansion for the kernel into Eq.~\eqref{eq:lossFourier} and neglecting the terms $o((\delta t {u_{\mathrm{max}}}||\boldsymbol{\omega}||_2)^2)$ \footnote{One can check that by carrying on this infinitesimal quantity in the calculations the result is the same}, we obtain the following
\begin{multline*} 
    \tilde{\mathfrak{L}}(w_{\boldsymbol{0}}, w_{\boldsymbol{0}}^*) =  |w_{\boldsymbol{0}}|^2 + \smashoperator[lr]{\sum_{\boldsymbol{\omega},\boldsymbol{\omega}' \in (\mathcal{S}^{\Delta}_n)^N}} c^*_{\boldsymbol{\omega}'} c_{\boldsymbol{\omega}}  (1 - \frac{\delta t^2 {u_{\mathrm{max}}}^2}{6} ||\boldsymbol{\omega}' - \boldsymbol{\omega}||_2^2) \\- \sum_{\boldsymbol{\omega} \in (\mathcal{S}^{\Delta}_n)^N} ( w^*_{\boldsymbol{0}} c_{\boldsymbol{\omega}}   + c^*_{\boldsymbol{\omega}} w_{\boldsymbol{0}} )(1 - \frac{\delta t^2 {u_{\mathrm{max}}}^2}{6}||\boldsymbol{\omega}||_2^2).
\end{multline*}
We can find the solution $\bar{w}_{\boldsymbol{0}}$ to the regression problem by solving the equation $\grad_{\boldsymbol{w}^{\dagger}} \mathfrak{L} = \boldsymbol{0}$. To the zeroth order approximation the solution is given (up to a phase) by 
\[ \bar{w}_{\boldsymbol{0}} = \sum_{\boldsymbol{\omega} \in (\mathcal{S}^{\Delta}_n)^N} c_{\boldsymbol{\omega}} = J_n(\boldsymbol{0}), \]
as defined by Eq.~\eqref{eq:fid_diff}. We can substitute again into the loss function to evaluate how well the constant model approximates the landscape. We obtain:
\begin{multline*} 
\mathfrak{L}(\bar{w}_{\boldsymbol{0}}, \bar{w}_{\boldsymbol{0}}^*) = \frac{\delta t ^2{u_{\mathrm{max}}^2}}{6} \smashoperator[lr]{\sum_{\boldsymbol{\omega},\boldsymbol{\omega}' \in (\mathcal{S}^{\Delta}_n)^N}} c^*_{\boldsymbol{\omega}'} c_{\boldsymbol{\omega}} (||\boldsymbol{\omega}||_2^2 + ||\boldsymbol{\omega}'||_2^2 - ||\boldsymbol{\omega}' - \boldsymbol{\omega}||_2^2  )  \\
\leq \frac{(\delta t {u_{\mathrm{max}}})^2}{6} \left| \sum_{\boldsymbol{\omega} \in (\mathcal{S}^{\Delta}_n)^N} c_{\boldsymbol{\omega}} \right|^2 2 N \omega_{\mathrm{max}}^2 \leq \frac{({u_{\mathrm{max}}} L)^2}{3N} \xrightarrow{N \xrightarrow{} \infty} 0,
\end{multline*}
where in the last step we made use of the boundedness of the landscape $|J_n(\boldsymbol{0})|^2 \leq 1$.

This results implies a remarkable property of any dynamical landscape with finite time-energy budget $L$: as the number of controls $N$ increase, the landscape becomes increasingly close to a flat landscape when distance between landscapes is measured using the $L_2$ norm. As we will see in the next Section, this is related to the appearance of (polynomial) barren plateaux. In relation to landscape learning, this suggests that the sum of square residuals for uniform finite samples is not a well defined loss function, unless $N$ is fixed, which in practice forces us to introduce a cutoff in control pulse discretization (or, equivalently, in frequency) in order to obtain a well defined regression problem.
Since numerical experiments necessarily deal with finite values of $N$, it is possible in practice to use the loss $L_2$ for regression, but these considerations cast a doubt on the relevance of scaling analysis for this kind of problems. 

In order to surpass the difficulties arising from this artificial cutoff, it is most likely to be necessary to employ stronger notions of distance (e.g. using the sup-norm $L_{\infty}$), non-uniform sampling strategies, or a combination of the two. Since the landscapes we are studying are, after all, objective functions to be optimized, sampling could also be provided by an optimizer \cite{Dalgaard22, Beato24}. This way, the focus of the problem would shift from learning the landscape itself to learning the landscape as it "appears" to the optimizer. 
\section{Landscape metrics} \label{metrics}
We now discuss the relevance of the results concerning quantum dynamical landscapes that we derived up to now in the context of optimization.
Finding the controls that minimize the landscape $J(\boldsymbol{u}) = \langle \hat{O}(\boldsymbol{u})\rangle$ is the problem we have to solve both in the context of optimal control (for which the landscape often corresponds to the infidelity, so that $\hat{O} = \hat{I} - \ket{\chi}\bra{\chi}$) and VQA. 

As we have seen in the previous section, the landscape can be represented by means of a kernel with a bandwidth which is set by the time and energy scales of the problem.
Therefore, analogously to classical optics, the finite bandwidth of the landscape should prevent us from distinguishing details below a certain length scale $l$ in control space, at least up to a certain tolerance $\delta J$. But then the landscape can be discretized up to an error $\delta J$ by sampling it on points standing at a distance $l$ from one another.
In order to make these considerations more precise, we can use the formula from Lemma \ref{lemma:lipschitz}. Then, we can see that for two controls $\boldsymbol{u},\boldsymbol{u'}$ whose infidelity differs by $\delta J$, the following holds true
\begin{equation}
||\boldsymbol{u} - \boldsymbol{u'}||_1 \geq \frac{2\delta J}{\omega_{\mathrm{max}} \delta t} =:l. 
\label{eq:min_dist}
\end{equation}
As anticipated, the two controls have to be separated by a certain minimum distance $l$ (which depends also on $\delta J$), that has to be measured using the taxicab norm $||\cdot||_1$.
This simple fact has multiple consequences on the landscape properties.

\subsection{General Quantum Speed Limit}

We first analyse the general structure of the Quantum Speed Limit (QSL), which can be defined as the minimum time $T_{QSL}$ needed to perform a control task, such as state transfer to the target state $\ket{\chi}$. We can obtain this from the infidelity landscape assuming the value $J=0$ at least once within the bounded control region of interest $\mathcal{C}^N = [-{u_{\mathrm{max}}},{u_{\mathrm{max}}}]^N$.
Even though we cannot say anything about the QSL for a single target state, we can put a lower bound on the time $T^{\perp}_{QSL}$ needed to reach both the original target state $\ket{\chi}$ and an orthogonal state $\ket{\chi_{\perp}}, \braket{\chi | \chi_{\perp}}=0$.
Although this definition of the QSL looks rather artificial, if the system exhibits full state controllability it will also satisfy this condition, so that $T_{QSL}^{full} \geq T^{\perp}_{QSL}$.
The bound on $T^{\perp}_{QSL}$ (and therefore on $T_{QSL}^{full}$) can be derived as follows: Since by definition there are controls $\boldsymbol{u}^*, \boldsymbol{u}^*_{\perp} \in \mathcal{C}^N$ such that $U(\boldsymbol{u}^*)\ket{\psi} = \ket{\chi}$ and $U(\boldsymbol{u}^*_{\perp})\ket{\psi} = \ket{\chi_{\perp}}$, the infidelity landscape will assume there the values $J(\boldsymbol{u}^*)= 1 -|\braket{\chi|\chi}|^2 = 0$ and $J(\boldsymbol{u}^*_{\perp})= 1-|\braket{\chi|\chi_{\perp}}|^2 = 1$. But then as we have already seen, Lemma \ref{lemma:lipschitz} implies that 
\[||\boldsymbol{u}^* - \boldsymbol{u}^*_{\perp}||_1 \geq |J(\boldsymbol{u}^*) - J(\boldsymbol{u}^*_{\perp})| K^{-1} \geq \frac{2}{\omega_{\mathrm{max}} \delta t} \]
The two controls inside the hypercube $\mathcal{C}^N$ which are furthest away from each other stand at opposite corners, so that we have  
\[ ||\boldsymbol{u}^* - \boldsymbol{u}^*_{\perp}||_1 \leq ||\boldsymbol{u}_{\mathrm{max}} - (-\boldsymbol{u}_{\mathrm{max}})||_1 = 2{u_{\mathrm{max}}}N,\]
which plugged back in the previous equation with $T=N\delta t$ gives the result:
\begin{equation}
    T^{\perp}_{QSL} \geq \frac{1}{\omega_{\mathrm{max}}u_{\mathrm{max}}} = \frac{1}{u_{\mathrm{max}}|\lambda_{\mathrm{max}} - \lambda_{\mathrm{min}}|},
\end{equation}
where we remind that $\omega_{\mathrm{max}} = |\lambda_{\mathrm{max}}-\lambda_{\mathrm{min}}|$ is the maximum transition angular frequency in the control Hamiltonian ($\hbar=1$). 
Note that since $T^{\perp}_{QSL}$ becomes vanishingly small as the bounds on the controls $u_{\mathrm{max}}$ are enlarged,
this lower bound for the Quantum Speed Limit is clearly not tight in the rather common cases in which the amplitude of the drift Hamiltonian $\hat{H}_d$ constitutes the bottleneck in time for achieving controllability.

\subsection{Trap separation and trap density}
\label{trap_density}
The notion of a minimum distance between distinguishable landscape points can also be used to infer the properties of the landscape around the local minima $\boldsymbol{u}^*$, which we define as global minima inside a neighbourhood
\begin{equation} \mathcal{M} = \{ \boldsymbol{u}^* \in \mathcal{C}^N |\ || \boldsymbol{u} - \boldsymbol{u}^*||_1 \leq \epsilon \Rightarrow J(\boldsymbol{u}) \geq J(\boldsymbol{u}^*) \}
\label{eq:minima}
\end{equation}
for some given $\epsilon >0$.

As a visualization, we can picture the quantum dynamical landscape $J(\boldsymbol{u})$ as a topographic landscape with mountains and valleys, filled up with water up until a certain level $\bar{J}$ which is the same across the region we are considering. 
The water will split up into disconnected water pockets forming a collection of lakes, as pictured in Fig.~\ref{fig:mountains_lakes}. We can interpret the water level as the current infidelity value and the lakes would be the control regions we have to explore to find better controls.
We define the depth of each lake as the difference in height between the water level and the lowest point of the landscape inside the lake (i.e., the local minimum).
\begin{figure}
    \centering
    \includegraphics[width=0.95\linewidth]{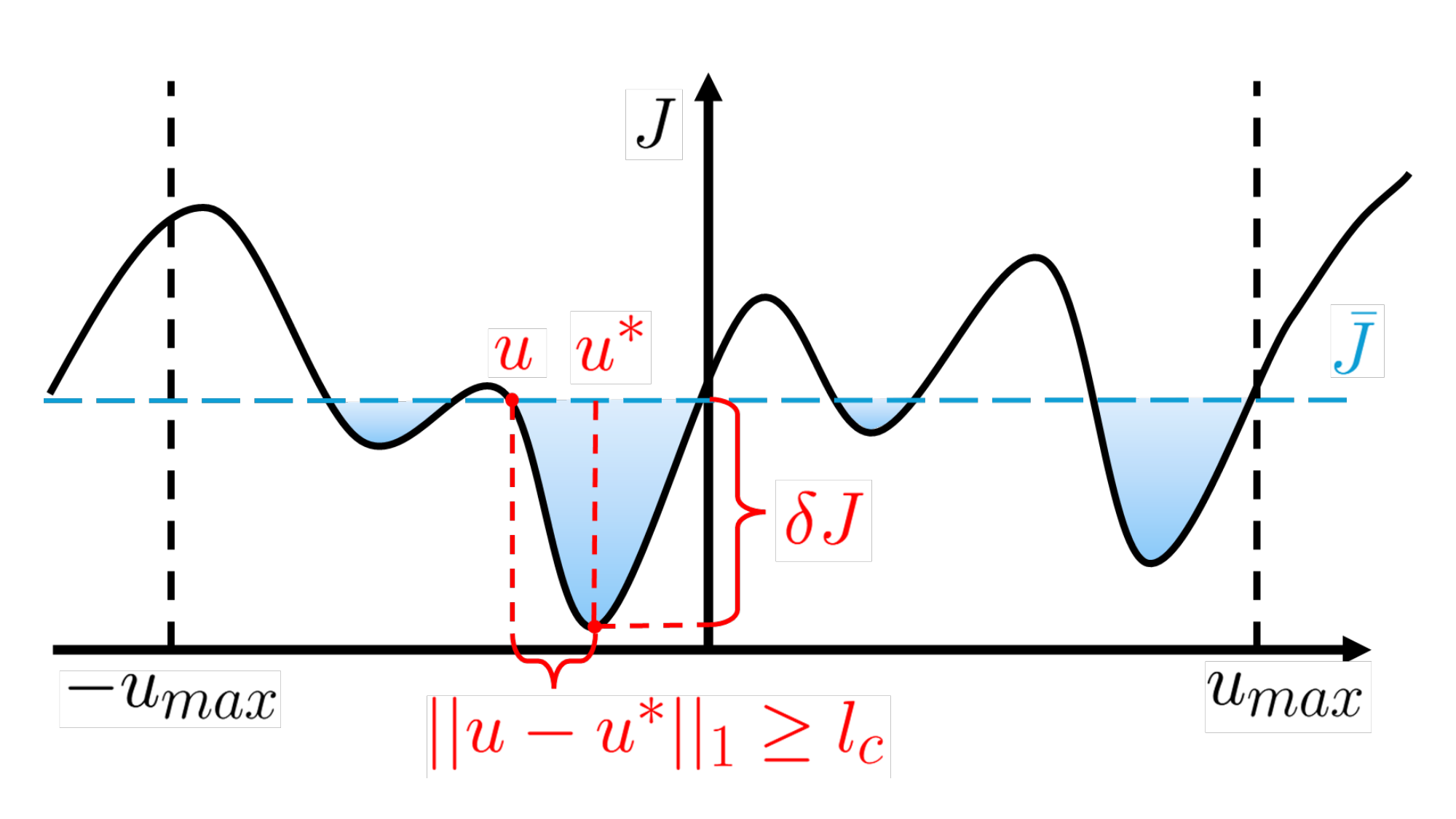}
    \caption{The quantum dynamical landscape $J$ as a topographical landscape filled with water up to $\bar{J}$. Climbing out of the water from a minimum of depth $\delta J$ requires moving in control space over a distance of at least $l_c$. Even though the size of a lake $l_c$ diverges with the control space dimensionality $N$, the underwater control volume fraction shrinks to zero (see Sec.~\ref{trap_density}).}
    \label{fig:mountains_lakes}
\end{figure}
But then, we can use the result on Lipschitz continuity to relate the extent of the lakes to their depth, i.e., the improvement in infidelity that can be achieved by exploring it.
By noticing that the local minima as we defined them are also critical points $\grad J(\boldsymbol{u}^*)=\boldsymbol{0}$, we can use the second version of the inequality in Lemma \ref{lemma:lipschitz}, that is
\begin{equation} 
||\boldsymbol{u} - \boldsymbol{u}^{\star}||_1 \geq \sqrt{\frac{2\delta J}{(\omega_{\mathrm{max}} \delta t)^2}} = \sqrt{2 \delta J}\frac{N}{L},
\end{equation}
which is a stronger condition than Eq.~\eqref{eq:min_dist} when $2J < 1$, and where as before $L=T\omega_{\mathrm{max}}$. We can then define a new control length scale $l_c$
\begin{equation}
\label{eq:neigh_min}
l_c = \beta \frac{N}{L}
\end{equation}
where $ \beta = \max\{\sqrt{2 \delta J}, 2 \delta J \}$.
Then, given a lake of depth $\delta J$, starting from its shore we need to travel a distance (measured using the taxicab norm $||\cdot||_1$) of at least $l_c$ in order to get to the bottom of the lake $\boldsymbol{u}^*$ (the local minimum). In other terms, Eq.~\eqref{eq:neigh_min} implies that the connected neighbourhood of the local minimum where $|J(\boldsymbol{u}) - J(\boldsymbol{u}^*)| < \delta J$ (corresponding to the surface of the lake in our picture) contains a taxicab ball $\mathcal{T}^{(N)}_{l_c}(\boldsymbol{u}^{\star})$ of radius $l_c$ centered around $\boldsymbol{u}^{\star}$:
\[ \mathcal{T}^{(N)}_{l_c}(\boldsymbol{u}^{\star}) = \{ \boldsymbol{u} \in \mathbb{R}^N : ||\boldsymbol{u} - \boldsymbol{u}^{\star}||_{1} \leq l_c \}.\]
This means that to every local minimum of depth $\delta J$ is associated a volume in control space of at least 
\[ \Vol\mathcal{T}^N_{l_c} = \frac{(2l_c)^N}{N!} = \left( \frac{2\beta}{L} \right)^N \frac{N^N}{N!}. \]
As we cannot fit more than $(\Vol\mathcal{T}^N_{l_c})^{-1}$ of these balls in the unit volume, this gives us an upper bound to their average density within the bulk volume $(2u_{max})^N$, this way neglecting surface effects determined by finite bounds on the control region of interest.
We can study the large $N$ behaviour of the density bound by means of the Stirling formula:
\begin{equation}
(\Vol\mathcal{T}^N_{l_c})^{-1} = \exp(-N \log(e \frac{2\beta}{L}) + \mathcal{O}(\log N)), 
\label{eq:trapdensity}
\end{equation}
so that the maximum density is exponentially suppressed if $2e\beta > L$ and grows exponentially for $2e\beta < L$
while the marginal case $2e\beta L^{-1} = 1$ requires further inspection of the remainder $\mathcal{O}(\log N)$.
On one hand, this suggests a way to bound the number of minima in a given volume but on the other hand, since the radius $l_c$ increases as $N$ is scaled up, Eq.~\eqref{eq:trapdensity} is only reliable for unbounded problems $u_{\mathrm{max}}=\infty$, which is a typical assumption in VQA settings
\footnote{In order to get around this difficulty, we could instead bound the number of approximate isolated critical points (without the need of specifying if they are minima or their depth as we defined it) by making use of the Taylor representation of the landscape, which we discussed in Sec.~\ref{taylor}. This problem is equivalent to finding the isolated zeros of a system of multivariate polynomials, whose number is bounded by Bezout's theorem and generalizations thereof \cite{Bernshtein75}. This would potentially be informative to bound the complexity of the landscape in situations when a large number of suboptimal solutions are expected, which typically happens for critically constrained systems \cite{Bukov18, Bukov19}.}.

\subsection{Landscape ruggedness}
Another relevant property related to landscape minima is the ruggedness, which instead is a local measure of their sharpness. It can be defined as the average diagonal element of the Hessian evaluated at the minima \cite{Dalgaard22}
\[ \rho = \frac{1}{N\#\mathcal{M}} \sum_{\boldsymbol{u} \in \mathcal{M}} \sum_{\nu=1}^{N} \partial^2_{\nu}J(\boldsymbol{u}), \]
where $\mathcal M$ is defined by Eq.~\eqref{eq:minima}. We can readily give an upper bound for this quantity by using the derivatives bound from Lemma \ref{lemma:derivatives} for $P=2$:
\[ |\rho| \leq \frac{1}{N\#\mathcal{M}} \sum_{\boldsymbol{u} \in \mathcal{M}} \sum_{\nu=1}^{N} |\partial^2_{\nu}J(\boldsymbol{u})| \leq \frac{(\omega_{\mathrm{max}} \delta t )^2}{2}. \]
It is worth noting that for fixed total time $T$, $\rho$ vanishes as $N$ increases.
Since the Hessian $\boldsymbol{H}$ is the leading contribution to the Taylor expansion of the landscape near a local minimum $\grad{J}(\boldsymbol{u}_0)=\boldsymbol{0}$, a large ruggedness implies a low robustness of the solution with respect to small perturbations.
Using again Lemma \ref{lemma:lipschitz}, we can upper bound this error. Given a small control deviation $||\boldsymbol{u}-\boldsymbol{u}^{\star}||_1\omega_{\mathrm{max}} \delta t \ll 1$, the stronger version of the inequality for critical points gives rise to
\[ |J(\boldsymbol{u}) - J(\boldsymbol{u}^{\star})| \leq  \frac{(  \omega_{\mathrm{max}}\delta t)^2}{2} ||\boldsymbol{u}-\boldsymbol{u}^{\star}||^2_{1},\]
which, unlike $\rho$, allows to quantitatively relate control errors  to fidelity variations.

\subsection{Barren plateaux}
Shifting the focus away from local minima, we can now consider a metric which is often studied in the context of landscape optimization (especially for VQA applications), namely the variance of the gradient over the controls, defined as follows
\[ \Var_{\boldsymbol{u} \in \mathcal{C}^N}[ \partial_{\nu} J(\boldsymbol{u}) ]= \mathbb{E}_{\boldsymbol{u} \in \mathcal{C}^N}[ |\partial_{\nu} J|^2 ] - |\mathbb{E}_{\boldsymbol{u} \in \mathcal{C}^N}[ \partial_{\nu} J ]|^2 \]
where expectation values are integral averages evaluated over the control hypercube $\mathcal{C}^N$ via
\[\mathbb{E}_{\boldsymbol{u} \in \mathcal{C}^N}[ g ] = \left(\frac{1}{2{u_{\mathrm{max}}}}\right)^N \int_{\mathcal{C}^N} c^N\boldsymbol{u}\ g(\boldsymbol{u}). \]
An exponential suppression of the latter with respect to some resource (number of qubits, circuit depth, etc.) is customarily referred to as the problem of barren plateaux (or, alternatively, of vanishing gradients). 
In the literature these averages are often computed over unbounded controls $\boldsymbol{u} \in \mathbb{R}^N$, which here corresponds to taking the limit $u_{\mathrm{max}} \to \infty$. Although this assumption simplifies the derivations, in practice the available controls are always bounded by experimental constraints. For this reason we try here to prove as much as possible for the case of bounded controls.

The Lie-Fourier representation of the landscape allows us to directly relate the variances of any derivative of $J_n$ to the representation coefficients. This way we can prove upper bounds which are also valid for the true landscape $J$:
\begin{lemma}[Variance over bounded controls]
\label{lemma:variance_bounded}
\begin{widetext}
For any integer $n \geq 1$, $P\geq 0$ and $1 \leq \nu_1,\dots,\nu_P \leq N$ we have:
\begin{multline*} 
\Var_{\boldsymbol{u} \in \mathcal{C}^N} \left[ \left( \prod_{p=1}^P \partial_{\nu_p} \right) J_n \right] = \delta t^{2P} \sum_{\mathclap{\boldsymbol{\omega}, \boldsymbol{\omega'} \in (\mathcal{S}^{\Delta}_n)^N\backslash \{ \boldsymbol{0}\}}} c^*_{\boldsymbol{\omega}}c_{\boldsymbol{\omega'}}(\prod_{p=1}^P \omega_{\nu_p}\omega_{\nu_p}') (\tilde{\kappa}(\boldsymbol{\omega} - \boldsymbol{\omega'}) - \tilde{\kappa}(\boldsymbol{\omega})\tilde{\kappa}(\boldsymbol{\omega'}) ),\ \ \tilde{\kappa}(\boldsymbol{\omega}) = \prod_{\nu =1}^{N} \frac{\sin{\delta t \omega_{\nu} u_{\mathrm{max}}}}{\delta t \omega_{\nu} u_{\mathrm{max}}}
\end{multline*}
Moreover, the following upper bounds hold:
\[
\Var_{\boldsymbol{u} \in \mathcal{C}^N}\left[\left(\prod_{p=1}^P \partial_{\nu_p}\right) J_n,J\right] 
\leq  \min \left[ \frac{L^{2P}}{4N^{2P}} , \frac{(L^{P+1} u_{\mathrm{max}})^2}{3N^{2P+1}} + o\left(\frac{L^{2P+2}}{N^{2P+1}}\right) \right] .
\]
\end{widetext}
\end{lemma}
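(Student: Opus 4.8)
The plan is to prove the exact identity for $J_n$ by a direct Fourier computation, read off the two upper bounds from it, and finally pass to the true landscape $J$ by the same uniform-convergence argument used for Lemma~\ref{lemma:derivatives}.

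First I would differentiate the Lie--Fourier series in Eq.~\eqref{eq:fid_diff} term by term; each $\partial_{\nu_p}$ brings down a factor $-i\delta t\,\omega_{\nu_p}$, so that $(\prod_p\partial_{\nu_p})J_n=(-i\delta t)^P\sum_{\boldsymbol{\omega}}c_{\boldsymbol{\omega}}(\prod_p\omega_{\nu_p})e^{-i\delta t\boldsymbol{\omega}\cdot\boldsymbol{u}}$. The only integral needed is the average of a single Fourier mode over the hypercube, which factorizes across coordinates into a product of $\sinc$ functions, giving exactly $\mathbb{E}_{\boldsymbol{u}\in\mathcal{C}^N}[e^{-i\delta t\boldsymbol{\omega}\cdot\boldsymbol{u}}]=\tilde{\kappa}(\boldsymbol{\omega})$. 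Assembling the second moment (a double sum carrying $\tilde{\kappa}(\boldsymbol{\omega}-\boldsymbol{\omega'})$, using that $\tilde{\kappa}$ is even) and subtracting the squared mean (a double sum carrying $\tilde{\kappa}(\boldsymbol{\omega})\tilde{\kappa}(\boldsymbol{\omega'})$) produces the stated kernel combination. The restriction to $\boldsymbol{\omega},\boldsymbol{\omega'}\neq\boldsymbol{0}$ is automatic, since $\tilde{\kappa}(\boldsymbol{0})=1$ makes every term containing a zero frequency vanish through $\tilde{\kappa}(\boldsymbol{\omega}-\boldsymbol{\omega'})-\tilde{\kappa}(\boldsymbol{\omega})\tilde{\kappa}(\boldsymbol{\omega'})=0$.

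For the first bound I would use only that the derivative is a bounded real quantity: Lemma~\ref{lemma:derivatives} gives $|(\prod_p\partial_{\nu_p})J_n|\le(\omega_{\mathrm{max}}\delta t)^P/2$, and $\Var\le\mathbb{E}[X^2]\le(\sup|X|)^2$ then yields $(\omega_{\mathrm{max}}\delta t)^{2P}/4=L^{2P}/(4N^{2P})$ after substituting $\omega_{\mathrm{max}}\delta t=L/N$. For the tighter large-$N$ bound I would insert the small-argument expansion $\tilde{\kappa}(\boldsymbol{\omega})=1-\tfrac{1}{6}(\delta t\,u_{\mathrm{max}})^2\|\boldsymbol{\omega}\|_2^2+o(\cdot)$ from Sec.~\ref{fourier_regression} into the exact formula; the polarization identity $\|\boldsymbol{\omega}\|_2^2+\|\boldsymbol{\omega'}\|_2^2-\|\boldsymbol{\omega}-\boldsymbol{\omega'}\|_2^2=2\,\boldsymbol{\omega}\cdot\boldsymbol{\omega'}$ collapses the kernel bracket to $\tfrac{1}{3}(\delta t\,u_{\mathrm{max}})^2\,\boldsymbol{\omega}\cdot\boldsymbol{\omega'}$ at leading order. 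Writing $\boldsymbol{\omega}\cdot\boldsymbol{\omega'}=\sum_\mu\omega_\mu\omega'_\mu$ factorizes the double sum as $\sum_\mu|S_\mu|^2$ with $S_\mu=\sum_{\boldsymbol{\omega}}c_{\boldsymbol{\omega}}(\prod_p\omega_{\nu_p})\omega_\mu$. The key observation is that $S_\mu$ is, up to the phase $(-i\delta t)^{-(P+1)}$, the order-$(P+1)$ derivative $\partial_{\nu_1}\!\cdots\partial_{\nu_P}\partial_\mu J_n$ evaluated at $\boldsymbol{u}=\boldsymbol{0}$, so Lemma~\ref{lemma:derivatives} at order $P+1$ gives $|S_\mu|\le\omega_{\mathrm{max}}^{P+1}/2$. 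Summing over the $N$ values of $\mu$ and collecting powers via $\delta t\,\omega_{\mathrm{max}}=L/N$ produces a contribution of order $L^{2P+2}u_{\mathrm{max}}^2/N^{2P+1}$, which is the stated bound (with room to spare in the constant, and reducing to the flat-landscape estimate of Sec.~\ref{fourier_regression} when $P=0$).

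The main obstacle I anticipate is controlling the neglected higher-order terms of the kernel expansion and certifying that they are genuinely $o(L^{2P+2}/N^{2P+1})$ rather than merely formally subleading. This is where boundedness of the spectrum (Lemma~\ref{lemma:bandwidth}) is essential: since $|\omega_\nu|\le\omega_{\mathrm{max}}$ one has $\delta t\,u_{\mathrm{max}}\|\boldsymbol{\omega}\|_2\le u_{\mathrm{max}}L/\sqrt{N}\to0$, so the expansion parameter is uniformly small over the support of the coefficients, and the next correction (quartic in the frequencies, bounded by the same derivative-at-the-origin trick now at order $P+2$) carries an extra factor $\sim u_{\mathrm{max}}^2L^2/N$ relative to the leading term. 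Finally, to upgrade every bound from $J_n$ to the true landscape $J$ I would invoke the uniform convergence of $J_n$ and of its derivatives already established for Lemma~\ref{lemma:derivatives}, which lets the inequalities pass to the limit $n\to\infty$.
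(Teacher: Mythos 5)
Your proposal is correct, and it splits naturally into a part that mirrors the paper and a part that does not. The exact identity and the first bound $L^{2P}/(4N^{2P})$ follow essentially the paper's own route: term-by-term integration of the Lie--Fourier series yields $\mathbb{E}[e^{-i\delta t\boldsymbol{\omega}\cdot\boldsymbol{u}}]=\tilde{\kappa}(\boldsymbol{\omega})$, zero-frequency terms cancel by evenness of $\tilde{\kappa}$, and the derivative bound plus $\Var\le\sup^2$ closes the first estimate --- with the one caveat that for $P=0$ you must first center the landscape (replace $J_n$ by $J_n-\tfrac12$, using shift-invariance of the variance, exactly as the paper does with its shift by $\bar{O}$), since otherwise $\sup|J_n|=1$ gives $1$ rather than $\tfrac14$. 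Your derivation of the second, large-$N$ bound is genuinely different. After the small-argument expansion and polarization, the paper bounds the signed double sum by pulling out $\bigl|\sum_{\boldsymbol{\omega}}c_{\boldsymbol{\omega}}\bigr|^2\le 1$ (via $J_n(\boldsymbol{0})-\mathbb{E}_{\boldsymbol{u}\in\mathbb{R}^N}[J_n]$) times the maximum of the frequency factors, $2N\omega_{\mathrm{max}}^{2P+2}$. You instead keep the quadratic form intact and factorize it as a manifest sum of squares,
\[
\sum_{\boldsymbol{\omega},\boldsymbol{\omega'}} c^*_{\boldsymbol{\omega}}c_{\boldsymbol{\omega'}}\Bigl(\prod_{p=1}^P \omega_{\nu_p}\omega'_{\nu_p}\Bigr)\,\boldsymbol{\omega}\cdot\boldsymbol{\omega'} \;=\; \sum_{\mu=1}^N |S_\mu|^2, \qquad S_\mu = \sum_{\boldsymbol{\omega}} c_{\boldsymbol{\omega}}\Bigl(\prod_{p=1}^P \omega_{\nu_p}\Bigr)\omega_\mu,
\]
and identify $(-i\delta t)^{P+1}S_\mu$ with the order-$(P+1)$ derivative of $J_n$ at $\boldsymbol{u}=\boldsymbol{0}$, bounded by Lemma~\ref{lemma:derivatives}. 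This buys you two things. First, a sharper constant: $u_{\mathrm{max}}^2L^{2P+2}/(12N^{2P+1})$ versus the stated $1/3$, so the lemma follows with room to spare. Second, an unconditionally valid inequality for signed/complex coefficients: the paper's intermediate step bounds a signed double sum by $|\sum_{\boldsymbol{\omega}}c_{\boldsymbol{\omega}}|^2$ times a maximum, which is delicate because cancellations in $\sum_{\boldsymbol{\omega}}c_{\boldsymbol{\omega}}$ need not occur in the frequency-weighted sums $S_\mu$; your sum-of-squares route sidesteps this entirely while still landing on a bound independent of spectral details. Your error control is also more systematic than the paper's: the quartic kernel corrections are again expressible through derivatives at the origin, now of order $P+2$, suppressed by $u_{\mathrm{max}}^2L^2/N$ relative to the leading term, and the uniform smallness of the expansion parameter $\delta t\,u_{\mathrm{max}}\|\boldsymbol{\omega}\|_2\le u_{\mathrm{max}}L/\sqrt{N}$ over the bandlimited spectrum makes the $o(\cdot)$ genuine. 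The final passage from $J_n$ to $J$ via uniform convergence of derivatives and of expectations over compact sets coincides with the paper's argument.
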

\begin{proof}
See App.~\ref{appendix_c} for a proof valid in the case of multiple controls and a generic observable $\hat{O}$. 
\end{proof} 

It is interesting to note that, similarly to what was already observed in Sec.~\ref{fourier_regression}, the variance over the hypercube $\mathcal{C}^N$ could be expressed by means of a quadratic form of the Lie-Fourier coefficients featuring a modified version of the sinc kernel $\tilde{\kappa}(\boldsymbol{\omega})$ where the bandwidth $\delta t u_{\mathrm{max}}$ is now set by the size of the integration region. Since the formula holds in general for any set of frequencies and coefficients, it also proves that this quadratic form is semi-positive definite.

The upper bound in Lemma \ref{lemma:variance_bounded} shows that the variance for bounded controls increases with the time-energy budget $L$ and it decreases with the number of controls or circuit layers $N$, the dependence being polynomial in both cases, while in the same limit being exponentially suppressed with the order of the derivatives $P$. Since these upper bounds are independent on the details of the spectrum (other than $\omega_{\mathrm{max}}$), they can be understood as constraints on the best-case scenario for barren plateaux across quantum systems as set by time, depth and bandwidth limitation for their Lie-Fourier expansion.

The upper bound for the variance implies that even for $P=0$ (which corresponds to not taking any derivatives) this quantity goes to zero for large $N$ if $L=T \omega_{\mathrm{max}}$ is kept constant. As a consequence, the landscape becomes flatter and the control region $\mathcal{R}_{\overline{\delta J}}(N)$ where $J$ is smaller than its average by more than a finite amount $\overline{\delta J} > 0$, which we define as
\[ \mathcal{R}_{\overline{\delta J}}(N) = \{\boldsymbol{u} \in \mathcal{C}^N\ \text{s.t.}\ \mathbb{E}_{\boldsymbol{u} \in \mathcal{C}^N}[J] - J(\boldsymbol{u}) > \overline{\delta J} \}, \]
shrinks to zero in volume compared to the control hypercube. In fact we have that:
\begin{multline*}
    \Var_{\boldsymbol{u} \in \mathcal{C}^N} J =  \int_{\mathcal{C}^N}  \frac{d^N \boldsymbol{u}}{\Vol \mathcal{C}^N} |J(\boldsymbol{u}) - \mathbb{E}_{\boldsymbol{u} \in \mathcal{C}^N}[J]|^2 \geq \\
     \int_{\mathcal{C}^N}  \frac{d^N \boldsymbol{u}}{\Vol \mathcal{C}^N} |J(\boldsymbol{u}) - \mathbb{E}_{\boldsymbol{u} \in \mathcal{C}^N}[J]|^2 \Theta(\mathbb{E}_{\boldsymbol{u} \in \mathcal{C}^N}[J] - J(\boldsymbol{u}) - \overline{\delta J}) \\
    \geq  \frac{(\overline{\delta J})^2}{\Vol \mathcal{C}^N} \int_{\mathcal{C}^N}  d^N \boldsymbol{u}\  \Theta(\mathbb{E}_{\boldsymbol{u} \in \mathcal{C}^N}[J] - J(\boldsymbol{u}) - \overline{\delta J}) \\
    = (\overline{\delta J})^2 \frac{\Vol \mathcal{R}_{\overline{\delta J}}(N)}{\Vol \mathcal{C}^N}.
\end{multline*}
But then by making use of the asymptotic upper bound from Lemma~\ref{lemma:variance_bounded} for $P=0$, we conclude that
\[ \frac{\Vol \mathcal{R}_{\overline{\delta J}}(N)}{\Vol \mathcal{C}^N} \leq \mathcal{O}\left(\frac{1}{N}\right) \xrightarrow[]{N \to \infty}0.\]
This fact is closely related to the discussion in Sec.~\ref{fourier_regression} and also clarifies what happens to the volume surrounding the minima that we discussed previously (the overall lake surface). In fact, if we choose the water level to be below the landscape average $\bar{J} < \mathbb{E}_{\boldsymbol{u} \in \mathcal{C}^N}[J]$ and fix $\overline{\delta J} = \mathbb{E}_{\boldsymbol{u} \in \mathcal{C}^N}[J] - \bar{J}$, then $\mathcal{R}_{\overline{\delta J}}(N)$ is the part of the control landscape underwater. The fact that the volume fraction associated to this control region shrinks to zero as $N \to \infty$ could appear somewhat surprising, given that the minimum distance $l_c \propto N$ we have to travel to get from shore to bottom diverges. But the two facts are not necessarily mutually exclusive, as the balls $\mathcal{T}^N_{l_c}$ will eventually not completely fit inside $\mathcal{C}^N$ as $N$ increases, and the changes in volume determined by the growth in dimensionality are hard to picture intuitively. 

Unfortunately we cannot extend all the results for the true landscape $J$ from Lemma \ref{lemma:variance_bounded} to unbounded controls by just taking the limit $u_{\mathrm{max}} \to \infty$, because the limits in the Lie product expansion order $n$ and control region size $u_{\mathrm{max}}$ cannot be exchanged. 
Instead, if we work with a finite $n$, i.e. in the typical VQA/PQC setting, where the circuit has finite depth and therefore the Fourier representation is finite and exact, then that is possible. As noticed before, the $n=1$ case correspond physically to an interleaved circuit, and the frequencies in the Fourier representation $\mathcal{S}^{\Delta}_1=\mathcal{S}^{\Delta}$ correspond only to the differences in control Hamiltonian eigenvalues. Keeping this distinction in mind, we can prove similar results for unbounded controls.
\begin{lemma}[Variance over unbounded controls]
\label{lemma:variance_unbounded}
For any integer $n \geq 1$, $P\geq 0$ and $1 \leq \nu_1,\dots,\nu_P \leq N$ we have:
\[
\Var_{\boldsymbol{u} \in \mathbb{R}^{N}}\left[ \left( \prod_{p=1}^P \partial_{\nu_p} \right) J_n \right] = \delta t^{2P} \sum_{\mathclap{\boldsymbol{\omega} \in (\mathcal{S}^{\Delta}_n)^N\backslash \{ \boldsymbol{0}\}} } |c_{\boldsymbol{\omega}}|^2 \prod_{p=1}^P \omega_{\nu_p}^2
\]
Moreover, the following bounds hold:
\begin{multline*} 
\Var_{\boldsymbol{u} \in \mathbb{R}^{N}}\left[\left(\prod_{p=1}^P \partial_{\nu_p}\right) J_n, J\right] 
\leq \left( \frac{(\delta t \omega_{\mathrm{max}})^P}{2} \right)^2, \\
\sum_{\nu_1, \dots, \nu_P} \Var_{\boldsymbol{u} \in \mathbb{R}^{N}} \left[ \left( \prod_{p=1}^P \partial_{\nu_p} \right) J_n \right] \geq \frac{\Delta J^2 \delta t^{2P}}{4\sum_{\boldsymbol{\omega} \in (\mathcal{S}^{\Delta}_n)^N\backslash \{ \boldsymbol{0} \}} \frac{1}{||\boldsymbol{\omega}||^{2P}_2}} 
\end{multline*}
where we defined $\Delta J$ as the maximum variation in $J_n$,
\[\Delta J := \sup_{\boldsymbol{u} \in \mathbb{R}^{N}} J_n(\boldsymbol{u}) -\inf_{\boldsymbol{u} \in \mathbb{R}^{N}} J_n(\boldsymbol{u}).\]
\end{lemma}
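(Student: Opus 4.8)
The plan is to treat the unbounded average $\mathbb{E}_{\boldsymbol{u}\in\mathbb{R}^N}[\,\cdot\,]$ as the limit of uniform averages over growing cubes $[-R,R]^N$ as $R\to\infty$, which is well defined precisely because $n$ is held \emph{finite}: then $J_n(\boldsymbol{u})=\sum_{\boldsymbol{\omega}\in(\mathcal{S}^{\Delta}_n)^N} c_{\boldsymbol{\omega}}\, e^{-i\delta t\boldsymbol{\omega}\cdot\boldsymbol{u}}$ is a finite trigonometric polynomial. The elementary fact I would establish first is the orthogonality relation $(2R)^{-N}\int_{[-R,R]^N} e^{-i\delta t\boldsymbol{\omega}\cdot\boldsymbol{u}}\,d^N\boldsymbol{u}=\prod_{\nu}\sinc(\delta t\omega_\nu R)\to\delta_{\boldsymbol{\omega},\boldsymbol{0}}$, since each factor tends to $0$ unless $\omega_\nu=0$ (this is just the $u_{\mathrm{max}}\to\infty$ limit of the kernel $\tilde{\kappa}$ of Lemma~\ref{lemma:variance_bounded}). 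Differentiating the Fourier sum term by term sends $c_{\boldsymbol{\omega}}\mapsto\tilde{c}_{\boldsymbol{\omega}}=(-i\delta t)^P\big(\prod_{p=1}^{P}\omega_{\nu_p}\big)c_{\boldsymbol{\omega}}$, so applying the orthogonality relation to $g:=\big(\prod_{p=1}^{P}\partial_{\nu_p}\big)J_n$ gives $\mathbb{E}[g]=\tilde{c}_{\boldsymbol{0}}$ and $\mathbb{E}[|g|^2]=\sum_{\boldsymbol{\omega}}|\tilde{c}_{\boldsymbol{\omega}}|^2$. For $P\ge1$ the prefactor $\prod_{p}\omega_{\nu_p}$ vanishes at $\boldsymbol{\omega}=\boldsymbol{0}$, so $\mathbb{E}[g]=0$ and the $\boldsymbol{\omega}=\boldsymbol{0}$ term drops from the mean-square sum; subtracting $|\mathbb{E}[g]|^2$ then yields exactly $\Var[g]=\delta t^{2P}\sum_{\boldsymbol{\omega}\neq\boldsymbol{0}}|c_{\boldsymbol{\omega}}|^2\prod_{p=1}^{P}\omega_{\nu_p}^2$, which also covers $P=0$ (empty product $=1$, $\mathbb{E}[g]=c_{\boldsymbol{0}}$, leaving $\sum_{\boldsymbol{\omega}\neq\boldsymbol{0}}|c_{\boldsymbol{\omega}}|^2$).

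For the upper bound I would bypass the exact formula entirely. Lemma~\ref{lemma:derivatives}, which holds for both $J_n$ and $J$, gives the pointwise bound $|g(\boldsymbol{u})|\le(\omega_{\mathrm{max}}\delta t)^P/2=:M$. Since the variance never exceeds the mean square, $\Var[g]\le\mathbb{E}[|g|^2]\le M^2$, and because this argument is purely pointwise it survives under any box-averaging definition and hence applies to $J$ as well, explaining the $[\,\cdot\,,J]$ notation.

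For the lower bound, I would first sum the exact variance formula over all tuples $(\nu_1,\dots,\nu_P)\in\{1,\dots,N\}^P$. The weight factorizes, $\sum_{\nu_1,\dots,\nu_P}\prod_{p=1}^{P}\omega_{\nu_p}^2=\prod_{p=1}^{P}\big(\sum_{\nu}\omega_\nu^2\big)=||\boldsymbol{\omega}||_2^{2P}$, so the left-hand side equals $\delta t^{2P}\sum_{\boldsymbol{\omega}\neq\boldsymbol{0}}|c_{\boldsymbol{\omega}}|^2\,||\boldsymbol{\omega}||_2^{2P}$. To introduce $\Delta J$, let $\boldsymbol{u}_+,\boldsymbol{u}_-$ attain (or approximate) the sup and inf, so $\Delta J=J_n(\boldsymbol{u}_+)-J_n(\boldsymbol{u}_-)=\sum_{\boldsymbol{\omega}\neq\boldsymbol{0}}c_{\boldsymbol{\omega}}\big(e^{-i\delta t\boldsymbol{\omega}\cdot\boldsymbol{u}_+}-e^{-i\delta t\boldsymbol{\omega}\cdot\boldsymbol{u}_-}\big)$ (the $\boldsymbol{\omega}=\boldsymbol{0}$ term cancels). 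Splitting each summand as $\big(c_{\boldsymbol{\omega}}||\boldsymbol{\omega}||_2^{P}\big)\cdot\big(||\boldsymbol{\omega}||_2^{-P}(e^{-i\delta t\boldsymbol{\omega}\cdot\boldsymbol{u}_+}-e^{-i\delta t\boldsymbol{\omega}\cdot\boldsymbol{u}_-})\big)$ and applying Cauchy--Schwarz together with $|e^{-i\delta t\boldsymbol{\omega}\cdot\boldsymbol{u}_+}-e^{-i\delta t\boldsymbol{\omega}\cdot\boldsymbol{u}_-}|\le2$ (difference of two unit-modulus numbers) gives $\Delta J^2\le\big(\sum_{\boldsymbol{\omega}\neq\boldsymbol{0}}|c_{\boldsymbol{\omega}}|^2||\boldsymbol{\omega}||_2^{2P}\big)\cdot4\sum_{\boldsymbol{\omega}\neq\boldsymbol{0}}||\boldsymbol{\omega}||_2^{-2P}$. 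Rearranging and multiplying by $\delta t^{2P}$ reproduces the stated lower bound.

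The main obstacle is conceptual rather than computational: the unbounded average is legitimate only because $n$ is finite, so that $J_n$ has a discrete, finite spectrum for which the orthogonality limit holds and the finite sum $\sum_{\boldsymbol{\omega}\neq\boldsymbol{0}}||\boldsymbol{\omega}||_2^{-2P}$ is meaningful (the removable point $\boldsymbol{\omega}=\boldsymbol{0}$ being excluded). This is exactly why the exact formula and the lower bound are stated for $J_n$ and not for $J$: the $n\to\infty$ and $u_{\mathrm{max}}\to\infty$ limits do not commute, as noted in the text preceding the lemma. A minor additional point is that if the extrema of $J_n$ are not attained one works with near-optimizers and passes to the supremum, which does not affect the inequality.
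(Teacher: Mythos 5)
Your proposal is correct, and its first two parts (the exact variance formula via growing cubes with the $\sinc\to$ Kronecker-delta orthogonality limit, and the upper bound via the pointwise derivative bound of Lemma~\ref{lemma:derivatives} combined with $\Var \le \mathbb{E}[|\cdot|^2]$) follow essentially the same route as the paper. Where you genuinely diverge is the lower bound. The paper proceeds in two steps: it first shows $\sum_{\boldsymbol{\omega}\neq\boldsymbol{0}}|c_{\boldsymbol{\omega}}| \geq \Delta J/2$ (by applying the triangle inequality $|J_n - c_{\boldsymbol{0}}| \le \sum_{\boldsymbol{\omega}\neq\boldsymbol{0}}|c_{\boldsymbol{\omega}}|$ at the supremum and infimum), and then minimizes the weighted quadratic form $\delta t^{2P}\sum_{\boldsymbol{\omega}\neq\boldsymbol{0}}|c_{\boldsymbol{\omega}}|^2\|\boldsymbol{\omega}\|_2^{2P}$ subject to that $L_1$ constraint, via Lagrange multipliers plus a fairly lengthy compactness/boundary analysis (Bolzano--Weierstrass, decomposition of the feasible set) to certify that the constrained minimum sits on the equality constraint and away from the irregular points $c_{\boldsymbol{\omega}}=0$. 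You instead write $\Delta J = \sum_{\boldsymbol{\omega}\neq\boldsymbol{0}} c_{\boldsymbol{\omega}}\bigl(e^{-i\delta t\boldsymbol{\omega}\cdot\boldsymbol{u}_+}-e^{-i\delta t\boldsymbol{\omega}\cdot\boldsymbol{u}_-}\bigr)$ and apply Cauchy--Schwarz with the splitting $\bigl(c_{\boldsymbol{\omega}}\|\boldsymbol{\omega}\|_2^{P}\bigr)\cdot\bigl(\|\boldsymbol{\omega}\|_2^{-P}(e_+-e_-)\bigr)$ and $|e_+-e_-|\le 2$, which yields the identical bound in one stroke. The two arguments are dual to each other (the extremal configuration $|c_{\boldsymbol{\omega}}|\propto\|\boldsymbol{\omega}\|_2^{-2P}$ of the paper's optimization is exactly the Cauchy--Schwarz equality case), but yours is shorter and more elementary; what the paper's longer route buys is the explicit identification of the minimizing coefficient configuration, which is not needed for the statement itself. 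Your handling of possibly non-attained extrema via near-optimizers is also sound (the paper sidesteps attainment entirely by working with $\sup$ and $\inf$ through one-sided inequalities), and your remark on why finiteness of $n$ is essential matches the paper's discussion of the non-commuting limits.
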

\begin{proof}
See App.~\ref{appendix_c} for a proof in the case of multiple controls and a generic observable $\hat{O}$. 
\end{proof}
While the upper bound holds for both the Lie-Fourier representations $J_n$ and the landscape $J$ itself, the lower bound only holds for finite $n$. The latter can be derived thanks to the simpler structure of the kernel in the case of unbounded controls, but we do not rule out that it might be possible to derive an equivalent result for bounded controls. 
More specifically, the case $P=1$ (i.e. the gradient variance) gives us a worst-case scenario estimate for the barren plateaux even across different systems and different initial and target states (or observables). We can derive a simplified, looser bound in terms of the smallest non-zero frequency $\omega_{\mathrm{min}} \in \mathcal{S}^{\Delta}_n$, which assures $\forall \boldsymbol{\omega} \in (\mathcal{S}^{\Delta}_n)^N\backslash \{ \boldsymbol{0} \},\ ||\boldsymbol{\omega}||_2^2 \geq \omega^2_{\mathrm{min}}$, so that 
\begin{align*} 
\sum_{\nu=1}^N \Var_{\boldsymbol{u} \in \mathbb{R}^{N}} \left[ \partial_{\nu}  J_n \right] \geq& \frac{\Delta J^2 \delta t^{2}}{4\sum_{\boldsymbol{\omega} \in (\mathcal{S}^{\Delta}_n)^N\backslash \{ \boldsymbol{0} \}} \frac{1}{||\boldsymbol{\omega}||^{2}_2}} \\
\geq& \frac{\Delta J^2 \delta t^{2} \omega_{\mathrm{min}}^2}{4(\#\mathcal{S}^{\Delta}_n)^N}, 
\end{align*}
which recovers the exponential suppression in circuit depth $N$ that is typical in the case of barren plateaux. 
In the case $n=1$ of an interleaved circuit, $\mathcal{S}^{\Delta}_1=\mathcal{S}^{\Delta}$ can be of order $\mathcal{O}(D^2=2^{2Q})$ if $\hat{H}_c$ is non-degenerate, but as we saw in Sec.~\ref{numerics} it can be much smaller in physically relevant systems thanks to degeneracies. In the case of the $Q$-qubits Ising model (with or without longitudinal field), we have $\mathcal{S}^{\Delta} \sim \mathcal{O}(Q^2)$, so that even in the worst case scenario the gradient variance is only exponentially suppressed in the circuit depth $N$, but not in the number of qubits $Q$. This is in contrast to the case of a non-degenerate $\hat{H}_c$, where this analysis does not rule out the existence of fixed depth barren plateux when $Q$ increases. As a conclusive remark, we note that the dependence of gradient variance suppression on the growth of the dynamical Lie algebra \cite{Larocca22} implies that the latter must play a decisive role in the distribution of the Fourier coefficients, deciding how far away the system will be from this bound.
\section{Applications to optimizer design} \label{applications}
The considerations we made in the previous Section are also relevant for the design of optimization algorithms. First of all, the length scale $l$ we derived from Lipschitz continuity constrains how local the search needs to be.  

If we want to obtain an improvement of at least $\delta J$ to our current best control $\boldsymbol{u}$, Eq.~\eqref{eq:min_dist} assures us that such an improvement cannot be found within distance $l$. This is especially relevant to in-situ optimization, such as black-box optimizers like DCRAB \cite{caneva2011chopped,Rach15} and SOMA \cite{preti2022continuous}. For example, taking into account shot noise or other uncertainties, it may be meaningless to try to pick points closer than $l$. For instances of DCRAB based on a simplex search, the length scale can specifically be used to constrain the size of the simplex, for example at the start of the optimization. Alternatively, given the finite resolution of arbitrary waveform generators, one can constrain the minimum vertical resolution so as to attain a given change in the cost function landscape.

Going further still, one can consider a general parameterization of the controls given by a linear transformation of the piece-wise constant basis as prescribed by Ref.~\cite{motzoi2011optimal}, such that 
\[ u(t) = \sum_{i=1}^{N_c}r_i(t) v_i. \]
As we have seen in Corollary \ref{corollary:parametrization}, the derivative properties that we proved for the landscape $J(\boldsymbol{u})$ are still true for the parametrized landscape $\tilde{J}(\boldsymbol{v})$ provided that the basis functions are properly normalized (which we are always free to achieve). Hence, the notion of the length scale $l$ also applies to the new control space.

The convergence rate of the optimizers can also be affected by characterizing landscape properties, especially for model-based control. Gradient optimizers \cite{Khaneja05, DeFouquieres13} in particular will have to estimate step sizes which can be aided by notions of minimum distance as above, as well as bounds on first and second derivatives given by Lemma~\ref{lemma:derivatives}. Similarly quantum Newton search \cite{dalgaard2020hessian} and quasi-Newton search \cite{DeFouquieres13,Goodwin16} can also benefit from bounds on the Hessian. Likewise, the minimum length scale can be helpful to inform stopping conditions, especially in the presence of shot noise or decoherence.

The nature of the landscape is also especially important for sampling based methods such as those using machine learning, e.g.~the aforementioned meta-learning optimizer SOMA \cite{preti2022continuous}, since it can help to set a minimum distance between search points. This minimum distance is especially relevant to discrete point sampling, e.g. within a Monte-Carlo tree search, as used in global reinforcement learning \cite{dalgaard2020global}, or in grid search methods such as DIRECT \cite{Jones93}. In fact, the Lipschitz continuity of the landscape calls for a wider adoption of global optimizers that are specific to this class of functions, improving convergence with respect to more naive approaches \cite{Shubert72, Jones93, Nicholas15, Liu15}.

As an illustrative example, we can apply the upper bound for the Lipschitz constant to the DIRECT algorithm $\cite{Jones93}$. Appropriate bounds for extensions of this algorithm \cite{Liu15, Nicholas15, Kokail19} can be derived in a similar way. Since DIRECT samples $J$ on a rectangular grid, we can assume $\boldsymbol{u'} -\boldsymbol{u} = h \boldsymbol{e}_{\nu}$ for some canonical basis vector $\boldsymbol{e}_{\nu}$ and $h > 0$ \footnote{In the original paper the quantity $h$ is called $\delta$}. But then we have the following result
\[ |J(\boldsymbol{u}) - J(\boldsymbol{u'})| \leq \frac{\omega_{\mathrm{max}} \delta t}{2} ||\boldsymbol{u'} -\boldsymbol{u}||_1 \leq \frac{\omega_{\mathrm{max}} \delta t h}{2}. \]
This means that we can use the upper bound $K \leq \omega_{\mathrm{max}} \delta t/2$ within DIRECT as an additional condition during the selection of potentially optimal hyperrectangles to decrease the number of function evaluations.
\section{Conclusions}\label{conclusions}
In this paper we derived from first principles and under very general assumptions the main properties of three different feature map representations of a quantum dynamical landscape. The latter is the expectation value of an observable over the output state of a generalized Parametrized Quantum Circuit, which is equivalent to a controlled quantum system with stepwise-constant controls, and generalizes Variational Quantum Algorithm circuits. The feature map representations are approximations of the landscape, which we use to study the properties of the landscape itself, and as  physics-informed models for supervised learning, with the aim of informing the development of methods for quantum cost function optimization.

First, we obtained a Fourier representation \cite{Schuld21} by means of a Lie-Trotter approximation of the dynamics. We showed that the resulting frequency spectrum fills up densely a finite hypercube, with size given by the maximum transition frequency in the control Hamiltonian. We proved analytically some important properties of the representation coefficients which are related to boundedness and discrete symmetries, which are useful for the further development of simulation and learning algorithms \cite{Rudolph23}. 
We found numerically in the case of the Ising Model that owing to the high degree of symmetry of the Hamiltonian and initial and target states, the spectrum is a stepwise-continuous function, opening up the possibility of further compression by using e.g. polynomial bases in frequency space.
These numerical results were obtained thanks to an algorithm that can be applied to any model with equally spaced control eigenvalues. This allows further landscape exploration of Pauli controlled systems, which are a common Ansatz for ease of experimental implementation and theoretical investigation \cite{motzoi17, Rudolph23, Kokail19, Koczor22QAD}.

We then showed that the bounds in absolute value and bandwidth of the landscape cause it to be a Lipschitz continuous function. This means that there is a global maximum ratio between change in cost function and traveled length in control space.
We related this property to a local upper bound on the error of a Taylor expansion, which is therefore an efficient representation when the overall time-energy budget is limited. Local models are often useful for optimization, and this result can inform further study in this regard \cite{Goodwin16, Koczor22QAD}.

Since the dense Fourier spectrum represents a challenge for supervised learning with a finite dimensional feature space, we derived analytically an equivalent kernel regression problem that gives rise to the sinc kernel representation. Here, the landscape is represented as a linear superposition of finite bandwidth kernels in a way that generalizes the Green's function expansion from linear optics. In numerical benchmarks against the other two representations, the sinc kernel showed to be more efficient if the training data set is large enough, while its inferior performance on small data sets can be greatly improved by reducing the kernel bandwidth. We also commented on the limits of random sampling and of the $L_2$ distance to distinguish different landscapes, arguing that stronger notions of distance, such as $L_{\infty}$, or optimization-driven sampling strategies \cite{Dalgaard22, Beato24,dalgaard2020global} are necessary to obtain a meaningful regression problem when the number of time steps $N$ is not fixed or is too large.

We then discussed the consequences of our findings on some landscape metrics which are relevant for optimization. In particular, we showed that the aforementioned Lipschitz constant implies a minimum granted robustness of local optima, together with a relation between the volume of an unexplored region in control space and the improvement in the cost function value we can obtain by exploring that region. We also related the Fourier representation to the variance of the landscape and its derivatives over the controls. This allowed us to prove a set of upper bounds that constrain the best case scenario for this quantities, purely based on time, energy and depth limitations and largely independently on the specific Hamiltonian. This way, we showed that for constant final time $T$ and bounded controls, the non-trivial regions of the landscape have a vanishingly small volume compared to a fixed control region of interest as the number of control parameters increases. 
Overall, owing to the generality of the properties and the bounds we found, we deem them to be useful as universal baselines for the considered metrics in more specific cases. 

Finally, we explored how these landscape properties inform the design and tuning of optimizers. In this sense, the generalization of our proofs to parametrized landscapes is instrumental, allowing for the application to popular Quantum Optimal Control algorithms such as GRAPE \cite{Khaneja05, DeFouquieres13}, DCRAB \cite{Rach15}, Krotov \cite{goerz2019krotov}, SOMA \cite{preti2022continuous}, and SPINACH \cite{Goodwin16}, where our results inform the choice of stopping conditions and the estimate of convergence ratios. They also provide estimates for appropriate stepping sizes, bounds on the Hessian, and inform the choice of vertical resolution in the controls, which can all greatly benefit hyperparameter choices in such algorithms. We also argued in favour of Lipschitz-aware optimizers such as DIRECT \cite{Jones93, Kokail19}, for which we derive a more specific Lipschitz constant. The information about distance between sampled points is especially relevant to global optimizers and sampling based strategies such as machine learning, for example dictating the branching in Monte-Carlo tree searches \cite{dalgaard2020global}. That is, this can be used to decrease the number of calls to the quantum circuit by discarding sampling in areas which cannot contain better quantum cost function values because of Lipschitz continuity.

\begin{acknowledgements}
We thank Marin Bukov, Nicol\`o Beato and Mogens Dalgaard for all the insightful conversations, and especially Mogens Dalgaard for sharing his code for the simulation and optimal control of spin chains. We also thank Jan Reuter, Phila Rembold, Robert Zeier, Matthias M\"uller, Matteo Rizzi, Markus Schmitt, Michael Schilling and Francesco Preti for related discussions.
This work was funded by Horizon Europe programme HORIZON-CL4-2022-QUANTUM-02-SGA via
the project 101113690 (PASQuanS2.1), programme HORIZON-CL4-2022-QUANTUM-01-SGA via project 101113946 (OpenSuperQPlus100), and by Horizon Europe programme (HORIZON-CL4-2021-DIGITALEMERGING-02-10) Grant Agreement 101080085 QCFD, and it was supported from the Jülich Supercomputing Center through the JUWELS and JURECA clusters.
\end{acknowledgements}

\appendix
\section{Symbols, notation and useful formulas}\label{appendix_a}
\subsubsection{Sets and vectors}
We indicate sets with calligraphic capital letters like $\mathcal{H}, \mathcal{F}, \mathcal{D}$. Given a set $\mathcal{F}$, we use the exponential notation $\mathcal{F}^N$ with $N \in \mathbb{N}$ for the $N$-fold cartesian product of the set with itself and the symbol $\# \mathcal{F}$ for its cardinality.

Throughout the paper we often switch between vector and coordinate notation. We use boldface lower case letters for (column) vectors $\boldsymbol{v} \in \mathbb{C}^N$, always implying that $\boldsymbol{v} = (v_1,\dots,v_N)^T$, and boldface upper case letters for matrices $\boldsymbol{A} \in \mathbb{C}^{M\times N}$, with
\[ \boldsymbol{A} = \begin{pmatrix}
    a_{11} & \dots & a_{1N} \\
    \vdots & \ddots & \vdots \\
    a_{M1} &\dots &  a_{MN}
    \end{pmatrix} = 
    \begin{pmatrix}
    \boldsymbol{a}_{1} & \dots & \boldsymbol{a}_{N} \\
    \end{pmatrix} 
    \]
We use vectorized shorthand notations for fixed scalar quantities, e.g.:
\[
\boldsymbol{1} = (1,\dots,1)^T,\ \ \boldsymbol{\omega}_{\mathrm{max}} = (\omega_{\mathrm{max}},\dots,\omega_{\mathrm{max}})^T.
\]
When dealing with sums over a vector index $\boldsymbol{k}\in \mathbb{Z}^N$, where each element $k_i \in \mathbb{Z}$ of the vector spans the integers between $a_i \leq k_i \leq b_i$ we sometimes use the following shorthand notation:
\[ \sum_{\boldsymbol{k}=\boldsymbol{a}}^{\boldsymbol{b}} := \sum_{k_1=a_1}^{b_1} \cdots \sum_{k_N=a_N}^{b_N}.\]
\subsubsection{Norms and balls}
We use the following notation for norms of vectors in $\mathbb{C}^N$:
\begin{multline*}
    ||\boldsymbol{v}||_1 = \sum_{i=1}^N |v_i| \\
    ||\boldsymbol{v}||_2 = \sqrt{\sum_{i=1}^N |v_i|^2} \\
    ||\boldsymbol{v}||_{\infty}= \max_{i=1,\dots,N} |v_i|
\end{multline*}
and equivalent definitions hold for matrices.
We indicate with $\{\boldsymbol{e}_i\}_{i=1,\dots,N}$ the canonical basis. We call $\mathcal{C}^{(N)}_A(\boldsymbol{s}_0)$ the real ball defined by the sup-norm $||\cdot||_{\infty}$ of radius $A$ centered around $\boldsymbol{s}_0$ (i.e. the translated hypercube):
\[ \mathcal{C}^{(N)}_A(\boldsymbol{s}_0) = \{ \boldsymbol{s} \in \mathbb{R}^N : ||\boldsymbol{s} - \boldsymbol{s}_0||_{\infty} \leq A \}\]
and $\mathcal{T}^{(N)}_A(\boldsymbol{s}_0)$ the real ball defined by the $L_1$ (also known as ``taxicab") norm $||\cdot||_{1}$ of radius $A$ centered around $\boldsymbol{s}_0$:
\[ \mathcal{T}^{(N)}_A(\boldsymbol{s}_0) = \{ \boldsymbol{s} \in \mathbb{R}^N : ||\boldsymbol{s} - \boldsymbol{s}_0||_{1} \leq A \}.\]
In both cases, we drop the $\boldsymbol{s}_0$ argument when considering a ball centered at the origin $\boldsymbol{s}_0=\boldsymbol{0}$.

We use the braket notation for quantum states $\ket{\psi}, \ket{\chi}$ and operators $\hat{U}$. Since we work with finite dimensional systems with Hilbert space $\mathcal{H}\simeq \mathbb{C}^D$, we can choose a finite orthonormal basis $\{\boldsymbol{\beta}_i\}_{i=1,\dots,D}$ with $\braket{\beta_i|\beta_j}=\delta_{ij}$, $\sum_{i=1}^d\ket{\beta_i}\bra{\beta_i} = \text{id}(\mathcal{H}) =: \hat{I}$ and represent quantum operators as matrices with (capital letter) entries $U_{ij}$
\[ U_{ij} = \bra{\beta_i}\hat{U}\ket{\beta_j}, \ \hat{U} = \sum_{ij} U_{ij}\ket{\beta_i}\bra{\beta_j}  \]
and quantum states as vectors with entries $\psi_i$
\[ \psi_i = \braket{\beta_i|\psi}, \ \ket{\psi} = \sum_{i=1}^D \psi_i \ket{\beta_i}. \]
When dealing with product Hilbert spaces $\mathcal{H}^{\otimes Q}$, given $\ket{\psi} \in \mathcal{H}$ we use the notation
\[ \ket{{\psi}_Q} := \ket{\psi}^{\otimes Q}. \]
We indicate with 
\[ ||\hat{O}||_{\infty} = \max_{\ket{\psi} \in \mathcal{H}\backslash \{\ket{0}\} } \sqrt{\frac{\braket{\hat{O}\psi|\hat{O}\psi}}{\braket{\psi | \psi}}} \]
the sup-norm of the operator $\hat{O}$. The sup-norm is sub-additive and sub-multiplicative \cite{wiki_opnorm}.
We use the convention $\hbar=1$ throughout the text.

\subsubsection{Linear expansion of exponentials}
Since we are going to use this result inside several proofs, we show here a formula concerning the linear expansion of the exponential. Given a (finite dimensional) operator $\hat{X} \in \mathbb{C}^{D \times D}$, we have
\begin{multline*} 
||e^{\hat{X}} - \hat{I}||_{\infty} = ||\sum_{n=1}^{\infty} \frac{\hat{X}^n}{n!}||_{\infty} \leq \sum_{n=1}^{\infty} \frac{||\hat{X}||_{\infty}^n}{n!} = e^{||\hat{X}||_{\infty}}-1 ,\\
||e^{\hat{X}} - \hat{I} -\hat{X}||_{\infty} = ||\sum_{n=2}^{\infty} \frac{\hat{X}^n}{n!}||_{\infty} \leq \\ \leq \sum_{n=2}^{\infty} \frac{||\hat{X}||_{\infty}^n}{n!} = e^{||\hat{X}||_{\infty}}-1 -||\hat{X}||_{\infty} ,
\end{multline*}
where we made use of the continuity, subadditivity and submultiplicativity of the sup-norm $||\cdot||_{\infty}$.
Because of the mean value theorem (see e.g. Theorem 4.3.1 and 4.3.2 in \cite{DeMarco}), we have for $x\geq 0 \in \mathbb{R}, \exists y \in[0,x]$
\[ e^x = 1 + xe^y \]
which implies
\[ e^x - 1 - x = x(e^y -1).\]
By applying again the same theorem to the right hand side we have $\exists z \in[0,y]$ such that
\[e^x - 1 - x = x(ye^z) \leq x^2e^x.\] 
Using these inequalities we can conclude:
\begin{equation} 
\label{eq:exp_ineq1}
||e^{\hat{X}} - \hat{I}||_{\infty} \leq e^{||\hat{X}||_{\infty}}-1  \leq ||\hat{X}||_{\infty}e^{||\hat{X}||_{\infty}},
\end{equation}
\begin{multline}
\label{eq:exp_ineq2}
||e^{\hat{X}} - \hat{I} -\hat{X}||_{\infty} \leq e^{||\hat{X}||_{\infty}}-1 -||\hat{X}||_{\infty} \\
\leq ||\hat{X}||^2_{\infty}e^{||\hat{X}||_{\infty}}.
\end{multline}
\section{Reduction to standard form}\label{appendix_b}
We consider the time evolution of a quantum system of finite dimension $D$ under the time dependent Hamiltonian 
\[ \hat{H}(t) = \hat{H}^{(0)} + \sum_{\mu=1}^M \hat{H}^{(\mu)} u_{\mu}(t)\]
where the controls $u_{\mu}(t)$ are bound to an interval $u_{\mu} \in [u_{\mu}^{\mathrm{min}},u_{\mu}^{\mathrm{max}}]$. In general, only $M'$ out of $M$ control Hamiltonians $\hat{H}^{(\mu)}$
are linearly independent (let us say they are $\mu = 1,\dots,M'$). Then we can write the remaining $M-M'$ terms as a linear combination of the first $M'$ terms
\begin{multline*}
    \hat{H}(t) = \hat{H}^{(0)} + \sum_{\mu'=1}^{M'} \hat{H}^{(\mu')} u_{\mu'}(t) + \sum_{\mu=M'+1}^M \hat{H}^{(\mu)} u_{\mu}(t) \\
    = \hat{H}^{(0)} + \sum_{\mu'=1}^{M'} \hat{H}^{(\mu')} u_{\mu'}(t) + \sum_{\mu=M'+1}^M \sum_{\mu'=1}^{M'} a^{(\mu')}_{\mu} \hat{H}^{(\mu')} u_{\mu}(t) = \\
    = \hat{H}^{(0)} + \sum_{\mu'=1}^{M'} \underbrace{\left( u_{\mu'}(t) + \sum_{\mu=M'+1}^M  a^{(\mu')}_{\mu} u_{\mu}(t)  \right)}_{u'_{\mu'}(t)} \hat{H}^{(\mu')},
\end{multline*}
where the linearly independent controls $u'_{\mu}(t)$ are bound to another interval $u'_{\mu} \in [u_{\mu}^{'min},u_{\mu}^{'max}]$. Since $\hat{H}(t)$ only depends on $u'_{\mu}(t)$, we can suppose without loss of generality that the control Hamiltonians are linearly independent from the start and drop the apex notation.

We will now show that by a further reparametrization we can also fix the energy scales across different controls to a common value $\omega_{\mathrm{max}}$ and (optionally) center the control bounds around the origin. Let us define the maximum transition frequency $\omega^{(\mu)}_{\mathrm{max}}$ in the control Hamiltonian $\hat{H}^{(\mu)}$ as
\[ \omega^{(\mu)}_{\mathrm{max}} = |\lambda^{(\mu)}_{\mathrm{max}}-\lambda^{(\mu)}_{\mathrm{min}}|\]
where $\lambda^{(\mu)}_{\mathrm{max}} (\lambda^{(\mu)}_{\mathrm{min}})$ is the maximum (minimum) eigenvalue of $\hat{H}^{(\mu)}$. We define new controls $\tilde{u}_{\mu}(t)$ as
\[u_{\mu}(t) = m_{\mu} + \frac{\omega_{\mathrm{max}}}{\omega_{\mathrm{max}}^{(\mu)}}\tilde{u}_{\mu}(t),\ \ m_{\mu} = \frac{u_{\mu}^{\mathrm{min}} + u_{\mu}^{\mathrm{max}}}{2}\]
Then we can rewrite the Hamiltonian as
\begin{multline*}
\hat{H}(t) = \hat{H}^{(0)} + \sum_{\mu=1}^{M} \hat{H}^{(\mu)} u_{\mu}(t) = \\ \hat{H}^{(0)} + \sum_{\mu=1}^{M} \hat{H}^{(\mu)} m_{\mu} + \sum_{\mu=1}^{M} \hat{H}^{(\mu)}  \frac{\omega_{\mathrm{max}}}{\omega_{\mathrm{max}}^{(\mu)}}\tilde{u}_{\mu}(t)  = \\ \hat{\tilde{H}}^{(0)} + \sum_{\mu=1}^{M} \hat{\tilde{H}}^{(\mu)} \tilde{u}_{\mu}(t) 
\end{multline*}
so that now we have $\forall \mu\ \tilde{\omega}_{\mathrm{max}}^{(\mu)} = \omega_{\mathrm{max}}$ and the rescaled controls $\tilde{u}_{\mu}(t)$ are bound to the symmetric interval $[-\widetilde{\Delta u}_{\mu}/2,\widetilde{\Delta u}_{\mu}/2]$ given by
\[\widetilde{\Delta u}_{\mu} = \frac{\omega_{\mathrm{max}}^{(\mu)}}{\omega_{\mathrm{max}}}(u_{\mu}^{\mathrm{max}} - u_{\mu}^{\mathrm{min}})\]
Although the price to pay lies in modifying the drift Hamiltonian $\hat{H}^{(0)}$, we notice that loosening the bounds symmetrically does not change $m_{\mu}$ nor the drift term:
\begin{multline*}
 u_{\mu}^{\mathrm{min}} \mapsto u_{\mu}^{\mathrm{min}} - \delta,\ \ u_{\mu}^{\mathrm{max}} \mapsto u_{\mu}^{\mathrm{max}} + \delta,\\
 \widetilde{\Delta u}_{\mu} \mapsto \widetilde{\Delta u}_{\mu} + \frac{\omega_{\mathrm{max}}^{(\mu)}}{\omega_{\mathrm{max}}}2\delta,\ \ m_{\mu} \mapsto m_{\mu}.
\end{multline*}
We can then study the problem as the control bounds are changed for fixed drift by simply changing $\widetilde{\Delta u}_{\mu}$. 
Finally, we notice that adding a term proportional to the identity $\propto \hat{I}$ to the Hamiltonian only changes the overall phase of the time evolution, and therefore does not affect observation values, hence the landscape. But then we can use this freedom to shift the controls $\hat{H}^{(\mu)}$ so that $|\lambda^{(\mu)}_{\mathrm{max}}| = ||\hat{H}^{(\mu)}||_{\infty} = \omega_{\mathrm{max}}/2$.
\section{Multiple controls}\label{appendix_c}
We now generalize the Lie-Fourier representation that was discussed in the main text to the case of multiple controls. More specifically, we consider the following Hamiltonian
\[ \hat{H}(t) = \hat{H}^{(0)} + \sum_{\mu=1}^M \hat{H}^{(\mu)} u_{\mu}(t),\]
where we assume without loss of generality that $u_{\mu}(t) \in [-u_{\mathrm{max}}^{(\mu)}, u_{\mathrm{max}}^{(\mu)}]$ and $\omega^{(\mu)}_{\mathrm{max}} = \omega_{\mathrm{max}}$ (see App.~\ref{appendix_b} for details), while we keep the same time discretization convention as in the single control case. We will call $u_{\mathrm{max}} = \max_{\mu}u^{(\mu)}_{\mathrm{max}}$. When discretizing  the control pulses $u_{\mu \nu} := u_{\mu}(t_{\nu})$, we use the following matrix notation:
\[ \boldsymbol{U} = \begin{pmatrix}
    u_{11} & \dots & u_{1N} \\
    \vdots & \ddots & \vdots \\
    u_{M1} &\dots &  u_{MN}
    \end{pmatrix} = 
    \begin{pmatrix}
    \boldsymbol{u}_{1} & \dots & \boldsymbol{u}_{N} \\
    \end{pmatrix} 
    \]
where now the control vector $\boldsymbol{u}_{\nu} \in \mathbb{R}^M$ represents the value of the $M$ controls at the $\nu$-th timestep.
Correspondingly, the control region of interest $\mathcal{C}$ for the discretized pulse is
\[ \boldsymbol{U} \in \mathcal{C} := \prod_{\mu,\nu=1}^{M,N} [-u_{\mathrm{max}}^{(\mu)}, u_{\mathrm{max}}^{(\mu)}],\ \Vol (\mathcal{C}) = \prod_{\mu,\nu=1}^{M,N} 2 u_{\mathrm{max}}^{(\mu)} \]
We then start by expanding the timestep unitary $\hat{U}(\boldsymbol{u})$, defined by
\[\hat{U}(\boldsymbol{u}) = e^{-i\delta t(H^{(0)} + \sum_{\mu=1}^M u_{\mu} H^{(\mu)})}\]
as a sum of complex exponentials by means of the Lie-Trotter product formula Eq.~\eqref{eq:lie_single} \cite{Lloyd96}. In order to achieve that, we express the control Hamiltonians in their eigenbasis $\hat{H}^{(\mu)} =  \hat{V}^{(\mu)\dagger} \hat{\Lambda}^{(\mu)} \hat{V}^{(\mu)}$, and absorb the change of basis in the terms $\hat{W}$ which do not depend on the controls
\begin{widetext}
\begin{multline*}
 \hat{U}_n(\boldsymbol{u})= (e^{-\frac{i\delta t}{n}\hat{H}^{(0)}} e^{-\frac{i\delta t}{n}u_{1}\hat{H}^{(1)}} \cdots e^{-\frac{i\delta t}{n}u_{M}\hat{H}^{(M)}}  )^n  \\  = \hat{V}^{(M)\dagger} \underbrace{\hat{V}^{(M)} e^{-\frac{i\delta t}{n}\hat{H}^{(0)}}\hat{V}^{(1)\dagger}}_{\hat{W}^{(1)}(n^{-1}\delta t)} e^{-\frac{i\delta t}{n}u_{1} \hat{\Lambda}^{(1)}} \underbrace{\hat{V}^{(1)}\hat{V}^{(2)\dagger}}_{\hat{W}^{(2)}} \cdots \underbrace{\hat{V}^{(M-1)}\hat{V}^{(M)\dagger}}_{\hat{W}^{(M)}}e^{-\frac{i\delta t}{n}u_{M} \hat{\Lambda}^{(M)}} \times \\\times \underbrace{\hat{V}^{(M)}e^{-\frac{i\delta t}{n}\hat{H}^{(0)}}\hat{V}^{(1)\dagger}}_{\hat{W}^{(1)}(n^{-1}\delta t)} e^{-\frac{i\delta t}{n}u_{1} \hat{\Lambda}^{(1)}} \cdots \hat{V}^{(M)}.
\end{multline*}
\end{widetext}
By generalizing the notation from the $M=1$ case we can define the $M$ multiindices $\boldsymbol{J} = (\boldsymbol{j}^{(1)}, \dots, \boldsymbol{j}^{(M)})$ and write
\begin{widetext}
\begin{multline*}
[\hat{U}_n(\boldsymbol{u})]_{ik}=( \sum_{\boldsymbol{j}^{(1)} \dots \boldsymbol{j}^{(M)}l} e^{-i\delta t \sum_{\mu=1}^M u_{\mu}\omega(\boldsymbol{j}^{(\mu)})} V^{(M)\dagger}_{il} W^{(1)}_{l j^{(1)}_{1}}(n^{-1}\delta t)W^{(2)}_{j^{(1)}_1 j^{(2)}_1} \cdots W^{(M)}_{j^{(M-1)}_{1} j^{(M)}_{1}} \cdots \\
\cdots W^{(1)}_{j^{(M)}_{n-1} j^{(1)}_{n}}(n^{-1}\delta t)W^{(2)}_{j^{(1)}_n j^{(2)}_n} \cdots W^{(M)}_{j^{(M-1)}_{n} j^{(M)}_{n}} V_{j^{(M)}_{n} k} ) \\
= \sum_{\boldsymbol{J} \in [D]^{n\times M}} e^{-i\delta t \sum_{\mu=1}^M u_{\mu}\omega(\boldsymbol{j}^{(\mu)})} A^{\boldsymbol{J}}_{ik}(n,\delta t) = \sum_{\boldsymbol{\omega}\in \mathcal{S}_n}  e^{-i\delta t \sum_{\mu=1}^M u_{\mu}\omega_{\mu}} B^{\boldsymbol{\omega}}_{ik}(n,\delta t) = \sum_{\boldsymbol{\omega}\in \mathcal{S}_n}  e^{-i\delta t \boldsymbol{\omega}\cdot \boldsymbol{u}} B^{\boldsymbol{\omega}}_{ik}(n,\delta t),
\end{multline*}
\end{widetext}
where we defined $\boldsymbol{\omega} = (\omega_1,\dots,\omega_M)^T \in \mathcal{S}_n $ and we generalized the symbol $\mathcal{S}_n := \mathcal{S}_n^{(1)}\times\cdots\times\mathcal{S}_n^{(M)}$ compared to the single control case. Anyways the two definitions coincide for $M=1$. 
Now by stacking the $N$ time-step unitaries we can build up the full unitary $\hat{U}_n(\boldsymbol{U})$. We can define the multiindices $\boldsymbol{J}^{(\nu)} = (\boldsymbol{j}^{(1 \nu)}, \dots, \boldsymbol{j}^{(M \nu)})$
\begin{multline*}
\hat{U}_n(\boldsymbol{U}) = \hat{U}_n(\boldsymbol{u}_N) \cdots \hat{U}_n(\boldsymbol{u}_1) \\
=  \sum_{\boldsymbol{J}^{(1)} \cdots \boldsymbol{J}^{(N)}} e^{-i\delta t \sum_{\mu \nu} u_{\mu \nu}\omega(\boldsymbol{j}^{(\mu \nu)})} \hat{A}^{\boldsymbol{J}^{(N)}} \cdots \hat{A}^{\boldsymbol{J}^{(1)}}\\
= \sum_{\boldsymbol{\omega}^{(1)} \dots \boldsymbol{\omega}^{(N)}}  e^{-i\delta t \sum_{\mu \nu} u_{\mu \nu}\omega_{\mu \nu}} \hat{B}^{\boldsymbol{\omega}^{(N)}}\cdots  \hat{B}^{\boldsymbol{\omega}^{(1)}} \\ =  \sum_{\boldsymbol{\Omega}\in \mathcal{S}_n^N}  e^{-i\delta t \text{Tr}(\boldsymbol{\Omega}^T \boldsymbol{U})} \hat{B}^{\boldsymbol{\Omega}}(n,\delta t)
\end{multline*}
where we defined the frequency matrix $\boldsymbol{\Omega}$
\[ \boldsymbol{\Omega} = \begin{pmatrix}
    \omega_{11} & \dots & \omega_{1N} \\
    \vdots & \ddots & \vdots \\
    \omega_{M1} &\dots &  \omega_{MN}
    \end{pmatrix}
\]
which is an element of the set $\mathcal{S}_n^N = \prod_{\mu \nu=1} ^{MN}\mathcal{S}^{(\mu)}_n$. By plugging this expression for the unitary operator back into Eq.~\eqref{eq:fid} we finally find the Lie-Fourier representation $J_n$ of the quantum dynamical landscape $J$ associated to the observable $\hat{O}$
\begin{widetext}
\begin{multline*}
    J_n(\boldsymbol{U}) = \bra{\psi} \hat{U}^{\dagger}(\boldsymbol{U}) \hat{O} \hat{U}(\boldsymbol{U}) \ket{\psi} 
    =  \bra{\psi} \left( \sum_{\boldsymbol{\Omega''}\in \mathcal{S}_n^N}  e^{i\delta t \text{Tr}(\boldsymbol{\Omega''}^T \boldsymbol{U})} \hat{B}^{\boldsymbol{\Omega''} \dagger}(n,\delta t) \right) \hat{O} \left( \sum_{\boldsymbol{\Omega'}\in \mathcal{S}_n^N}  e^{-i\delta t \text{Tr}(\boldsymbol{\Omega'}^T \boldsymbol{U})} \hat{B}^{\boldsymbol{\Omega'}}(n,\delta t) \right) \ket{\psi} = \\
    =  \sum_{\boldsymbol{\Omega'},\boldsymbol{\Omega''}\in \mathcal{S}_n^N} \bra{\psi} \hat{B}^{\boldsymbol{\Omega''} \dagger}(n,\delta t) \hat{O} \hat{B}^{\boldsymbol{\Omega'}}(n,\delta t) \ket{\psi} e^{-i\delta t \text{Tr}(\boldsymbol{(\Omega'-\Omega'')}^T \boldsymbol{U})} =  \sum_{\boldsymbol{\Omega} \in (\mathcal{S}^\Delta_n)^N}  c_{\boldsymbol{\Omega}}(n,\delta t)  e^{-i\delta t \text{Tr}(\boldsymbol{\Omega}^T \boldsymbol{U})} 
\end{multline*}
\end{widetext}
where $(\mathcal{S}^\Delta_n)^N = \prod_{\mu \nu=1}^{MN}\mathcal{S}^{(\mu)\Delta}_n$ and $\mathcal{S}^{(\mu)\Delta}_n$ is again the set of frequency differences within the Fourier spectrum of the timestep unitaries, for each control
\[ \mathcal{S}^{(\mu)\Delta}_n = \{ \omega = \omega' - \omega''\ |\ \omega',\omega'' \in \mathcal{S}^{(\mu)}_n \}, \]
and the coefficients of the expansion $c_{\boldsymbol{\Omega}}$ are given by
\begin{multline*}
c_{\boldsymbol{\Omega}}(n,\delta t) = \sum_{\boldsymbol{\Omega'},\boldsymbol{\Omega''}\in \mathcal{S}_n^N} \delta_{\Omega, \Omega'-\Omega''} \bra{\psi} \hat{B}^{\boldsymbol{\Omega''} \dagger} \hat{O}  \hat{B}^{\boldsymbol{\Omega'}} \ket{\psi}  \\
= \sum_{\boldsymbol{\Omega'}\in \mathcal{S}_n^N} \bra{\psi} \hat{B}^{(\boldsymbol{\Omega'-\Omega}) \dagger} \hat{O}  \hat{B}^{\boldsymbol{\Omega'}} \ket{\psi}
\end{multline*}
\subsection{Proofs valid for multiple controls}
\begin{relemma}{lemma:coefficients_l2}[$L_2$ Boundedness of the coefficients]
    \[ \forall n\ \sum_{\boldsymbol{\Omega} \in (\mathcal{S}^\Delta_n)^N} |c_{\boldsymbol{\Omega}}(n, \delta t)|^2 \leq ||\hat{O}||^2_{\infty} \]
    \label{lemma:coefficients_l2_2}
\end{relemma}
\begin{proof}
We first notice the following:
    \begin{multline*}
    \mathbb{E}_{\boldsymbol{U} \in \mathcal{C}}[ |J_n|^2 ] 
    = \frac{1}{\Vol(\mathcal{C})}\int_{\mathcal{C}} d^{MN}\boldsymbol{U}\  J^*_n(\boldsymbol{U}) J_n(\boldsymbol{U}) =\\
    \sum_{\boldsymbol{\Omega}, \boldsymbol{\Omega'} \in (\mathcal{S}^\Delta_n)^N}  c^*_{\boldsymbol{\Omega}}c_{\boldsymbol{\Omega'}}\prod_{\mu, \nu =1}^{M,N} \int_{-u_{\mathrm{max}}^{(\mu)}}^{u_{\mathrm{max}}^{(\mu)}} \frac{du_{\mu \nu}}{2u_{\mathrm{max}}^{(\mu)}}\  e^{-i\delta t(\omega'_{\mu \nu} - \omega_{\mu \nu}) u_{\mu \nu}}  \\
    = \sum_{\boldsymbol{\Omega}, \boldsymbol{\Omega'} \in (\mathcal{S}^\Delta_n)^N} c^*_{\boldsymbol{\Omega}}c_{\boldsymbol{\Omega'}} \prod_{\mu, \nu =1}^{M,N} \frac{\sin{\delta t (\omega'_{\mu \nu} - \omega_{\mu \nu}) u_{\mathrm{max}}^{(\mu)}}}{\delta t (\omega'_{\mu \nu} - \omega_{\mu \nu}) u_{\mathrm{max}}^{(\mu)}} 
    \end{multline*}
    Now we fix $\forall \mu\ u_{\mathrm{max}}^{(\mu)} = u_{\mathrm{max}}$ and take the following limit:
    \begin{multline*}
    \lim_{u_{\mathrm{max}} \to \infty} \mathbb{E}_{\boldsymbol{U} \in \mathcal{C}_{u_{\mathrm{max}}}}[ |J_n|^2 ] =  \smashoperator[r]{\sum_{\boldsymbol{\Omega}, \boldsymbol{\Omega'} \in (\mathcal{S}^\Delta_n)^N}} c^*_{\boldsymbol{\Omega}}c_{\boldsymbol{\Omega'}} \prod_{\mu, \nu =1}^{M,N} \delta_{\omega_{\mu \nu},\omega'_{\mu \nu}} \\
    = \sum_{\boldsymbol{\Omega} \in (\mathcal{S}^\Delta_n)^N} |c_{\boldsymbol{\Omega}}|^2.
    \end{multline*}
But then we can make use of the following inequality:
    \begin{multline*}
    |J_n(\boldsymbol{U})| = |\bra{\psi} \hat{U}_n^{\dagger}(\boldsymbol{U}) \hat{O} \hat{U}_n(\boldsymbol{U})\ket{\psi}| \\
    \leq ||\hat{U}_n^{\dagger}(\boldsymbol{U}) \hat{O} \hat{U}_n(\boldsymbol{U})||_{\infty} \leq ||\hat{O}||_{\infty}
    \end{multline*}
where we made use of the sub-multiplicativity of the sup-norm and of the fact that for any unitary $||\hat{U}||_{\infty}=1$. 
This implies a bound on the expectation value as
\begin{multline*}
    \mathbb{E}_{\boldsymbol{U} \in \mathcal{C}_{u_{\mathrm{max}}}}[ |J_n|^2 ] 
    \leq \frac{1}{\Vol(\mathcal{C}_{u_{\mathrm{max}}})}\smashoperator[lr]{\int_{\mathcal{C}_{u_{\mathrm{max}}}}} d^{MN}\boldsymbol{U}\  ||\hat{O}||^2_{\infty} = ||\hat{O}||^2_{\infty}.
\end{multline*}
This bound will therefore be also valid in the limit $u_{\mathrm{max}} \xrightarrow{} \infty$, allowing us to conclude.
The result in the main text follows by fixing $M=1$ and noticing that $||\ket{\chi}\bra{\chi}||_{\infty}=1$.
\end{proof}

\begin{lemma}[Uniform convergence over compact sets]
\label{lemma:unif_conv}
$J_n$ converges uniformly to $J$ over $\boldsymbol{U} \in [-u_{\mathrm{max}},u_{\mathrm{max}}]^{M \times N}$ for any $u_{\mathrm{max}} \geq 0$, and the same is true for the derivatives of any order:
\[\sup_{\boldsymbol{U} \in [-u_{\mathrm{max}},u_{\mathrm{max}}]^{M \times N}} \left|\prod_{p=1}^P \partial_{\mu_p \nu_p} (J(\boldsymbol{U})-J_n(\boldsymbol{U}))\right| \xrightarrow[]{n \xrightarrow{} \infty}0 \]
\end{lemma}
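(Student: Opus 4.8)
The plan is to reduce the derivative statement to the bare $P=0$ case (plain uniform convergence of $J_n$ to $J$) by exploiting that both $J$ and $J_n$ are restrictions to $\mathbb{R}^{MN}$ of \emph{holomorphic} functions of the controls. Once analyticity is established, the multivariate Cauchy estimates propagate uniform convergence to every partial derivative automatically, so that no derivative ever has to be differentiated by hand.

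First I would settle the $P=0$ case with a quantitative rate. Using the exponential inequalities \eqref{eq:exp_ineq1}--\eqref{eq:exp_ineq2} together with a Trotter/BCH estimate, a single Trotter substep obeys a bound of the form $||e^{-\frac{i\delta t}{n}\hat H_d}e^{-\frac{i\delta t}{n}u\hat H_c} - e^{-\frac{i\delta t}{n}\hat H(u)}||_{\infty} \leq C_1(\delta t/n)^2$ controlled by $||[\hat H_d,\hat H_c]||_{\infty}$. Iterating over the $n$ substeps via the telescoping identity $A_1\cdots A_n - B_1\cdots B_n = \sum_k A_1\cdots A_{k-1}(A_k-B_k)B_{k+1}\cdots B_n$, with each factor norm-bounded by $e^{O(\delta t/n)}$, gives $||\hat U_n(u)-\hat U(u)||_{\infty} \leq C(u)/n$ with $C$ continuous, hence bounded on compacts. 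Telescoping again over the $N$ timesteps and using $||\hat U||_{\infty}=1$ yields $\sup_{\boldsymbol U\in K}||\hat U_n(\boldsymbol U)-\hat U(\boldsymbol U)||_{\infty}\to 0$, and sub-multiplicativity of the sup-norm then delivers $\sup_{\boldsymbol U\in K}|J_n-J|\leq 2||\hat O||_{\infty}\sup_K||\hat U_n-\hat U||_{\infty}\to 0$.

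Next I would complexify. Since $\hat H(z)=\hat H_d+z\hat H_c$ is an entire operator-valued function of $z$, both $e^{-i\delta t\hat H(z)}$ and its inverse $e^{+i\delta t\hat H(z)}$ are entire in $z$, and likewise for the Trotterized factors. On the real axis $\hat U^{-1}=\hat U^{\dagger}$, so defining $\tilde J(\boldsymbol Z)=\bra{\psi}\big(\prod_{\nu} e^{+i\delta t\hat H(z_\nu)}\big)\hat O\big(\prod_{\nu} e^{-i\delta t\hat H(z_\nu)}\big)\ket{\psi}$ with the reversed ordering understood on the left, and analogously $\tilde J_n$, produces functions that are entire on $\mathbb C^{MN}$ and restrict to $J$ and $J_n$ on $\mathbb R^{MN}$. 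The inequalities \eqref{eq:exp_ineq1}--\eqref{eq:exp_ineq2} hold verbatim for non-Hermitian arguments, and the stability factors $||\prod_{\nu} e^{-i\frac{\delta t}{n}\hat H(z_\nu)}||_{\infty}$ remain bounded by $e^{\delta t\sum_\nu(||\hat H_d||_{\infty}+|z_\nu|\,||\hat H_c||_{\infty})}$, so the estimate of the previous step repeats with $u$ replaced by $z$ and gives $\sup_{\boldsymbol Z\in P}|\tilde J_n-\tilde J|\to 0$ on every compact polydisc $P\subset\mathbb C^{MN}$.

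Finally, the derivative statement follows from the multivariate Cauchy integral formula: for holomorphic functions one has $\sup_{P'}|\partial^{\boldsymbol p}(\tilde J-\tilde J_n)|\leq (\boldsymbol p!/\rho^{|\boldsymbol p|})\sup_{P}|\tilde J-\tilde J_n|$ on any polydisc $P'$ whose closure lies at distance $\rho>0$ inside $P$. Embedding the real compact box $K$ in the interior of such a polydisc and restricting back to real arguments yields $\sup_{\boldsymbol U\in K}|\prod_p\partial_{\mu_p\nu_p}(J-J_n)|\to 0$, which is the claim. The main obstacle is the complexification step: constructing the holomorphic extension correctly---in particular replacing $\hat U^{\dagger}$ by the analytic continuation of $\hat U^{-1}$, which coincides with it only on the real domain---and checking that the Trotter estimate survives complexification with constants that stay uniform on compact polydiscs. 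Everything beyond that is automatic from Cauchy's estimates.
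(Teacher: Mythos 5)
Your proposal follows essentially the same route as the paper's own proof: holomorphically extend $J$ and $J_n$ by replacing the dagger with the analytic continuation of $\hat{U}^{-1}$ (the paper writes this as $\hat{U}^{\dagger}(\boldsymbol{U}^*)$, which is identical to your reversed product of $e^{+i\delta t\hat{H}(z_\nu)}$), establish uniform convergence of the Trotter approximants on compact complex sets via telescoping/stability estimates with the exponential remainder bounds, and then invoke the multivariate Cauchy integral estimates to transfer uniform convergence to all partial derivatives before restricting to real controls. The argument is correct and complete in the same way the paper's is, so there is nothing substantive to add.
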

\begin{proof}
Let us call $\mathcal{B}_{u_{\mathrm{max}}} = \{z \in \mathbb{C}\ : |z| < u_{\mathrm{max}}\}, \ \bar{\mathcal{B}}_{u_{\mathrm{max}}} = \{z \in \mathbb{C}\ : |z| \leq u_{\mathrm{max}}\}$ and consider the holomorphic extensions of $J_n,J$. It is easy to see that they are given by
\begin{multline*}
    J_n(\boldsymbol{U}) := \bra{\psi} [\hat{U}_n(\boldsymbol{U}^*)]^\dagger \hat{O} \hat{U}_n(\boldsymbol{U})\ket{\psi} \\
    J(\boldsymbol{U}) := \bra{\psi} [\hat{U}(\boldsymbol{U}^*)]^\dagger \hat{O} \hat{U}(\boldsymbol{U}) \ket{\psi}
\end{multline*}
for any $\boldsymbol{U} \in \mathbb{C}^{M\times N}$.
These functions are holomorphic as they are combinations of exponentials, finite sums and products, and they are only functions of $\boldsymbol{U}$ (and not of $\boldsymbol{U}^*$). Moreover, they coincide with the original functions when evaluated for $\boldsymbol{U} \in \mathbb{R}^{M\times N}$.
The fact that the Lie product expansion of $\hat{U}(\boldsymbol{u})$ converges uniformly over $\boldsymbol{u} \in \bar{\mathcal{B}}_{u_{\mathrm{max}}}^{M}$ is a slight modification of the classic result due to Lie. Following Theorem VIII.29 from \cite{ReedBarry72}, we can define 
\[ \hat{S}_n = \exp(-\frac{i\delta t}{n}\hat{H}(\boldsymbol{u})),\ \  \hat{T}_n = e^{-\frac{i\delta t}{n}\hat{H}^{(0)}}\prod_{\mu=1}^Me^{-\frac{i\delta t}{n}u_{\mu}\hat{H}^{(\mu)}}, \]
so that we have $\hat{U}(\boldsymbol{u}) = \hat{S}_n^n, \hat{U}_n(\boldsymbol{u}) = \hat{T}_n^n$ and
\[ ||\hat{U}(\boldsymbol{u}) - \hat{U}_n(\boldsymbol{u})||_{\infty} \leq n ||\hat{S}_n - \hat{T}_n|| e^{\zeta}
\]
with
\[ \zeta := \delta t \left(||\hat{H}^{(0)}||_{\infty} + \frac{u_{\mathrm{max}}M\omega_{\mathrm{max}}}{2}\right), \]
where we fixed $||\hat{H}^{(\mu)}||_{\infty} = \omega_{\mathrm{max}}/2$ as discussed in App.~\ref{appendix_b}.
We can estimate $||\hat{S}_n - \hat{T}_n||$ using the triangular inequality
\[ ||\hat{S}_n - \hat{T}_n|| \leq ||\hat{S}_n - \hat{I} - \frac{i\delta t}{n}\hat{H}(\boldsymbol{u})|| + ||\hat{T}_n - \hat{I} - \frac{i\delta t}{n}\hat{H}(\boldsymbol{u})||\]
and the results on exponential approximations shown in App.~\ref{appendix_a}. Then, a straightforward calculation shows that both terms are upper bounded by a finite sum of terms of order $\mathcal{O}((\zeta n^{-1})^m)$ with $m \geq 2$, so that
\[ ||\hat{U}(\boldsymbol{u}) - \hat{U}_n(\boldsymbol{u})||_{\infty} \sim n  \mathcal{O}\left(\frac{\zeta^2} {n^2}\right) e^{\zeta} \xrightarrow[]{n \to \infty}0. \]
This property trivially extends to finite products, in fact given two parametrized operator sequences $\hat{X}_n(\boldsymbol{u}), \hat{Y}_n(\boldsymbol{u'})$ with $\boldsymbol{u},\boldsymbol{u'}\in \bar{\mathcal{B}}_{u_{\mathrm{max}}}^M$, we have
\begin{multline*} 
||\hat{X}_n(\boldsymbol{u}) \hat{Y}_n(\boldsymbol{u'}) - \hat{X}(\boldsymbol{u})\hat{Y}(\boldsymbol{u'}) ||_{\infty} \leq \\
|| \hat{X}_n(\boldsymbol{u}) ||_{\infty} || \hat{Y}_n(\boldsymbol{u'}) - \hat{Y}(\boldsymbol{u'}) ||_{\infty} + \\
+|| \hat{X}_n(\boldsymbol{u}) - \hat{X}(\boldsymbol{u}) ||_{\infty} || \hat{Y}(\boldsymbol{u'}) ||_{\infty}.
\end{multline*}
But then if the two sequences converge uniformly and are uniformly bounded,
\begin{multline*}
\sup_{\boldsymbol{u}\in \bar{\mathcal{B}}_{u_{\mathrm{max}}}^M} ||\hat{X}_n(\boldsymbol{u}) - \hat{X}(\boldsymbol{u})|| \xrightarrow[]{n \to \infty}0, \\
\sup_{\boldsymbol{u}\in \bar{\mathcal{B}}_{u_{\mathrm{max}}}^M} ||\hat{Y}_n(\boldsymbol{u}) - \hat{Y}(\boldsymbol{u})|| \xrightarrow[]{n \to \infty}0, \\
\sup_{\boldsymbol{u}\in \bar{\mathcal{B}}_{u_{\mathrm{max}}}^M} ||\hat{X}_n(\boldsymbol{u})|| \leq X_{\mathrm{max}}, \sup_{\boldsymbol{u}\in \bar{\mathcal{B}}_{u_{\mathrm{max}}}^M} ||\hat{Y}_n(\boldsymbol{u})|| \leq Y_{\mathrm{max}} 
\end{multline*}
which is satisfied by the timestep unitaries, as we have
\[ ||\hat{U}(\boldsymbol{u})||_{\infty}, ||\hat{U}_n(\boldsymbol{u})||_{\infty} \leq e^{\zeta}, \]
then also the product converges uniformly. But then also the approximants $J_n$ converge uniformly to $J$, in fact 
\begin{multline*}
\sup_{\boldsymbol{U}\in \bar{\mathcal{B}}_{u_{\mathrm{max}}}^{M\times N}}|J_n(\boldsymbol{U})-J(\boldsymbol{U})|=  \\\sup_{\boldsymbol{U}\in \bar{\mathcal{B}}_{u_{\mathrm{max}}}^{M\times N}}|\bra{\psi} [\hat{U}_n(\boldsymbol{U}^*)]^{\dagger} \hat{O} \hat{U}_n(\boldsymbol{U}) - [\hat{U}(\boldsymbol{U}^*)]^{\dagger} \hat{O} \hat{U}(\boldsymbol{U}) \ket{\psi}|  \\ \leq \sup_{\boldsymbol{U}\in \bar{\mathcal{B}}_{u_{\mathrm{max}}}^{M\times N}}|| [\hat{U}_n(\boldsymbol{U}^*)]^{\dagger} \hat{O} \hat{U}_n(\boldsymbol{U}) - [\hat{U}(\boldsymbol{U}^*)]^{\dagger} \hat{O} \hat{U}(\boldsymbol{U})||_{\infty}  \\ \leq
\epsilon_n \xrightarrow[]{n \xrightarrow{} \infty}0.
\end{multline*}
for some infinitesimal sequence $\epsilon_n \geq 0$. Now we just need to show that this notion of convergence extends to the derivatives $\partial^{p_{11}}_{11}\cdots\partial^{p_{MN}}_{MN} J_n(\boldsymbol{U})$.
Since as we have already seen $J_n, J$ are holomorphic functions, we can write the partial derivatives as contour integrals by means of Cauchy's theorem \cite{Smirnov}, from which follows easily the uniform convergence on all compacts of the partial derivatives. In fact, for all $\boldsymbol{U} \in \mathcal{B}_{u_{\mathrm{max}}}$ we have
\begin{widetext}
\begin{multline*}
    |\partial^{p_{11}}_{11}\cdots\partial^{p_{MN}}_{MN} (J - J_n)(\boldsymbol{U})| = \left|\frac{p_{11}!\cdots p_{MN}!}{(2\pi i)^{MN}} \int_{\partial \bar{\mathcal{B}}_{u_{\mathrm{max}}}} du'_{11} \int_{\partial \bar{\mathcal{B}}_{u_{\mathrm{max}}}} du'_{MN} \frac{J(\boldsymbol{U}') - J_n(\boldsymbol{U}')}{(u'_{11}-u_{11})^{p_{11}+1}\cdots (u'_{MN}-u_{MN})^{p_{MN}+1}}\right|  \\
    \leq \frac{p_{11}!\cdots p_{MN}!}{(2\pi)^{MN}} \int_{0}^{2\pi} d\phi'_{11} u_{\mathrm{max}} \int_{0}^{2\pi} d\phi'_{MN} u_{\mathrm{max}} \sup_{\boldsymbol{U}'\in \partial\bar{\mathcal{B}}_{u_{\mathrm{max}}}^{M\times N}}\left(\frac{|J(\boldsymbol{U}') - J_n(\boldsymbol{U}')|}{|u'_{11}-u_{11}|^{p_{11}+1}\cdots |u'_{MN}-u_{MN}|^{p_{MN}+1}}\right) \\
    \leq \epsilon_n \frac{p_{11}!\cdots p_{MN}!}{(2\pi)^{MN}}  \frac{2 \pi u_{\mathrm{max}}}{a_{11}^{p_{11}+1}} \cdots \frac{2 \pi u_{\mathrm{max}}}{a_{MN}^{p_{MN}+1}} \xrightarrow[]{n \xrightarrow{} \infty}0,
\end{multline*}
\end{widetext}
where used the substitution $u'_{\mu \nu} = u_{\mathrm{max}}e^{i\phi'_{\mu \nu}}$ and we defined $a_{\mu \nu} = \inf_{u'\in {\partial\bar{\mathcal{B}}}_{u_{\mathrm{max}}}} |u_{\mu \nu} - u'_{\mu \nu}|>0$ which is strictly larger than zero, as $u_{\mu \nu}$ lies in the open ball $\mathcal{B}_{u_{\mathrm{max}}}$, while $u'_{\mu \nu}$ lies on its boundary $\partial \bar{\mathcal{B}}_{u_{\mathrm{max}}}$, the circle of radius $u_{\mathrm{max}}$.
So we proved uniform convergence also for the derivatives on the sets $\mathcal{B}_{u_{\mathrm{max}}}^{M \times N}$ for any $u_{\mathrm{max}}\geq 0$. But then the same result clearly holds for the restrictions to real controls $\boldsymbol{U} \in [-u_{\mathrm{max}},u_{\mathrm{max}}]^{M\times N} \subset \mathcal{B}_{u_{\mathrm{max}}+\delta}^{M \times N}$ for $\delta >0$, since the complex derivatives reduce in this case to the usual real derivatives.
\end{proof}
We can prove a generalized version of the result presented in the main text concerning the partial derivatives of the landscape $J$.
The result is first shown to be valid for each of the approximants $J_n$ and then uniform convergence is used to conclude that the result is also valid for $J$.
\begin{relemma}{lemma:derivatives}[Boundedness of the derivatives]
\label{lemma:derivatives2}
The derivatives of any order $P \geq 1$ of $J(\boldsymbol{U})$ are bounded by
\[ \left|\prod_{p=1}^P \partial_{\mu_p \nu_p} J(\boldsymbol{U})\right| \leq  \left(\delta t \omega_{\mathrm{max}}\right)^P \frac{\Delta O}{2}\]
where $\Delta O = (\max_{\ket{\psi} \in \mathcal{H}}  - \min_{\ket{\psi} \in \mathcal{H}}) \bra{\psi} \hat{O} \ket{\psi} $.
\end{relemma}
\begin{proof}
Since $J_n$ is the expectation value of $\hat{O}$ on a quantum state generated by the action of a unitary on the initial state, namely
\[ J_n(\boldsymbol{U}) = \bra{\psi}\hat{U}_n^{\dagger}(\boldsymbol{U})\hat{O}\hat{U}_n(\boldsymbol{U})\ket{\psi}\]
we have 
\[ \forall \boldsymbol{U},n\ \  \min_{\ket{\psi} \in \mathcal{H}} \bra{\psi} \hat{O} \ket{\psi} \leq J_n(\boldsymbol{U}) \leq \max_{\ket{\psi} \in \mathcal{H}} \bra{\psi} \hat{O} \ket{\psi}.\]
But then if we define \[\bar{O} = \frac{(\max_{\ket{\psi} \in \mathcal{H}} + \min_{\ket{\psi} \in \mathcal{H}}) \bra{\psi} \hat{O} \ket{\psi}}{2}\]
the function $\tilde{J}_n(\boldsymbol{U}) = J_n(\boldsymbol{U}) - \bar{O}$
has the same derivatives of order $P \geq 1$ as $J_n$ and satisfies $\tilde{J}_n(\boldsymbol{U}) \in [-\Delta O/2,\Delta O/2]$.

For any given $n, \boldsymbol{U}, \mu, \nu$ we can define the restriction
\[ F(u) = \tilde{J}_n(u_{11},\dots,\underbrace{u}_{\mu \nu},\dots,u_{MN}),\]
so that by the definition of partial derivative we get
\[ \frac{d}{du} F(u) = \partial_{\mu \nu}J_n(\boldsymbol{U}).\]
Clearly, since $F$ is a restriction of $\tilde{J}_n$, we also have $F(u) \in [-F_{\mathrm{max}},F_{\mathrm{max}}]$, with $F_{\mathrm{max}}= \Delta O/2$.

By rearranging the order of the frequency sums in the Lie-Fourier representation $J_n$, we can see that $F$ is a linear combination of complex exponentials in the variable $u$:
\begin{widetext}
\begin{equation*} 
F(u) = \left.\sum_{\boldsymbol{\Omega} \in (\mathcal{S}^\Delta_n)^N}  c_{\boldsymbol{\Omega}}(n,\delta t)  e^{-i\delta t \text{Tr}(\boldsymbol{\Omega}^T \boldsymbol{U})}\right\vert_{u_{\mu \nu}=u}= \sum_{\omega_{\mu \nu} \in \mathcal{S}^{(\mu)\Delta}_n} \underbrace{\sum_{\omega_{11} \in \mathcal{S}^{(1)\Delta}_n} \cdots \sum_{ \omega_{MN} \in \mathcal{S}^{(M)\Delta}_n}  c_{\boldsymbol{\Omega}}  e^{-i\delta t \sum_{\alpha \beta \neq \mu \nu}\omega_{\alpha \beta} u_{\alpha \beta}}}_{c'_{\omega_{\mu \nu}}(n,\delta t, u_{\alpha \beta \neq \mu \nu})}  e^{-i\delta t \omega_{\mu \nu} u}
\end{equation*}
\end{widetext}
This implies several properties: first of all $\forall n,\boldsymbol{U},\ F$ is a  holomorphic (or ``regular") function when we extend the variable $u \in \mathbb{C}$ to complex values. $F$ is also bandwidth limited, namely $\omega_{\mu \nu} \in [-\omega_{\mathrm{max}}, \omega_{\mathrm{max}}]$, which implies that $F=\mathcal{O}(e^{a|u|})$ with $a \leq a_{\mathrm{max}} = \delta t \omega_{\mathrm{max}}$:
\begin{multline*}
    |F(u)| = \left|\sum_{\omega \in \mathcal{S}^{(\mu)\Delta}_n} c'_{\omega}  e^{-i\delta t \omega u}\right| \leq \sum_{\omega \in \mathcal{S}^{(\mu)\Delta}_n} |c'_{\omega}|  |e^{-i\delta t \omega u}| \\
    \leq e^{\delta t \omega_{\mathrm{max}} |u|} \sum_{\omega \in \mathcal{S}^{(\mu)\Delta}_n} |c'_{\omega}|. 
\end{multline*}
Then, the conditions for Bernstein's Inequality, as proved in Theorem 7.24 in \cite{Zygmund}, are all satisfied, which in our case implies that $\forall n,\boldsymbol{U}$
\begin{equation*}
|\partial_{\mu \nu} J_n(\boldsymbol{U})| = \left| \frac{d}{du} F(u) \right| \leq a_{\mathrm{max}}F_{\mathrm{max}} = \delta t \omega_{\mathrm{max}} \frac{\Delta O}{2}.
\end{equation*}
An alternative proof (although of a version of the inequality which is looser by a factor of $2$) can be found in \cite{Pinsky2009}, Sec. 2.3.8.
So the case $P=1$ is settled. If now we suppose the bound
\[ \left|\prod_{p=1}^{P-1} \partial_{\mu_p \nu_p} J_n(\boldsymbol{U})\right| \leq  \left(\delta t \omega_{\mathrm{max}}\right)^{P-1} \frac{\Delta O}{2}\]
to be true, we can now show by the same argument as before that it also holds for $P$. In fact, by defining the restriction $F(u)$ as
\[ F(u) = \prod_{p=1}^{P-1} \partial_{\mu_p \nu_p} \tilde{J}_n(u_{11},\dots,\underbrace{u}_{\mu \nu},\dots,u_{MN}),\]
we have that $F(u)$ satisfies all the conditions as in the $P=1$ case, with $a_{\mathrm{max}}=\delta t \omega_{\mathrm{max}}$ and $F_{\mathrm{max}}=\left(\delta t \omega_{\mathrm{max}}\right)^{P-1} \Delta O/2$. But then by Bernstein's Inequality we have again $\forall n,\boldsymbol{U}$
\begin{multline*}
\left|\prod_{p=1}^P \partial_{\mu_p \nu_p} J_n(\boldsymbol{U})\right| = \left| \frac{d}{du} F(u) \right| \\ \leq a_{\mathrm{max}}F_{\mathrm{max}} = \delta t \omega_{\mathrm{max}} (\delta t \omega_{\mathrm{max}})^{P-1} \frac{\Delta O}{2}.
\end{multline*}
We can then conclude by induction that the bound on the derivatives of $J_n$ is true for all $P \geq 1$.

Finally, we show that the same bounds hold for the function $J$ itself by means of Lemma \ref{lemma:unif_conv}. In fact, for any $u_{\mathrm{max}} \geq 0$ and $\boldsymbol{U} \in [-u_{\mathrm{max}},u_{\mathrm{max}}]^{M \times N}$ we have
\begin{multline*} 
\left|\prod_{p=1}^P \partial_{\mu_p \nu_p} J(\boldsymbol{U})\right| = \left|\prod_{p=1}^P \partial_{\mu_p \nu_p} (J_n(\boldsymbol{U}) + J(\boldsymbol{U}) - J_n(\boldsymbol{U}))  \right| \\
\leq \left|\prod_{p=1}^P \partial_{\mu_p \nu_p} J_n(\boldsymbol{U})\right| + \left|\prod_{p=1}^P \partial_{\mu_p \nu_p}(J(\boldsymbol{U}) - J_n(\boldsymbol{U})) \right| \\
\leq \left(\delta t \omega_{\mathrm{max}}\right)^P \frac{\Delta O}{2} + \epsilon_n(u_{\mathrm{max}}) \xrightarrow[]{n \xrightarrow{} \infty} \left(\delta t \omega_{\mathrm{max}}\right)^P \frac{\Delta O}{2}.
\end{multline*}

The case treated in the main text is easily obtained by setting $M=1$ and noting that in this case $E = \max_{\ket{\psi} \in \mathcal{H}} |\bra{\psi} (\ket{\chi} \bra{\chi}) \ket{\psi}| = |\braket{\chi|\chi}|^2 = 1 $.
\end{proof}

We point out that while we derived this bound on the derivatives as a consequence of the Fourier representation, it is likely not the only way to prove a similar result. Other proofs might be possible e.g. by using the standard formula \cite{Khaneja05}
\[ \frac{d}{ds} \left. e^{A+sB} \right|_{s=0} = e^A \int_0^1 e^{At} B e^{-At} dt \]
or a variation thereof.

Since its derivatives are bounded globally, the function $J$ is Lipschitz continuous. This is a well known result from basic calculus (see e.g. \cite{DeMarco}) that we show here for completeness.
\begin{relemma}{lemma:lipschitz}[Lipschitz continuity]\label{lemma:lipschitz2}
    The function $J(\boldsymbol{U})$ is Lipschitz continuous, that is: $\forall \boldsymbol{U},\boldsymbol{U'} \in \mathbb{R}^{M\times N}$
    \begin{equation*}
        |J(\boldsymbol{U}) - J(\boldsymbol{U'})| \leq K ||\boldsymbol{U} - \boldsymbol{U'}||_1
    \end{equation*}
    where the Lipschitz constant $K \geq 0$ is upper bounded by 
    $\omega_{\mathrm{max}} \delta t \Delta O/2$. Moreover, if $\boldsymbol{U}$ is a critical point, so that $\forall \mu,\nu \ \partial_{\mu \nu}J(\boldsymbol{U})=0$, then the following inequality holds
    \begin{equation*}
        |J(\boldsymbol{U}) - J(\boldsymbol{U'})| \leq K_{c} ||\boldsymbol{U} - \boldsymbol{U'}||_1^2.
    \end{equation*}
    where $0 \leq K_c \leq (\omega_{\mathrm{max}} \delta t)^2 \Delta O/2$.
\end{relemma}
\begin{proof}
    As we already discussed, $J:\mathbb{R}^{M\times N} \xrightarrow{} \mathbb{R}$ is a continuously differentiable function (as it is the restriction to real controls of a holomorphic function) and has bounded partial derivatives. Let us consider its restriction to the straight line connecting $\boldsymbol{U}$ to $\boldsymbol{U'}$
    \[\tilde{J}:\mathbb{R} \xrightarrow{} \mathbb{R}, \tilde{J}(l)= J(\boldsymbol{U}(l)) := J( (\boldsymbol{U} - \boldsymbol{U'})l + \boldsymbol{U'}),\]
    so that $\tilde{J}(1) = J(\boldsymbol{U})$ and $\tilde{J}(0) = J(\boldsymbol{U'})$, and the restricted function is also continuously differentiable, with derivatives given by the chain rule. Because of the mean value theorem (see e.g. Theorem 4.3.1 and 4.3.2 in \cite{DeMarco}), we have
    \[ \tilde{J}(1) - \tilde{J}(0) = \left.\frac{d\tilde{J}}{dl}\right\vert_{l=c} = \sum_{\mu, \nu=1} ^{M,N}\partial_{\mu \nu}J(\boldsymbol{U}(c)) (\boldsymbol{U} - \boldsymbol{U'})_{\mu \nu} \]
    for some $0 \leq c \leq 1$. But then by using the bound from Lemma \ref{lemma:derivatives2} we have the result
    \begin{multline*} 
    |J(\boldsymbol{U}) - J(\boldsymbol{U'})| \leq \sum_{\mu, \nu=1} ^{M,N} |\partial_{\mu \nu}J(\boldsymbol{U}(c))| |\boldsymbol{U} - \boldsymbol{U'}|_{\mu \nu} \\
    \leq  \frac{\omega_{\mathrm{max}} \delta t \Delta O}{2} ||\boldsymbol{U} - \boldsymbol{U'}||_1.
    \end{multline*}
    If $\boldsymbol{U}$ is a critical point $\forall \mu,\nu \ \partial_{\mu \nu}J(\boldsymbol{U})=0$, we have that 
    \[ \left.\frac{d\tilde{J}}{dl}\right\vert_{l=0} = 0, \]
    so we can apply the mean value theorem twice, obtaining:
    \begin{multline*} 
    \tilde{J}(1) - \tilde{J}(0) = \left.\frac{d\tilde{J}}{dl}\right\vert_{l=c} = \left.\frac{d^2\tilde{J}}{dl^2}\right\vert_{l=d} d = \\
    d \sum_{\mu, \nu=1} ^{M,N} \sum_{\mu', \nu'=1} ^{M,N} \partial_{\mu \nu}\partial_{\mu' \nu'}J(\boldsymbol{U}(d)) (\boldsymbol{U} - \boldsymbol{U'})_{\mu \nu} (\boldsymbol{U} - \boldsymbol{U'})_{\mu' \nu'}
    \end{multline*}
    for some $0 \leq d \leq c$. But then by using the bound from Lemma \ref{lemma:derivatives2} we have the result
    \begin{multline*} 
    |J(\boldsymbol{U}) - J(\boldsymbol{U'})| \leq \\
    \sum_{\mu, \nu=1} ^{M,N} \sum_{\mu', \nu'=1} ^{M,N} |\partial_{\mu \nu}\partial_{\mu' \nu'}J(\boldsymbol{U}(d))| |\boldsymbol{U} - \boldsymbol{U'}|_{\mu \nu} |\boldsymbol{U} - \boldsymbol{U'}|_{\mu' \nu'} \\
    \leq \frac{(\omega_{\mathrm{max}} \delta t)^2 \Delta O}{2} ||\boldsymbol{U} - \boldsymbol{U'}||^2_1.
    \end{multline*}
    By setting $M=1$ and $\Delta O=1$, consistently with a fidelity landscape for a single control, we obtain the result from the main text.
\end{proof}

\begin{recorollary}{corollary:parametrization}[Linear parametrizations of controls]
    Let $\boldsymbol{U}(\boldsymbol{V})$ be a linear parametrization of the controls given by $u_{\mu \nu} = \sum_{\mu' \nu'} r_{\mu \nu \mu' \nu'} v_{\mu' \nu'}$, where $\mu' = 1,\dots,M_c$ and $\nu' = 1,\dots,N_c$, which is normalized as follows:
    \[ \forall \mu',\nu'\ \  \sum_{\mu=1}^{M_c} \sum_{\nu=1}^{N_c}   |r_{\mu \nu \mu' \nu'} | = 1.\]
    Then, the partial derivatives of the parametrized landscape $\tilde{J}(\boldsymbol{V}) = J(\boldsymbol{U}(\boldsymbol{V}))$ obey the bounds
    \begin{multline*}
    \left|\prod_{p=1}^P \tilde{\partial}_{\mu'_p \nu'_p} \tilde{J}(\boldsymbol{V})\right|
         \leq (\delta t \omega_{\mathrm{max}})^{P} \frac{\Delta O}{2} \prod_{p=1}^P \sum_{\mu_p,\nu_p=1}^{M_c,N_c} |r_{\mu_p \nu_p \mu'_p \nu'_p}| \\
         = (\delta t \omega_{\mathrm{max}})^{P} \frac{\Delta O}{2}.
    \end{multline*}
    where $\tilde{\partial}_{\mu'\nu'} := \frac{\partial}{\partial v_{\mu'\nu'}}$ and $\partial_{\mu \nu} := \frac{\partial}{\partial u_{\mu \nu}}$.
\end{recorollary}
\begin{proof}
    Let us start with a preliminary step by computing
    \[ \tilde{\partial}_{\mu' \nu'} u_{\mu \nu}(\boldsymbol{V}) = \tilde{\partial}_{\mu' \nu'} \sum_{\alpha=1}^{M_c} \sum_{\beta=1}^{N_c} r_{\mu \nu \alpha \beta} v_{\alpha \beta} = r_{\mu \nu \mu' \nu'}, \]
    with the higher order derivatives are zero because $\boldsymbol{U}(\boldsymbol{V})$ is a linear function. For this reason, by applying the chain rule for differentiation $p$ times and using the bounds from Lemma \ref{lemma:derivatives2} we obtain the first result:
     \begin{multline*}
        \left|\prod_{p=1}^P \tilde{\partial}_{\mu'_p \nu'_p} \tilde{J}(\boldsymbol{V})\right| = \\
        = \left| \prod_{p=1}^P \left( \sum_{\mu_p=1}^{M_c} \sum_{\nu_p=1}^{N_c}   r_{\mu_p \nu_p \mu'_p \nu'_p} \partial_{\mu_p \nu_p} \right) J(\boldsymbol{U}(\boldsymbol{V})) \right| \\
         \leq \left(\delta t \omega_{\mathrm{max}}\right)^P \frac{\Delta O}{2} \prod_{p=1}^P \left(\sum_{\mu_p=1}^{M_c} \sum_{\nu_p=1}^{N_c}   |r_{\mu_p \nu_p \mu'_p \nu'_p}| \right).
    \end{multline*}
    If then we require the condition
    \[ \forall \mu',\nu'\ \  \sum_{\mu=1}^{M_c} \sum_{\nu=1}^{N_c}   |r_{\mu \nu \mu' \nu'} | = 1,\]
    we have the same bound as the one used in the proof of Lemma \ref{lemma:lipschitz2}, that is
    \[ \left|\prod_{p=1}^P \tilde{\partial}_{\mu'_p \nu'_p} \tilde{J}(\boldsymbol{V})\right| \leq \left(\delta t \omega_{\mathrm{max}}\right)^P \frac{\Delta O}{2} \]
    Therefore, using trivially the same reasoning as the one presented in that proof we can obtain the claimed result about Lipschitz continuity with the same upper bound for $K$. By setting $M=1$ and $\Delta O=1$ we recover the result from the main text.
\end{proof}

The following result is also a consequence of the bounds from Lemma \ref{lemma:derivatives2}, and follows by applying a well known approximation result from basic calculus (see e.g. \cite{DeMarco}).
\begin{relemma}{lemma:taylor}[Taylor approximation error in the Lagrange form]
\[  |J(\boldsymbol{U}) - J_P(\boldsymbol{U})| \leq \frac{\Delta O}{2}\frac{(\delta t \omega_{\mathrm{max}})^{P+1}}{(P+1)!}||\boldsymbol{U}-\boldsymbol{U}_0||_1^{P+1} \]
\end{relemma}
\begin{proof}
    Let us define the $P$-order Taylor representation of $J$ as follows:
    \[ J_P(\boldsymbol{U}) = \sum_{p=0}^P \smashoperator[r]{\sum_{\mu_1 \nu_1 \dots \mu_p \nu_p }} \frac{\partial_{\mu_1 \nu_1} \dots \partial_{\mu_p \nu_p}J(\boldsymbol{U}_0)}{p!} \prod_{i=1}^p(\boldsymbol{U}-\boldsymbol{U}_0)_{\mu_i \nu_i}. \]
    The result can then be proved using similar tools as the ones used in Lemma \ref{lemma:lipschitz2}. Let us first define the restrictions $\tilde{J}(l), \tilde{J}_P(l)$ as done previously, using the parametrization \[ l \in [0,1] \mapsto \boldsymbol{U}(l) = (\boldsymbol{U} - \boldsymbol{U}_0)l + \boldsymbol{U}_0. \]
    It is easy to check that the derivatives of $\tilde{J}, \tilde{J}_P$ evaluated at $l=0$ are the same up to order $P$:
    \begin{multline*}
    \left.\frac{d^p\tilde{J}_P}{dl^p}\right\vert_{l=0} = \smashoperator[r]{\sum_{\mu_1 \nu_1 \dots \mu_p \nu_p }} \partial_{\mu_1 \nu_1} \dots \partial_{\mu_p \nu_p}J(\boldsymbol{U}_0)\prod_{i=1}^p(\boldsymbol{U}-\boldsymbol{U}_0)_{\mu_i \nu_i} \\
    =\left.\frac{d^p\tilde{J}}{dl^p}\right\vert_{l=0} \ \ \forall\  0 \leq p \leq P.
    \end{multline*}
    But then we can apply the mean value theorem to the remainder $r(l) := \tilde{J}(l) - \tilde{J}_P(l)$:
    \[
        \frac{d^{P}r(l)}{dl^{P}}= \left.\frac{d^{P+1}r}{dl^{P+1}}\right\vert_{l=c}l = \left.\frac{d^{P+1}\tilde{J}}{dl^{P+1}}\right\vert_{l=c}l
    \]
    where $0\leq c \leq l$. By integrating both sides $P$ times we get:
    \begin{multline*} \int_0^{l}dl_1 \cdots \int_0^{l_{P-1}}dl_P \frac{d^{P}r(l_P)}{dl^{P}} = r(l)  \\
    = \left.\frac{d^{P+1}\tilde{J}}{dl^{P+1}}\right\vert_{l=c} \int_0^{l}dl_1 \cdots \int_0^{l_{P-1}}dl_P l_P \\= \left.\frac{d^{P+1}\tilde{J}}{dl^{P+1}}\right\vert_{l=c}  \frac{l^{(P+1)}}{(P+1)!}.
    \end{multline*}
    By developing the derivative and evaluating the expression at $l=1$ we get:
    \begin{multline*}
    J(\boldsymbol{U}) - J_P(\boldsymbol{U}) = \\ 
    \sum_{\mathclap{\mu_1 \nu_1 \dots \mu_{P+1} \nu_{P+1} }} \frac{\partial_{\mu_1 \nu_1} \dots \partial_{\mu_{P+1} \nu_{P+1}}J(\boldsymbol{U})}{(P+1)!} \prod_{i=1}^{P+1}(\boldsymbol{U}-\boldsymbol{U}_0)_{\mu_i \nu_i} 
    \end{multline*}
    from which the claimed result follows readily by applying the derivative bound found in Lemma \ref{lemma:derivatives2}.
\end{proof}

Under the same assumptions of $M$ multiple controls and generic observable $\hat{O}$ we can also prove the results concerning the landscape variance.
\begin{relemma}{lemma:variance_bounded}[Variance over bounded controls]
\label{lemma:variance_bounded_2}
\begin{widetext}
Let $\mathcal{C} = \prod_{\mu,\nu=1}^{M,N}[-u^{(\mu)}_{\mathrm{max}}, u^{(\mu)}_{\mathrm{max}}]$. Then for any for any $n \geq 1$, $P\geq 0$ and $1 \leq \mu_1,\dots,\mu_P \leq M$, $1 \leq \nu_1,\dots,\nu_P \leq N$ we have:
\begin{multline*} 
\Var_{\boldsymbol{U} \in \mathcal{C}} \left[ \left( \prod_{p=1}^P \partial_{\mu_p \nu_p} \right) J_n \right] = \delta t^{2P} \sum_{\mathclap{\boldsymbol{\Omega}, \boldsymbol{\Omega'} \in (\mathcal{S}^\Delta_n)^N\backslash \{ \boldsymbol{0}\}}} c^*_{\boldsymbol{\Omega}}c_{\boldsymbol{\Omega'}}(\prod_{p=1}^P \omega_{\mu_p \nu_p}\omega_{\mu_p \nu_p}') (\tilde{\kappa}(\boldsymbol{\Omega} - \boldsymbol{\Omega'}) - \tilde{\kappa}(\boldsymbol{\Omega})\tilde{\kappa}(\boldsymbol{\Omega'}) ),\ \ \tilde{\kappa}(\boldsymbol{\Omega}) = \prod_{\mu, \nu =1}^{M,N} \frac{\sin{\delta t \omega_{\mu \nu} u^{(\mu)}_{\mathrm{max}}}}{\delta t \omega_{\mu \nu} u^{(\mu)}_{\mathrm{max}}}
\end{multline*}
Moreover, the following bounds hold:
\[
\Var_{\boldsymbol{U} \in \mathcal{C}}\left[\left(\prod_{p=1}^P \partial_{\mu_p \nu_p}\right) J_n,J\right] 
\leq  \min \left[ \frac{L^{2P} \Delta O^2}{4N^{2P}} , \frac{M(\Delta O L^{P+1} u_{\mathrm{max}})^2}{3N^{2P+1}} + o\left(\frac{L^{2P+2}}{N^{2P+1}}\right) \right] .
\]
\end{widetext}
\end{relemma}
\begin{proof}
This proof is made up of two parts: first, we show how to derive the representation of the variance in the claim, and then we use it to prove the upper bounds.

We remind that given a (real) function $F$, we define its variance over a set $\mathcal{C} \subset \mathbb{R}^{M \times N}$ as
\begin{equation}
\label{eq:var_def}
\Var_{\boldsymbol{U} \in \mathcal{C}}[ F ] = \mathbb{E}_{\boldsymbol{U} \in \mathcal{C}}[ |F|^2 ] - |\mathbb{E}_{\boldsymbol{U} \in \mathcal{C}}[F]|^2,
\end{equation}
where
\[ \mathbb{E}_{\boldsymbol{U} \in \mathcal{C}}[ F ] = \int_{\mathcal{C}} \frac{d^{MN}\boldsymbol{U}}{\Vol(\mathcal{C})}\ F(\boldsymbol{U}) \]
Let us start by computing the expectation value of the landscape approximants $J_n$ and of their derivatives:
\begin{multline}
\label{eq:expect}
    \mathbb{E}_{\boldsymbol{U} \in \mathcal{C}} \left[ \left(\prod_{p=1}^P \partial_{\mu_p \nu_p} J_n \right) \right] =  \int_{\mathcal{C}} \frac{d^{MN}\boldsymbol{U}}{\Vol(\mathcal{C})}\ \left(\prod_{p=1}^P \partial_{\mu_p \nu_p} \right) J_n(\boldsymbol{U}) \\
    =\smashoperator[lr]{\sum_{\boldsymbol{\Omega} \in (\mathcal{S}^\Delta_n)^N}} c_{\boldsymbol{\Omega}} \left( \prod_{p=1}^P -i \delta t\omega_{\mu_p \nu_p} \right) \prod_{\mu, \nu =1}^{M,N}\frac{1}{2u_{\mathrm{max}}^{(\mu)}} \smashoperator[lr]{\int_{-u_{\mathrm{max}}^{(\mu)}}^{u_{\mathrm{max}}^{(\mu)}}}du\ e^{-i \delta t \omega_{\mu \nu} u} \\
    = \smashoperator[lr]{\sum_{\boldsymbol{\Omega} \in (\mathcal{S}^\Delta_n)^N}} c_{\boldsymbol{\Omega}}\left(\prod_{p=1}^P -i \delta t \omega_{\mu_p \nu_p} \right) \prod_{\mu, \nu =1}^{M,N} \frac{\sin{\delta t \omega_{\mu \nu} u_{\mathrm{max}}^{(\mu)}}}{\delta t \omega_{\mu \nu} u_{\mathrm{max}}^{(\mu)}}.
\end{multline}
We can then do the same for the square modulus:
\begin{multline*}
    \mathbb{E}_{\boldsymbol{U} \in \mathcal{C}}\left[ \left| \left( \prod_{p=1}^P \partial_{\mu_p \nu_p}\right) J_n\right|^2 \right] \\
    = \int_{\mathcal{C}} \frac{d^{MN}\boldsymbol{U}}{\Vol(\mathcal{C})}\ \left( \prod_{p=1}^P \partial_{\mu_p \nu_p} \right) J^*_n(\boldsymbol{U}) \left( \prod_{p=1}^P \partial_{\mu_p \nu_p}\right) J_n(\boldsymbol{U}) \\
    = \delta t^{2P} \sum_{\boldsymbol{\Omega}, \boldsymbol{\Omega'} \in (\mathcal{S}^\Delta_n)^N} c^*_{\boldsymbol{\Omega}}c_{\boldsymbol{\Omega'}}\left(\prod_{p=1}^P \omega_{\mu_p \nu_p}  \omega_{\mu_p \nu_p}'\right)\tilde{\kappa}(\boldsymbol{\Omega}'-\boldsymbol{\Omega}).
\end{multline*}
By substituting these results inside Eq.~\eqref{eq:var_def}
and by noticing that 
\[ \tilde{\kappa}(\boldsymbol{0} - \boldsymbol{\Omega'}) - \tilde{\kappa}(\boldsymbol{0})\tilde{\kappa}(\boldsymbol{\Omega'}) = 0 \]
we can restrict all sums to non-zero frequencies in $(\mathcal{S}^\Delta_n)^N\backslash \{ \boldsymbol{0} \}$,
which gives us the result in the claim.
Once again, we obtain the case treated in the main text just by setting $M=1$.

We now briefly comment on the convergence of expectation values and variances over bounded sets. Given a sequence $F_n$ which is uniformly bounded $\forall \boldsymbol{U} \in \mathcal{C}\ |F_n(\boldsymbol{U})| \leq F_{\mathrm{max}}$ and converges uniformly to $F$ over a bounded set $\mathcal{C}$ (with non-zero volume), we have that also its expectation value over $\mathcal{C}$ will converge:
\begin{multline*} 
|\mathbb{E}_{\boldsymbol{U} \in \mathcal{C}}[ F_n ] - \mathbb{E}_{\boldsymbol{U} \in \mathcal{C}}[ F ]| = \\
= \frac{1}{\Vol(\mathcal{C})} \left| \int_{\mathcal{C}}d^{MN}\boldsymbol{U} F_n(\boldsymbol{U}) - F(\boldsymbol{U}) \right| \\
\leq \frac{1}{\Vol(\mathcal{C})}  \int_{\mathcal{C}}d^{MN}\boldsymbol{U} |F_n(\boldsymbol{U}) - F(\boldsymbol{U})| \\
\leq \sup_{\boldsymbol{U} \in \mathcal{C}} |F_n(\boldsymbol{U}) - F(\boldsymbol{U})| \xrightarrow[]{n \to \infty} 0.
\end{multline*}
Moreover, if we have two such bounded converging sequences $F_n,G_n$ then also their product converges uniformly and is bounded $|F_nG_n| \leq F_{\mathrm{max}}G_{\mathrm{max}}$, as can be seen from the following inequality:
\begin{multline*}
|F_nG_n - FG| = |F_nG_n - FG_n + FG_n -FG| \\
\leq |F_n -F||G_n| + |G_n - G||F|.
\end{multline*}
Because of this fact it is easy to see that also the variance (being defined in terms of expectation values of powers of $F_n$) converges as expected:
\begin{multline*}
    \lim_{n \to \infty}\Var_{\boldsymbol{U} \in \mathcal{C}}[ F_n ] = \lim_{n \to \infty}\mathbb{E}_{\boldsymbol{U} \in \mathcal{C}}[ F_n^2 ] - \lim_{n \to \infty}|\mathbb{E}_{\boldsymbol{U} \in \mathcal{C}}[F_n]|^2 \\
    =\mathbb{E}_{\boldsymbol{U} \in \mathcal{C}}[ F^2 ] - |\mathbb{E}_{\boldsymbol{U} \in \mathcal{C}}[F]|^2 = \Var_{\boldsymbol{U} \in \mathcal{C}}[ F ]
\end{multline*}
This shows that the variance and expectation values of the Lie-Fourier representation of the landscape $J_n$ and of its derivatives over $\mathcal{C}$ converge to the corresponding quantities computed for the true landscape $J$.
It should be stressed that this convergence result does not necessarily hold for unbounded controls $\boldsymbol{U} \in \mathbb{R}^{M\times N}$, since in that case we do not have any guarantee of uniform convergence of $J_n$ to $J$.

We now show how to obtain the upper bound in the claim.
The first part of the bound is an obvious consequence of the bound of the derivatives from Lemma~\ref{lemma:derivatives2}, which holds for both $F = J, J_n$. As already discussed there, by defining $\tilde{F}(\boldsymbol{U}) = F(\boldsymbol{U}) - \bar{O}$ we have $\forall \boldsymbol{U} \in \mathbb{R}^{M \times N},\  \tilde{F}(\boldsymbol{U}) \in [-\Delta O/2,\Delta O/2]$, so that the bound
\[ |\partial_{\mu \nu}\tilde{F}| \leq \frac{(\delta t \omega_{\mathrm{max}})^P\Delta O}{2}\]
holds also for $P=0$. Moreover, the variance is not affected by uniform shifts, as
\begin{multline*} 
\Var_{\boldsymbol{U} \in \mathcal{C}}[ \tilde{F} ] = \mathbb{E}_{\boldsymbol{U} \in \mathcal{C}}[ (F - \bar{O} - \mathbb{E}_{\boldsymbol{U} \in \mathcal{C}}[F] + \mathbb{E}_{\boldsymbol{U} \in \mathcal{C}}[\bar{O}])^2 ] \\ 
= \mathbb{E}_{\boldsymbol{U} \in \mathcal{C}}[ (F - \mathbb{E}_{\boldsymbol{U} \in \mathcal{C}}[F] )^2 ] = \Var_{\boldsymbol{U} \in \mathcal{C}}[ F ].
\end{multline*}
But then we have
\begin{multline*}
    \Var_{\boldsymbol{U} \in \mathcal{C}}\left[ \left( \prod_{p=1}^P \partial_{\mu_p \nu_p} \tilde{F} \right) \right] \leq \mathbb{E}_{\boldsymbol{U} \in \mathcal{C}} \left[ \left| \left(\prod_{p=1}^P \partial_{\mu_p \nu_p} \tilde{F} \right) \right|^2 \right] \\
     = \int_{\mathcal{C}}  \frac{d^{MN}\boldsymbol{U}}{\Vol(\mathcal{C})} \ \left| \left(\prod_{p=1}^P \partial_{\mu_p \nu_p} \tilde{F}(\boldsymbol{U}) \right) \right|^2\leq \left( \frac{(T \omega_{\mathrm{max}})^P\Delta O}{2N^P} \right)^2.
\end{multline*}
Concerning the second part of the bound, we first prove the result for finite $n$, starting by finding the asymptotic expansion of the kernel $\tilde{\kappa}(\boldsymbol{\Omega})$ for $\delta t \sim 0$:
    \begin{multline}
        \log(\tilde{\kappa}(\boldsymbol{\Omega})) = \sum_{\mu, \nu =1}^{M,N} \log \left(   \frac{\sin{\delta t \omega_{\mu \nu} u^{(\mu)}_{\mathrm{max}}}}{\delta t \omega_{\mu \nu} u^{(\mu)}_{\mathrm{max}}}\right) =\\
        \sum_{\mu, \nu =1}^{M,N} \left(\frac{\sin{\delta t \omega_{\mu \nu} u^{(\mu)}_{\mathrm{max}}}}{\delta t \omega_{\mu \nu} u^{(\mu)}_{\mathrm{max}}} -1 \right) + o\left(\frac{\sin{\delta t \omega_{\mu \nu} u^{(\mu)}_{\mathrm{max}}}}{\delta t \omega_{\mu \nu} u^{(\mu)}_{\mathrm{max}}} -1 \right) \\
        = \sum_{\mu, \nu =1}^{M,N} 1 - \frac{(\delta t \omega_{\mu \nu} u^{(\mu)}_{\mathrm{max}})^2}{6} -1 + o((\delta t \omega_{\mu \nu} u^{(\mu)}_{\mathrm{max}})^2) \\
        = \sum_{\mu, \nu =1}^{M,N} - \frac{(\delta t \omega_{\mu \nu} u^{(\mu)}_{\mathrm{max}})^2}{6} + o((\delta t \omega_{\mu \nu} u^{(\mu)}_{\mathrm{max}})^2)
    \end{multline}
    Now we take the exponential of both sides and obtain the result:
    \begin{multline*}
        \tilde{\kappa}(\boldsymbol{\Omega}) = \exp\left( \sum_{\mu, \nu =1}^{M,N} - \frac{(\delta t \omega_{\mu \nu} u^{(\mu)}_{\mathrm{max}})^2}{6} + o((\delta t \omega_{\mu \nu} u^{(\mu)}_{\mathrm{max}})^2\right) \\
        = 1 - \frac{1}{6}\sum_{\mu, \nu =1}^{M,N} (\delta t \omega_{\mu \nu} u^{(\mu)}_{\mathrm{max}})^2 + o((\delta t \omega_{\mu \nu} u^{(\mu)}_{\mathrm{max}})^2).
    \end{multline*}
    In order to obtain the claim, we make use of the representation of the variance we found previously and substitute the asymptotic expansion:
    \begin{widetext}
    \begin{multline*}
        \Var_{\boldsymbol{U} \in \mathcal{C}}\left[ \left( \prod_{p=1}^P \partial_{\mu_p \nu_p} \right) J_n \right] = \delta t^{2P} \sum_{\mathclap{\boldsymbol{\Omega}, \boldsymbol{\Omega'} \in (\mathcal{S}^\Delta_n)^N\backslash \{ \boldsymbol{0}\}}} c^*_{\boldsymbol{\Omega}}c_{\boldsymbol{\Omega'}} \left(\prod_{p=1}^P \omega_{\mu_p \nu_p}\omega'_{\mu_p \nu_p} \right)(\tilde{\kappa}(\boldsymbol{\Omega} - \boldsymbol{\Omega'}) - \tilde{\kappa}(\boldsymbol{\Omega})\tilde{\kappa}(\boldsymbol{\Omega'}))\\ 
        =  \sum_{\mathclap{\boldsymbol{\Omega}, \boldsymbol{\Omega'} \in (\mathcal{S}^\Delta_n)^N\backslash \{ \boldsymbol{0}\}}} c^*_{\boldsymbol{\Omega}}c_{\boldsymbol{\Omega'}} \left(\delta t^{2P}\prod_{p=1}^P \omega_{\mu_p \nu_p}\omega'_{\mu_p \nu_p} \right) \sum_{\mu, \nu =1}^{M,N} [ \frac{(\delta t u^{(\mu)}_{\mathrm{max}})^2}{6} (\omega^2_{\mu \nu} + \omega'^2_{\mu \nu} - (\omega_{\mu \nu} - \omega'_{\mu \nu} )^2) + \\
        + o((\delta t \omega_{\mu \nu} u^{(\mu)}_{\mathrm{max}})^2) + o((\delta t \omega'_{\mu \nu} u^{(\mu)}_{\mathrm{max}})^2) + o((\delta t (\omega_{\mu \nu}-\omega'_{\mu \nu}) u^{(\mu)}_{\mathrm{max}})^2)) ] \\
        \leq \left| {\sum_{\boldsymbol{\Omega} \in (\mathcal{S}^\Delta_n)^N\backslash \{ \boldsymbol{0}\}}} c^*_{\boldsymbol{\Omega}} \right| \left|{\sum_{\boldsymbol{\Omega'} \in (\mathcal{S}^\Delta_n)^N\backslash \{ \boldsymbol{0}\}}} c_{\boldsymbol{\Omega'}} \right| MN \left(\frac{((\delta t \omega_{\mathrm{max}})^{P+1} u_{\mathrm{max}})^2}{3}+ o(((\delta t \omega_{\mathrm{max}})^{P+1} u_{\mathrm{max}})^2) \right). 
    \end{multline*}
    \end{widetext}
    In order to upper bound the term containing the sum of Lie-Fourier coefficients, we first notice that
    \[ \mathbb{E}_{\boldsymbol{U} \in \mathbb{R}^{M\times N}}[ J_n ] = c_{\boldsymbol{0}}.\]
    To show that, we fix $\forall \mu\ u_{\mathrm{max}}^{(\mu)} = u_{\mathrm{max}}$ inside Eq.~\eqref{eq:expect} and take the limit $u_{\mathrm{max}} \to \infty$:
\begin{multline*}
    \lim_{u_{\mathrm{max}} \to \infty} \mathbb{E}_{\boldsymbol{U} \in \mathcal{C}_{u_{\mathrm{max}}}}[ \prod_{p=1}^P \partial_{\mu_p \nu_p} J_n ] \\
    =\lim_{u_{\mathrm{max}} \to \infty} \sum_{\boldsymbol{\Omega} \in (\mathcal{S}^\Delta_n)^N} c_{\boldsymbol{\Omega}}\left(\prod_{p=1}^P -i \delta t \omega_{\mu_p \nu_p}\right) \prod_{\mu, \nu =1}^{M,N} \delta_{\omega_{\mu \nu},0} \\
    = 
    \begin{cases} 
    c_{\boldsymbol{0}}, & P=0, \\
    0, & P>0.
    \end{cases}
\end{multline*}
    We also recall from the discussion in Lemma \ref{lemma:derivatives} that
    \[ \forall \boldsymbol{U},n\ \  \min_{\ket{\psi} \in \mathcal{H}} \bra{\psi} \hat{O} \ket{\psi} \leq J_n(\boldsymbol{U}) \leq \max_{\ket{\psi} \in \mathcal{H}} \bra{\psi} \hat{O} \ket{\psi}.\]
    By linearity of the expectation value this implies that
    \begin{multline*}
    \min_{\ket{\psi} \in \mathcal{H}} \bra{\psi} \hat{O} \ket{\psi} \leq \mathbb{E}_{\boldsymbol{U} \in \mathbb{R}^{M\times N}}[ J_n ] \leq \max_{\ket{\psi} \in \mathcal{H}} \bra{\psi} \hat{O} \ket{\psi}
    \end{multline*}
    from which easily follows that
    \begin{multline*} 
    \left| {\sum_{\boldsymbol{\Omega} \in (\mathcal{S}^\Delta_n)^N\backslash \{ \boldsymbol{0}\}} }c_{\boldsymbol{\Omega}} \right| = |J_n(0) - \mathbb{E}_{\boldsymbol{U} \in \mathbb{R}^{M\times N}}[ J_n ]| \\
    \leq \max_{\ket{\psi} \in \mathcal{H}} \bra{\psi} \hat{O} \ket{\psi} - \min_{\ket{\psi} \in \mathcal{H}} \bra{\psi} \hat{O} \ket{\psi} = \Delta O.
    \end{multline*}
    By substituting this result in the chain of inequalities for the variance we obtain the claim. By setting $\Delta O=1$ and $M=1$ we obtain the result from the main text. The result for $J$ comes trivially from taking the limit on both sides, which is possible thanks to the uniform convergence of variance and expectation values we discussed previously in this proof.
\end{proof} 

\begin{relemma}{lemma:variance_unbounded}[Variance over unbounded controls]
\label{lemma:variance_unbounded_2}
\begin{widetext}
For any $n \geq 1$, $P\geq 0$ and $1 \leq \mu_1,\dots,\mu_P \leq M$, $1 \leq \nu_1,\dots,\nu_P \leq N$ we have:
\[
\Var_{\boldsymbol{U} \in \mathbb{R}^{M \times N}}\left[ \left( \prod_{p=1}^P \partial_{\mu_p \nu_p} \right) J_n \right] = \delta t^{2P} \sum_{{\boldsymbol{\Omega} \in (\mathcal{S}^\Delta_n)^N\backslash \{ \boldsymbol{0}\}} } |c_{\boldsymbol{\Omega}}|^2 \prod_{p=1}^P \omega_{\mu_p \nu_p}^2
\]
Moreover, the following bounds hold:
\begin{multline*} 
\Var_{\boldsymbol{U} \in \mathbb{R}^{M \times N}}\left[\left(\prod_{p=1}^P \partial_{\mu_p \nu_p}\right) J_n, J\right] 
\leq \left( \frac{(\delta t \omega_{\mathrm{max}})^P \Delta O}{2} \right)^2,\ \ 
\smashoperator[r]{\sum_{\mu_1, \nu_1, \dots, \mu_P, \nu_P}} \Var_{\boldsymbol{U} \in \mathbb{R}^{M \times N}} \left[ \left( \prod_{p=1}^P \partial_{\mu_p \nu_p} \right) J_n \right] \geq \frac{\Delta J^2 \delta t^{2P}}{4\sum_{\boldsymbol{\Omega} \in (\mathcal{S}^\Delta_n)^N\backslash \{ \boldsymbol{0} \}} \frac{1}{||\boldsymbol{\Omega}||^{2P}_2}} 
\end{multline*}
where we defined $\Delta J$ as the maximum variation in $J_n$
\[\Delta J := \sup_{\boldsymbol{U} \in \mathbb{R}^{M \times N}} J_n(\boldsymbol{U}) -\inf_{\boldsymbol{U} \in \mathbb{R}^{M \times N}} J_n(\boldsymbol{U}).\]
\end{widetext}
\end{relemma}
\begin{proof}
    Let us define $\mathcal{C}_{u_{\mathrm{max}}} := [-u_{\mathrm{max}},u_{\mathrm{max}}]^{M \times N}$. The integrals that define the expectation values of a function $F$ over $\mathbb{R}^{M \times N}$ can then be obtained as limits for $u_{\mathrm{max}} \to \infty$ of integrals over $\mathcal{C}_{u_{\mathrm{max}}}$:
    \begin{equation}
    \label{eq:var_def_unb}
    \Var_{\boldsymbol{U} \in \mathbb{R}^{M \times N}}[ F ] = \lim_{u_{\mathrm{max}} \to \infty} \mathbb{E}_{\boldsymbol{U} \in \mathcal{C}}[ |F|^2 ] - \lim_{u_{\mathrm{max}} \to \infty}|\mathbb{E}_{\boldsymbol{U} \in \mathcal{C}}[F]|^2.
    \end{equation}
    Following the steps already take in the proof of Lemma \ref{lemma:variance_bounded_2}, we start with the expectation value of $J_n$, which was there shown to give rise to
    \[
    \lim_{u_{\mathrm{max}} \to \infty} \mathbb{E}_{\boldsymbol{U} \in \mathcal{C}_{u_{\mathrm{max}}}}\left[ \left( \prod_{p=1}^P \partial_{\mu_p \nu_p} \right) J_n \right] 
  =
    \begin{cases} 
    c_{\boldsymbol{0}}, & P=0, \\
    0, & P>0.
    \end{cases}
    \]
We can then do the same for the square modulus:
\begin{multline*}
    \lim_{u_{\mathrm{max}} \to \infty} \mathbb{E}_{\boldsymbol{U} \in \mathcal{C}_{u_{\mathrm{max}}}}\left[ \left| \left( \prod_{p=1}^P \partial_{\mu_p \nu_p} \right) J_n \right|^2  \right] \\
    = \delta t^{2P} \sum_{\boldsymbol{\Omega}, \boldsymbol{\Omega'} \in (\mathcal{S}^\Delta_n)^N} c^*_{\boldsymbol{\Omega}}c_{\boldsymbol{\Omega'}}(\prod_{p=1}^P \omega_{\mu_p \nu_p}  \omega_{\mu_p \nu_p}') \prod_{\mu, \nu =1}^{M,N} \delta_{\omega_{\mu \nu},\omega'_{\mu \nu}} \\
    = \delta t^{2P} \sum_{\boldsymbol{\Omega} \in (\mathcal{S}^\Delta_n)^N} |c_{\boldsymbol{\Omega}}|^2 \prod_{p=1}^P \omega_{\mu_p \nu_p}^2.
\end{multline*}
By substituting these results inside Eq.~\eqref{eq:var_def_unb}
and by noticing that 
\[ \tilde{\kappa}(\boldsymbol{0} - \boldsymbol{\Omega'}) - \tilde{\kappa}(\boldsymbol{0})\tilde{\kappa}(\boldsymbol{\Omega'}) = 0 \]
we can restrict all sums to non-zero frequencies in $(\mathcal{S}^\Delta_n)^N\backslash \{ \boldsymbol{0} \}$,
which gives us the result in the claim.

Concerning the upper bound in the claim, it is a direct consequence of the corresponding bound for finite $u_{\mathrm{max}}$ that we proved in Lemma \ref{lemma:variance_bounded_2}. Since the bound holds for any $u_{\mathrm{max}}$ it also holds for $u_{\mathrm{max}} \to \infty$ because of the continuity property of the limit.

The lower bound is not as trivial to derive, and only holds for $J_n$ (and not $J$, since the limits in $u_{\mathrm{max}}$ and $n$ do not necessarily commute). Using the representation of the variance we derived previously in this Lemma, we have
    \begin{multline*} 
    \sum_{\mu_1, \nu_1, \dots, \mu_P, \nu_P} \Var_{\boldsymbol{U} \in \mathbb{R}^{M \times N}}\left[ \left( \prod_{p=1}^P \partial_{\mu_p \nu_p} \right) J_n \right]  \\
    = \delta t^{2P} \sum_{\boldsymbol{\Omega} \in (\mathcal{S}^\Delta_n)^N\backslash \{ \boldsymbol{0} \}} |c_{\boldsymbol{\Omega}}|^2 \prod_{p=1}^P \sum_{\mu_p, \nu_p=1}^{M,N} \omega_{\mu_p \nu_p}^2  \\
    =\delta t^{2P} \sum_{\boldsymbol{\Omega} \in (\mathcal{S}^\Delta_n)^N\backslash \{ \boldsymbol{0} \}} |c_{\boldsymbol{\Omega}}|^2 ||\boldsymbol{\Omega}||^{2P}_2
    \end{multline*}
    where the $\boldsymbol{\Omega} = \boldsymbol{0}$ term can be omitted from the sum as it does not contribute to the variance.
    We then notice the following:
    \[ |J_n(\boldsymbol{U}) - c_{\boldsymbol{0}}| = \left| \smashoperator[r]{\sum_{\boldsymbol{\Omega} \in (\mathcal{S}^\Delta_n)^N\backslash \{ \boldsymbol{0} \}}} c_{\boldsymbol{\Omega}} e^{-i\delta t \text{Tr}(\boldsymbol{\Omega}^T \boldsymbol{U})} \right| \leq \smashoperator[r]{\sum_{\boldsymbol{\Omega} \in (\mathcal{S}^\Delta_n)^N\backslash \{ \boldsymbol{0} \}}} |c_{\boldsymbol{\Omega}}| \]
    therefore, by considering separately the two inequalities given by the absolute value, namely
    \[ -\sum_{\boldsymbol{\Omega} \in (\mathcal{S}^\Delta_n)^N\backslash \{ \boldsymbol{0} \}} |c_{\boldsymbol{\Omega}}| \leq J_n(\boldsymbol{U}) - c_{\boldsymbol{0}} \leq \sum_{\boldsymbol{\Omega} \in (\mathcal{S}^\Delta_n)^N\backslash \{ \boldsymbol{0} \}} |c_{\boldsymbol{\Omega}}|\]
    and noting they are true also for any $\boldsymbol{U} \in \mathbb{R}^{M \times N}$, they will hold respectively for the infimum and the supremum:
    \begin{multline*}
        \sup_{\boldsymbol{U} \in \mathbb{R}^{M \times N}} J_n(\boldsymbol{U}) - c_{\boldsymbol{0}} \leq \sum_{\boldsymbol{\Omega} \in (\mathcal{S}^\Delta_n)^N\backslash \{ \boldsymbol{0} \}} |c_{\boldsymbol{\Omega}}| \\ 
        \inf_{\boldsymbol{U} \in \mathbb{R}^{M \times N}} J_n(\boldsymbol{U}) - c_{\boldsymbol{0}} \geq -\sum_{\boldsymbol{\Omega} \in (\mathcal{S}^\Delta_n)^N\backslash \{ \boldsymbol{0} \}} |c_{\boldsymbol{\Omega}}|.
    \end{multline*}
    By subtracting the second equation from the first one we obtain that
    \begin{equation} 
    \label{eq:abssum_lbound}
    \sum_{\boldsymbol{\Omega} \in (\mathcal{S}^\Delta_n)^N\backslash \{ \boldsymbol{0} \}} |c_{\boldsymbol{\Omega}}| \geq \frac{\Delta J}{2}
    \end{equation}
    where we defined the maximum variation $\Delta J$ as in the claim. The lower bound can then be obtained as the global minimum of the function
    \[V(c_{\boldsymbol{\Omega}}, c^*_{\boldsymbol{\Omega}}) := \delta t^{2P} \sum_{\boldsymbol{\Omega} \in (\mathcal{S}^\Delta_n)^N\backslash \{ \boldsymbol{0} \}} |c_{\boldsymbol{\Omega}}|^2 ||\boldsymbol{\Omega}||^{2P}_2\] 
    over $c_{\boldsymbol{\Omega}}$ subject to the constraint given by Eq.~\eqref{eq:abssum_lbound}. 
    
    Let us first study the corresponding problem with equality constraints
    \begin{equation} 
    \label{eq:abssum_eq}
    \sum_{\boldsymbol{\Omega} \in (\mathcal{S}^\Delta_n)^N\backslash \{ \boldsymbol{0} \}} |c_{\boldsymbol{\Omega}}| = \frac{\Delta J}{2}
    \end{equation}
    We can use the method of Lagrange multipliers by finding the stationary point of the cost function $\mathcal{L}$
    \begin{equation*} 
    \mathcal{L}(c_{\boldsymbol{\Omega}}, c^*_{\boldsymbol{\Omega}}) = \sum_{\boldsymbol{\Omega} \in (\mathcal{S}^\Delta_n)^N\backslash \{ \boldsymbol{0} \}} \delta t^{2P} |c_{\boldsymbol{\Omega}}|^2||\boldsymbol{\Omega}||^{2P}_2 - \lambda |c_{\boldsymbol{\Omega}}|.
    \end{equation*}
    Since $\mathcal{L}$ is not differentiable whenever any $c_{\boldsymbol{\Omega}}=0$, we will also have to examine these irregular points along with the stationary points. Thus, the candidate minima are given by
    \begin{multline*}
    \forall \boldsymbol{\Omega} \in (\mathcal{S}^\Delta_n)^N\backslash \{ \boldsymbol{0} \}\   \frac{\partial \mathcal{L}}{\partial c^*_{\boldsymbol{\Omega}} } = c_{\boldsymbol{\Omega}} \left(\delta t^{2P}||\boldsymbol{\Omega}||^{2P}_2 -  \frac{\lambda}{2 |c_{\boldsymbol{\Omega}}|} \right)  = 0 \\
    \implies \forall \boldsymbol{\Omega} \in (\mathcal{S}^\Delta_n)^N\backslash \{ \boldsymbol{0} \}\ c_{\boldsymbol{\Omega}}=0 \lor |c_{\boldsymbol{\Omega}}| = \frac{\lambda}{2 \delta t^{2P}||\boldsymbol{\Omega}||^{2P}_2 }.
    \end{multline*}
    While almost any subset of frequencies can have zero coefficients, not all of them can be be zero, otherwise Eq.~\eqref{eq:abssum_eq} would not be satisfied. We can then define $\mathcal{F} \neq \emptyset$ as the set of frequencies with non-zero coefficients $c_{\boldsymbol{\Omega}} \neq 0$, and fix $\lambda$ using the equality Eq.~\eqref{eq:abssum_eq} as
    \begin{equation} \label{eq:lambda_eq}
    \lambda = \frac{\Delta J \delta t^{2P}}{\sum_{\boldsymbol{\Omega} \in \mathcal{F}\backslash \{ \boldsymbol{0} \}} \frac{1}{||\boldsymbol{\Omega}||^{2P}_2}}.
    \end{equation}
    We can now substitute this result in the expression for the variance:
    \begin{multline*}
        \sum_{\mu_1, \nu_1, \dots, \mu_P, \nu_P} \Var_{\boldsymbol{U} \in \mathbb{R}^{M \times N}} \left[ \left( \prod_{p=1}^P \partial_{\mu_p \nu_p} \right) J_n \right] \\
        =\delta t^{2P} \sum_{\boldsymbol{\Omega} \in \mathcal{F}\backslash \{ \boldsymbol{0} \}} |c_{\boldsymbol{\Omega}}|^2 ||\boldsymbol{\Omega}||^{2P}_2 \\
        = \frac{\lambda^2}{4\delta t^{2P}} \sum_{\boldsymbol{\Omega} \in \mathcal{F}\backslash \{ \boldsymbol{0} \}} \frac{1}{||\boldsymbol{\Omega}||^{2P}_2} = \frac{\Delta J^2 \delta t^{2P}}{4\sum_{\boldsymbol{\Omega} \in \mathcal{F}\backslash \{ \boldsymbol{0} \}} \frac{1}{||\boldsymbol{\Omega}||^{2P}_2}},
    \end{multline*}
    where in the last step we made use of Eq.~\eqref{eq:lambda_eq}. 
    As the denominator is a sum of positive quantities, it is easy to see that the minimum over the equality constraint Eq.~\eqref{eq:abssum_eq} is obtained when all frequencies are non-zero $\mathcal{F}=(\mathcal{S}^\Delta_n)^N$. On the other hand the maximum is obtained when all frequencies are zero except the largest one, for which $||\boldsymbol{\Omega}||^{2}_2 = \omega_{\mathrm{max}}^{2MN}$, and $V_{\mathrm{max}}=\Delta J^2 \delta t^{2P}\omega_{\mathrm{max}}^{2MNP}/4$.
    
    In order to see that the minimum over the equality constraint Eq.~\eqref{eq:abssum_eq} is also the minimum over the inequality constraint Eq.~\eqref{eq:abssum_lbound}, we can reason as follows. Let us consider the following sets
    \begin{multline*}
    \mathcal{E}_{-} = \{ c_{\boldsymbol{\Omega}} \in \mathbb{C}^{n_\Delta^N-1}\ | \  \smashoperator[lr]{\sum_{\boldsymbol{\Omega} \in (\mathcal{S}^\Delta_n)^N\backslash \{ \boldsymbol{0} \}}} |c_{\boldsymbol{\Omega}}| \geq \frac{\Delta J}{2} \}, \\
    \mathcal{E}_{+} = \{ c_{\boldsymbol{\Omega}} \in \mathbb{C}^{n_\Delta^N-1}\ | \ \delta t^{2P}\smashoperator[lr]{ \sum_{\boldsymbol{\Omega} \in (\mathcal{S}^\Delta_n)^N\backslash \{ \boldsymbol{0} \}}} |c_{\boldsymbol{\Omega}}|^2 ||\boldsymbol{\Omega}||^{2P}_2 \leq A \}.
    \end{multline*}
    where $A>0$ and we used the notation $n_{\Delta} = \# \mathcal{S}^\Delta_n$. We will now show that $V$ assumes a global minimum inside $\mathcal{E}_{-}$. Since the two sets are solutions to continuous inequalities they are both closed, and $\mathcal{E}_{+}$ is bounded as $\mathcal{E}_{+} \subseteq \mathcal{E}_{b}$, with
    \[\mathcal{E}_{b} = \{ c_{\boldsymbol{\Omega}} \in \mathbb{C}^{n_\Delta^N-1}\ | \smashoperator[lr]{ \sum_{\boldsymbol{\Omega} \in (\mathcal{S}^\Delta_n)^N\backslash \{ \boldsymbol{0} \}}} |c_{\boldsymbol{\Omega}}|^2  \leq \frac{A}{(\delta t \omega_{\mathrm{min}})^{2P}} \}. \]
    It follows that $\mathcal{E} := \mathcal{E}_{+} \cap \mathcal{E}_{-}$ is a closed and bounded set, and it is straightforward to see that by choosing $A > V_{\mathrm{max}}$ we have $\mathcal{E} \neq \emptyset$. But then the continuous function $V$ admits admits a global minimum (and maximum) over $\mathcal{E}$ because of the extreme value theorem (see for instance Corollary 2.16.2 of \cite{DeMarco}). Now, we have
    \[\mathcal{E}_{-} = (\mathcal{E}_{-} \cap \mathcal{E}_{+}) \cup (\mathcal{E}_{-} \cap (\mathbb{C}^{n_\Delta^N-1} \setminus \mathcal{E}_{+})) =: \mathcal{E} \cup \mathcal{E}'. \]
    Since by definition $V>A$ over $\mathcal{E}'$, $V$ cannot assume its minimum value inside $\mathcal{E}'$ (unless $\mathcal{E}= \emptyset$, which we already excluded), since it assumes smaller values $V \leq A$ inside $\mathcal{E}$. But then the minimum of $V$ over $\mathcal{E}$ is also the minimum over $\mathcal{E}_-$.
    
    In order to conclude, we now show by exclusion that the minimum must lie on the boundary defined by Eq.~\eqref{eq:abssum_eq}.
    The set $\mathcal{E}$ can be further decomposed into $\mathcal{E} = \mathcal{E}_{int} \cup \partial \mathcal{E}_{-} \cup \partial \mathcal{E}_{+}$, with
    \begin{multline*}
    \partial \mathcal{E}_{-} = \{ c_{\boldsymbol{\Omega}} \in \mathbb{C}^{n_\Delta^N-1}\ | \  \smashoperator[lr]{\sum_{\boldsymbol{\Omega} \in (\mathcal{S}^\Delta_n)^N\backslash \{ \boldsymbol{0} \}}} |c_{\boldsymbol{\Omega}}| = \frac{\Delta J}{2} \} \\
    \partial \mathcal{E}_{+} = \{ c_{\boldsymbol{\Omega}} \in \mathbb{C}^{n_\Delta^N-1}\ | \ \delta t^{2P}\smashoperator[lr]{ \sum_{\boldsymbol{\Omega} \in (\mathcal{S}^\Delta_n)^N\backslash \{ \boldsymbol{0} \}}} |c_{\boldsymbol{\Omega}}|^2 ||\boldsymbol{\Omega}||^{2P}_2 = A \} \\
    \mathcal{E}_{int} = \{ c_{\boldsymbol{\Omega}} \in \mathbb{C}^{n_\Delta^N-1}\ | \  \smashoperator[lr]{\sum_{\boldsymbol{\Omega} \in (\mathcal{S}^\Delta_n)^N\backslash \{ \boldsymbol{0} \}}} |c_{\boldsymbol{\Omega}}| > \frac{\Delta J}{2} \land \\
    \ \delta t^{2P}\smashoperator[lr]{ \sum_{\boldsymbol{\Omega} \in (\mathcal{S}^\Delta_n)^N\backslash \{ \boldsymbol{0} \}}} |c_{\boldsymbol{\Omega}}|^2 ||\boldsymbol{\Omega}||^{2P}_2 < A \}
    \end{multline*}
    If $A>V_{\mathrm{max}}$, $V$ cannot assume its minimum value inside $\partial \mathcal{E}_{+}$, as then there would be points on $\partial \mathcal{E}_{-}$ where $V$ would be smaller. That cannot even happen at an internal point belonging to the open set $\mathcal{E}_{int}$, as then it would be a stationary point $\grad V=\boldsymbol{0}$, while the only stationary point of $V$ is $c_{\boldsymbol{\Omega}}=0$, which does not belong to $\mathcal{E}$. The only remaining option is that $V$ assumes its minimum value on $\partial \mathcal{E}_{-}$. This implies that the stationary point we have found with the method of Lagrange multipliers is the minimum of $V$ over $\mathcal{E}_{-}$, that is the minimum of the variance subject to Eq.~\eqref{eq:abssum_lbound}.
    The results cited in the main text follow by fixing $M=1$, $\Delta O=1$.
\end{proof}
\section{Proofs valid only for one control M=1}\label{appendix_d}
\begin{relemma}{lemma:coefficients_l1}[Boundedness of the coefficients]
    \[ \exists r \in \mathbb{R},\ s.t.\ \   \forall n,\  \sum_{\boldsymbol{\omega} \in (\mathcal{S}_n^\Delta)^N} |c_{\boldsymbol{\omega}}(n, \delta t)| \leq r \]
\end{relemma}
\begin{proof}
We start by noticing that the following inequality holds:
\[
\forall i,j\ \  0 \leq |e^{\frac{\hat{X}}{n}}-\hat{I}|_{ij} \leq ||e^{\frac{\hat{X}}{n}}-\hat{I}||_{\infty}  \leq  \frac{||\hat{X}||_{\infty}}{n} e^{\frac{||\hat{X}||_{\infty}}{n}}, 
\]
where we employed Eq.~\eqref{eq:exp_ineq1}. So, since $||\hat{V}||_{\infty}=1$, by defining 
\[ R(n) :=  \frac{||\delta t \hat{H}_d||_{\infty}}{n} e^{\frac{||\delta t \hat{H}_d||_{\infty}}{n}}, \] 
we can write an inequality for $W_{ij}$:
\[
| W_{ij}| = |\delta_{ij} - \delta_{ij} + W_{ij}| \leq \delta_{ij} + |\hat{W} - \hat{I}|_{ij} \leq \delta_{ij} + R(n).
\]
Then, we plug this inequality into the sum of the moduli of the Fourier coefficients:
\begin{multline*}
\sum_{ik\boldsymbol{j}} |\tilde{A}^{\boldsymbol{j}}_{ik}| \leq \sum_{ik\boldsymbol{j}} | W_{i j_{n}}| \cdots |W_{j_2 j_1}||\delta_{j_1 k}| \\ 
\leq \sum_{ij_1 \dots j_{n}} (\delta_{ij_n} + R(n))\cdots(\delta_{j_2 j_{1}}+R(n)) = \\
= D \sum_{m=0}^{n} \binom{n}{m} R(n)^m D^{m} \\
= D \sum_{m=0}^{n} \binom{n}{m}  \left(\frac{D \delta t ||\hat{H}_d|| _{\infty}}{n}\right)^m e^{\frac{m \delta t ||\hat{H}_d||_{\infty}}{n}} = \\
D\left( 1 + \frac{D \delta t ||\hat{H}_d||_{\infty}}{n}e^{\frac{\delta t ||\hat{H}_d||_{\infty}}{n}} \right)^n \xrightarrow{n\to \infty} D e^{D \delta t ||\hat{H}_d||_{\infty}}.
\end{multline*}
The passage from the sum over $\boldsymbol{j}$ to the sum over $m$ is performed by noticing that the expansion of the product in the second line gives rise to terms which consists in $m+1$ independent chains of Kronecker deltas (which can also be trivial), each one of which gives rise to a $D$ factor when summed up over its indices. 
    
Because of this inequalities, the positive sequence $\sum_{ik\boldsymbol{j}} |\tilde{A}^{\boldsymbol{j}}_{ik}|$ is upper-bounded by a (positive) converging sequence. But a positive converging sequence has a finite maximum $A$.
This result implies the boundedness of the $\tilde{B}^{\omega}_{ik}$ coefficients
\begin{multline*} 
\sum_{ik} \sum_{\omega \in \mathcal{S}_n} |\tilde{B}^{\omega}_{ik}| = \sum_{ik} \sum_{\omega \in \mathcal{S}_n} |\sum_{\boldsymbol{j}\in \mathcal{J}_{\omega}} \tilde{A}^{\boldsymbol{j}}_{ik}| \leq \\
\leq \sum_{ik} \sum_{\omega \in \mathcal{S}_n} \sum_{\boldsymbol{j}\in \mathcal{J}_{\omega}} |\tilde{A}^{\boldsymbol{j}}_{ik}| = \sum_{ik\boldsymbol{j}}  |\tilde{A}^{\boldsymbol{j}}_{ik}| \leq A,
\end{multline*} 
and of the coefficients of the fidelity itself:
\begin{widetext}
\begin{multline*} 
\sum_{\boldsymbol{\omega} \in (\mathcal{S}_n^\Delta)^N} |c_{\boldsymbol{\omega}}| = \sum_{\boldsymbol{\omega} \in (\mathcal{S}_n^\Delta)^N} \left| \sum_{\boldsymbol{\omega'} \boldsymbol{\omega''} \in \mathcal{S}_n^N} \delta_{\omega, \omega'-\omega''} \bra{\tilde{\psi}} \tilde{B}^{\boldsymbol{\omega''} \dagger} \ket{\tilde{\chi}}\bra{\tilde{\chi}}   \tilde{B}^{\boldsymbol{\omega'}} \ket{\tilde{\psi}} \right|   \leq \sum_{\boldsymbol{\omega} \in (\mathcal{S}_n^\Delta)^N} \sum_{\boldsymbol{\omega'} \boldsymbol{\omega''} \in \mathcal{S}_n^N} \delta_{\omega, \omega'-\omega''} |\bra{\tilde{\psi}} \tilde{B}^{\boldsymbol{\omega''} \dagger} \ket{\tilde{\chi}}\bra{\tilde{\chi}}   \tilde{B}^{\boldsymbol{\omega'}} \ket{\tilde{\psi}}| \\
=  \left( \sum_{\boldsymbol{\omega'} \in \mathcal{S}_n^N} | \bra{\tilde{\chi}} \tilde{B}^{\boldsymbol{\omega'}} \ket{\tilde{\psi}}| \right)^2 \leq \left( \underbrace{\max_{ij} |\tilde{\chi_i}||\tilde{\psi_j}|}_{\leq 1} \sum_{ij} \sum_{\boldsymbol{\omega'} \in \mathcal{S}_n^N} |\tilde{B}^{\boldsymbol{\omega'}}|_{ij} \right)^2 \leq \left( \sum_{ij} \sum_{\omega^{(1)} \dots \omega^{(N)}}  |\tilde{B}^{{\omega}^{(N)}} \cdots \tilde{B}^{{\omega}^{(1)}}|_{ij} \right)^2   \leq \\
\leq \left( \sum_{ij} \sum_{\omega \in \mathcal{S}_n} |\tilde{B}^{\omega}_{ij}| \right)^{2N} \leq A^{2N}.
\end{multline*}
\end{widetext}
which concludes the proof with $r=A^{2N}$.
\end{proof}

\begin{relemma}{lemma:symmetries}[Symmetries and selection rules]
See main text.
\end{relemma}
\begin{proof}
    Since $[\hat{H}_d,\hat{\Gamma}]=[\hat{H}_c,\hat{\Gamma}]=0$, we can choose $\hat{V}$ in Eq.~\eqref{eq:timestep_unitary} so that both $\hat{\Lambda}=\hat{V}\hat{H}_c\hat{V}^{\dagger}$ and $\hat{\tilde{\Gamma}}=\hat{V} \hat{\Gamma} \hat{V}^{\dagger} $ have diagonal matrix representations:
    \[ \bra{i}\hat{\tilde{\Gamma}}\ket{k}  = \delta_{ik} \gamma_{g(i)}\]
    where $g(i)$ maps the index $i\in\{1,\dots,D\}$ to the corresponding symmetry sector index $g\in\{1,\dots,G\}$. But since also $[\hat{\tilde{H}}_d,\hat{\tilde{\Gamma}}]=0$, then 
    \[ \bra{i}\hat{\tilde{H}}_d\ket{k} = \delta_{g(i) g(k)} [\hat{\tilde{H}}^{(g(i))}_d]_{ik}\]
    must be block diagonal, with blocks $\hat{\tilde{H}}^{(g)}_d$ corresponding to the degenerate eigenspaces of $\hat{\tilde{\Gamma}}$. We name $\mathcal{I}_g$ the set of index values belonging to the $g$-th block:
    \[ \mathcal{I}_g = \{ i\in\{1,\dots,D\}\ |\ g(i)=g\} \]
    This in turn implies that $\hat{W} = \hat{V}e^{-\frac{i\delta t}{n}\hat{H}_d}\hat{V}^{\dagger} = e^{-\frac{i\delta t}{n}\hat{\tilde{H}}_d}$ has block diagonal matrix elements in the same fashion. But then 
    \[
    \tilde{A}^{\boldsymbol{j}}_{ik} = W_{i j_{n}} \cdots W_{j_2 j_1} \delta_{j_1 k} \neq 0
    \]
    only when all the indices $i,k,j_1,\dots,j_n \in \mathcal{I}_g$ belong to the same block. 
    Correspondingly, if we consider products like
    \[
    \hat{\tilde{A}}^{\boldsymbol{J}} := \hat{\tilde{A}}^{\boldsymbol{j}^{(N)}} \cdots \hat{\tilde{A}}^{\boldsymbol{j}^{(1)}},
    \]
    where $\boldsymbol{J}$ is the multiindex 
    \[ \boldsymbol{J} = \begin{pmatrix}
    \boldsymbol{j}^{(1)} & \dots & \boldsymbol{j}^{(N)} \\
    \end{pmatrix} ,
    \]
    we have $\tilde{A}^{\boldsymbol{J}}_{ik} \neq 0$
    only when $i,k \in \mathcal{I}_g$ and $\boldsymbol{j}^{(1)},\dots,\boldsymbol{j}^{(N)} \in \mathcal{I}^n_g$ belong to the same block. 
    This determines selection rules for the frequencies in the spectrum $\mathcal{S}_n$ of the time step operator $\hat{U}_n(u)$
    \[ 
    [\hat{\tilde{U}}_n(u)]_{ik} = \sum_{\boldsymbol{j}\in [D]^n} e^{-i\delta t u \omega_{\boldsymbol{j}}} \tilde{A}^{\boldsymbol{j}}_{ik} = \sum_{\omega \in \mathcal{S}_n}  e^{-i\delta t u \omega} \tilde{B}^{\omega}_{ik},
    \]
    since then the only combinations of eigenvalues that are allowed are the ones of eigenvalues within the same symmetry sector, so that the spectrum becomes
    \begin{multline*}
    \ce{^{(\Gamma)}\mathcal{S}_n} = \bigcup_{g=1}^G \mathcal{S}^{(g)}_n \\
    \mathcal{S}^{(g)}_n = \{ \omega_{\boldsymbol{j}} = \frac{1}{n}(\lambda_{j_1}+\dots+\lambda_{j_n})\  |\  \boldsymbol{j}\in \mathcal{I}_g^n \}.
    \end{multline*}
    Concerning $\tilde{B}^{\omega}_{ik}$, in general we can only say that they are block diagonal in the indices $i,k$ in the same way as $W_{ik}$ and $\tilde{A}^{\boldsymbol{j}}_{ik}$, but the frequency $\omega$ does not select uniquely the block. 
    
    For similar reasons, when it comes to the full time evolution operator $\tilde{U}_n(\boldsymbol{u})$ 
    \begin{multline*}
    \hat{\tilde{U}}_n(\boldsymbol{u})
    = \sum_{\boldsymbol{j}^{(1)} \cdots \boldsymbol{j}^{(N)}} e^{-i\delta t \sum_{ \nu} u_{\nu}\omega_{\boldsymbol{j}^{(\nu)}}} \hat{\tilde{A}}^{\boldsymbol{j}^{(N)}} \cdots \hat{\tilde{A}}^{\boldsymbol{j}^{(1)}}\\
    = \sum_{\boldsymbol{\omega}\in \ce{^{(\Gamma,N)}\mathcal{S}_n}}  e^{-i\delta t \boldsymbol{\omega} \cdot \boldsymbol{u}} \hat{B}^{\boldsymbol{\omega}}
    \end{multline*}
    the only combinations of frequencies for which the coefficients are non-zero are in the set
    \[
        \ce{^{(\Gamma,N)}\mathcal{S}_n} = \bigcup_{g=1}^G (\mathcal{S}^{(g)}_n)^N
    \]
    We can now use this information to write a decomposition for the Lie-Fourier representation of the matrix element 
    \begin{multline*}
    \bra{\chi} \hat{U}_n(\boldsymbol{u}) \ket{\psi} = \bra{\tilde{\chi}} \hat{\tilde{U}}_n(\boldsymbol{u}) \ket{\tilde{\psi}} \\
    =\sum_{i,k}\sum_{\boldsymbol{J}} e^{-i\delta t \sum_{ \nu} u_{\nu}\omega_{\boldsymbol{j}^{(\nu)}}} \tilde{\chi}_{i} \tilde{A}^{\boldsymbol{J} }_{ik} \tilde{\psi}_{k}\\
    =\sum_{g=1}^G\sum_{i,k \in \mathcal{I}_g}\sum_{\boldsymbol{J} \in \mathcal{I}^{n \times N}_g} e^{-i\delta t \sum_{ \nu} u_{\nu}\omega_{\boldsymbol{j}^{(\nu)}}} \tilde{\chi}_{i} \tilde{A}^{\boldsymbol{J} }_{ik} \tilde{\psi}_{k}\\
    =\sum_{g=1}^G\sum_{i,k \in \mathcal{I}_g}\sum_{\boldsymbol{\omega} \in (\mathcal{S}^{(g)}_n)^N} e^{-i\delta t \boldsymbol{\omega} \cdot \boldsymbol{u}} \tilde{\chi}_{i} \tilde{B}^{\boldsymbol{\omega}}_{ik} \tilde{\psi}_{k} \\
    = \sum_{g=1}^G \sum_{\boldsymbol{\omega} \in (\mathcal{S}^{(g)}_n)^N} e^{-i\delta t \boldsymbol{\omega} \cdot \boldsymbol{u}} b^{(g)}_{\boldsymbol{\omega}} \\
    \end{multline*}
    where $b^{(g)}_{\boldsymbol{\omega}} \neq 0$ only if $g \in \mathcal{G}$, defined as  
    \begin{equation}
    \mathcal{G} = \{g =1,\dots,G\ |\ \exists i,k \in \mathcal{I}_g, \tilde{\chi}_{i}\tilde{\psi}_{k} \neq 0\}, 
    \end{equation}
    that is, $b^{(g)}_{\boldsymbol{\omega}}$ is non-zero only if both initial and target state have a non-zero overlap with the eigenstates generating the $g$-th symmetry sector. In fact we have 
    \begin{multline*}
    \tilde{\psi}_k = \bra{k}\hat{V} \ket{\psi} = \bra{k} \sum_{i=1}^D \ket{i}\braket{\gamma_{g(i)},\lambda^{(g(i))}_i|\psi} = \\
    = \braket{\gamma_{g(k)},\lambda^{(g(k))}_k|\psi} \neq 0
    \end{multline*}
    and similarly for $\tilde{\chi}$. But this is equivalent to asking $\hat{P}_g \ket{\psi} \neq 0$ and $\hat{P}_g \ket{\chi} \neq 0$ as in the claim.
    Therefore, we can conclude
    \[\bra{\chi} \hat{U}_n(\boldsymbol{u}) \ket{\psi} = \sum_{g \in \mathcal{G}} \smashoperator[r]{\sum_{\boldsymbol{\omega} \in (\mathcal{S}^{(g)}_n)^N}} e^{-i\delta t \boldsymbol{\omega} \cdot \boldsymbol{u}} b^{(g)}_{\boldsymbol{\omega}} = \smashoperator[lr]{\sum_{\boldsymbol{\omega}\in \ce{^{(\Gamma,N)}\mathcal{S}_n}}}  e^{-i\delta t \boldsymbol{\omega} \cdot \boldsymbol{u}} b_{\boldsymbol{\omega}}. \]
    The spectrum of the fidelity is constructed as usual by taking all possible frequency differences, also across different symmetry sectors, by means of Eq.~\eqref{eq:fid_diff}, giving rise to the frequency set $\ce{^{(\Gamma,N)}\mathcal{S}^\Delta_n}$.
\end{proof}
\section{Details about Ising model landscapes}\label{appendix_e}
\subsection{Computing Lie-Fourier coefficients with the Discrete Fourier Transform (DFT)}
When the single-timestep spectrum $\mathcal{S}^\Delta_n$ of the Lie-Fourier representation $J_n$ of the landscape only contains evenly spaced frequencies, that is
\[ \mathcal{S}^\Delta_n = \{ \omega = \omega_{\mathrm{max}}\frac{k}{k_{\mathrm{max}}}\ |\ k = -k_{\mathrm{max}}, \dots, k_{\mathrm{max}} \}, \]
its coefficients can be computed numerically by means of the DFT. In fact, we can write
\begin{equation}
    J_n(\boldsymbol{u}) = \sum_{\boldsymbol{k} = -\boldsymbol{k}_{\mathrm{max}}}^{\boldsymbol{k}_{\mathrm{max}}} c_{\boldsymbol{k}} e^{-i \frac{\delta t \omega_{\mathrm{max}}}{ k_{\mathrm{max}} } \boldsymbol{k} \cdot \boldsymbol{u}},
    \label{eq:dft_landscape}
\end{equation}
where we used the shorthand $c_{\boldsymbol{k}}= c_{\boldsymbol{\omega}(\boldsymbol{k})}$ and the multiindex notation for $\boldsymbol{k} \in \mathbb{Z}^N$.
Let us now multiply both sides of Eq.~\eqref{eq:dft_landscape} by $\exp(-i\varphi(\boldsymbol{u}))$, with
\begin{equation*}
    \varphi(\boldsymbol{u}) = \delta t \omega_{\mathrm{max}}(\boldsymbol{u} \cdot \boldsymbol{1})
\end{equation*}
and evaluate the expression on a square hyperlattice given by
\begin{equation}
\boldsymbol{u_{j}} = \frac{2\pi\boldsymbol{j}}{\delta t \omega_{\mathrm{max}}} \frac{k_{\mathrm{max}}}{n_\Delta},
\label{eq:dft_sample}
\end{equation}
where $\boldsymbol{j} \in \mathbb{Z}^N$ is an integer multiindex whose elements each range from $0$ to $n_\Delta-1$. 
Then, we obtain the left hand side expressed as the $N$-dimensional DFT of the coefficients
\begin{equation*}
    e^{-i\varphi(\boldsymbol{u_{j}})}J_n(\boldsymbol{u_{j}}) = \sum_{\boldsymbol{k}=\boldsymbol{0}}^{\boldsymbol{n}_\Delta-\boldsymbol{1}} c_{\boldsymbol{k}-\boldsymbol{k}_{\mathrm{max}}} e^{-i \frac{2\pi}{n_\Delta} \boldsymbol{k} \cdot \boldsymbol{j}}
\end{equation*}
which means that the coefficients themselves can be obtained using the inverse DFT:
\begin{equation*}
    c_{\boldsymbol{k}-\boldsymbol{k}_{\mathrm{max}}} = \frac{1}{n_\Delta^N} \sum_{\boldsymbol{j}=\boldsymbol{0}}^{\boldsymbol{n}_\Delta-\boldsymbol{1}} e^{-i\varphi(\boldsymbol{u_{j}})} J_n(\boldsymbol{u_{j}}) e^{i \frac{2\pi}{n_\Delta} \boldsymbol{k} \cdot \boldsymbol{j}}.
\end{equation*}
We note that all the Fourier components, and hence the functions $J_n$, are invariant under the transformation $\boldsymbol{u} \mapsto \boldsymbol{u} + 2 \pi k_{\mathrm{max}} (\delta t \omega_{\mathrm{max}})^{-1} \boldsymbol{m}$, with $\boldsymbol{m} \in \mathbb{Z}^N$. This means that they are all periodic with period $T_n = 2 \pi k_{\mathrm{max}} (\delta t \omega_{\mathrm{max}})^{-1}$ (or an integer multiple thereof) along each dimension.
Because of the orthogonality of these Fourier components in the $L_2[0,T_n]^N$ sense, we can be sure that the coefficients given by the inverse DFT and the ones from the Lie-Fourier expansion are the same.

The DFT and its inverse can be computed efficiently for a sample of $N_s$ points using the Fast Fourier Transform algorithm in $O(N_s \log N_s)$ flops, but the sample defined by Eq.~\eqref{eq:dft_sample} contains $N_s=n_\Delta^N$ points, which prevents us from pushing the computation to large numbers of time steps $N$.

\subsection{Time symmetry in the real case}
Whenever both the Hamiltonian $\hat{H}$ and the states $\ket{\psi}, \ket{\chi}$ defining the state transfer problem only have real matrix elements and overlaps in the same basis, the landscape 
\[ J(\delta t; \boldsymbol{u}) = |\bra{\chi} \hat{U}(\delta t; u_N) \cdots \hat{U}(\delta t; u_1) \ket{\psi}|^2,\]
where $\hat{U}(\delta t; u) = \exp( -i\delta t \hat{H}(u) )$ is symmetric under change of sign of time
\[ J(\delta t) = J(-\delta t).\]
In order to see this, let us first express the overlap in index form:
\begin{multline*}
     \bra{\chi} \hat{U}(\delta t; u_N) \cdots \hat{U}(\delta t; u_1) \ket{\psi} \\ 
     = \sum_{\boldsymbol{j}} \chi_{j_1} U_{j_1 j_2}(\delta t; u_1) \cdots U_{j_{N} j_{N+1}}(\delta t; u_1) \psi_{j_{N+1}}
\end{multline*}
While working in this basis, the Hamiltonians at the different timesteps are going to be real and symmetric matrices, hence they can be diagonalized by means of real orthogonal matrices $V \in \mathbb{R}^{D \times D}, V^TV = \mathds{1}$:
\[ U_{i k}(\delta t; u) = V_{ij}(u) e^{-i \delta t \Lambda_j(u)} V_{jk}^T(u) .\]
By expressing this way every unitary appearing in the overlap, it is easy to see that it can be expressed as a real combination of Fourier components (which are not the same as the ones in the Lie-Fourier representation in the main text):
\begin{multline*}
     \bra{\chi} \hat{U}(\delta t; u_N) \cdots \hat{U}(\delta t; u_1) \ket{\psi} = \\
     = \sum_{\boldsymbol{j}, \boldsymbol{k}} \chi_{j_1} V_{j_1k_1}(u) e^{-i \delta t \Lambda_{k_1}(u)} V_{k_1j_2}^T(u) \cdots \\ \cdots V_{j_N k_N}(u) e^{-i \delta t \Lambda_{k_N}(u)} V_{k_{N}j_{N+1}}^T(u) \psi_{j_{N+1}}
     = \sum_{\boldsymbol{j}, \boldsymbol{k}} r_{\boldsymbol{j}\boldsymbol{k}} e^{-i \delta t \omega_{\boldsymbol{k}}}.
\end{multline*}
But then the square absolute value is time symmetric as expected:
\begin{multline*}
J(\delta t) = \left| \sum_{\boldsymbol{j}, \boldsymbol{k}} r_{\boldsymbol{j}\boldsymbol{k}} e^{-i \delta t \omega_{\boldsymbol{k}}} \right|^2 = \left| \left( \sum_{\boldsymbol{j}, \boldsymbol{k}} r_{\boldsymbol{j}\boldsymbol{k}} e^{-i \delta t \omega_{\boldsymbol{k}}} \right)^* \right|^2 \\
\left| \sum_{\boldsymbol{j}, \boldsymbol{k}} r_{\boldsymbol{j}\boldsymbol{k}} e^{+i \delta t \omega_{\boldsymbol{k}}}  \right|^2 = J(-\delta t).
\end{multline*}

\bibliography{bibliography}

\end{document}